\documentclass[11pt]{article}

\usepackage[utf8]{inputenc}
\usepackage[T1]{fontenc}
\usepackage{amsmath,amssymb,amsthm}
\usepackage{mathtools}
\usepackage{bm}
\usepackage{graphicx}
\usepackage{hyperref}
\usepackage{cleveref}
\usepackage{natbib}
\usepackage[margin=1in]{geometry}
\usepackage{enumitem}
\usepackage{booktabs}
\usepackage{microtype}
\usepackage{xcolor}
\usepackage{tocloft}
\hypersetup{colorlinks=true, linkcolor=blue, citecolor=blue, urlcolor=blue}

\newtheorem{theorem}{Theorem}[section]
\newtheorem{proposition}[theorem]{Proposition}
\newtheorem{lemma}[theorem]{Lemma}
\newtheorem{corollary}[theorem]{Corollary}
\newtheorem{definition}[theorem]{Definition}
\newtheorem{remark}[theorem]{Remark}

\newcommand{\R}{\mathbb{R}}
\newcommand{\E}{\mathbb{E}}
\newcommand{\KL}{\mathrm{KL}}
\newcommand{\TV}{\mathrm{TV}}

\newcommand{\rank}{\mathrm{rank}}

\newcommand{\Var}{\mathrm{Var}}
\newcommand{\Cov}{\mathrm{Cov}}
\newcommand{\sech}{\mathrm{sech}}
\DeclareMathOperator*{\argmin}{arg\,min}
\DeclareMathOperator*{\argmax}{arg\,max}

\title{\textbf{Score Shocks: The Burgers Equation Structure\\of Diffusion Generative Models}}
\author{Krisanu Sarkar\\Indian Institute of Technology Bombay\\Mumbai, India}
\date{}

\begin{document}
\maketitle
\setcounter{tocdepth}{2}
\tableofcontents

\newpage

\begin{abstract}
We analyze the score field of a diffusion generative model through a Burgers-type evolution law. For VE diffusion, the heat-evolved data density implies that the score obeys viscous Burgers in one dimension and the corresponding irrotational vector Burgers system in $\R^d$, giving a PDE view of \emph{speciation transitions} as the sharpening of inter-mode interfaces. For any binary decomposition of the noised density into two positive heat solutions, the score separates into a smooth background and a universal $\tanh$ interfacial term determined by the component log-ratio; near a regular binary mode boundary this yields a normal criterion for speciation. In symmetric binary Gaussian mixtures, the criterion recovers the critical diffusion time detected by the midpoint derivative of the score and agrees with the spectral criterion of Biroli, Bonnaire, de~Bortoli, and M\'ezard (2024). After subtracting the background drift, the inter-mode layer has a local Burgers $\tanh$ profile, which becomes global in the symmetric Gaussian case with width $\sigma_\tau^2/a$. We also quantify exponential amplification of score errors across this layer, show that Burgers dynamics preserves irrotationality, and use a change of variables to reduce the VP-SDE to the VE case, yielding a closed-form VP speciation time. Gaussian-mixture formulas are verified to machine precision, and the local theorem is checked numerically on a quartic double-well.
\end{abstract}

\section{Introduction}\label{sec:intro}

Diffusion generative models are now a standard paradigm in modern machine learning, with strong results in image synthesis~\citep{dhariwal2021diffusion, rombach2022high}, video generation~\citep{ho2022video}, audio synthesis, and scientific applications ranging from molecular design to weather prediction.
The framework, introduced by \citet{sohl2015deep} and developed into its modern form by \citet{song2019generative}, \citet{ho2020denoising}, and \citet{song2021score}, rests on two complementary processes.
The \emph{forward} process gradually corrupts data with noise according to a stochastic differential equation (SDE), transforming any data distribution into an approximately Gaussian prior.
The \emph{reverse} process inverts this corruption by learning the \emph{score function}---the gradient of the log-density of the noised data, $\nabla_{\bm{x}} \log p_t(\bm{x})$---and using it to drive a reverse-time SDE~\citep{anderson1982reverse} or a deterministic probability flow ODE~\citep{song2021score}.

Despite their empirical triumph, the mathematical structures governing the score function's behavior during the generative process remain only partially understood.
A growing body of work in statistical physics has revealed that the reverse-time dynamics of diffusion models exhibit \emph{phase transitions}: moments at which generative trajectories spontaneously commit to distinct data modes through a mechanism akin to symmetry breaking in equilibrium systems~\citep{raya2023spontaneous, biroli2023generative, biroli2024dynamical, ambrogioni2025thermodynamics}.
\citet{biroli2024dynamical} identified three dynamical regimes---a noise-dominated regime, a \emph{speciation} transition where coarse class structure emerges, and a \emph{collapse} transition where trajectories lock onto individual training points---and characterized these using mean-field methods from spin-glass theory.
Concurrently, \citet{sclocchi2024phase} showed that hierarchical data structure is revealed through successive phase transitions, while \citet{li2024critical} established non-asymptotic ``critical window'' bounds for feature emergence.
On the PDE side, \citet{lai2023fp} derived a Fokker--Planck equation governing the evolution of the score and used it as a training regularizer, and very recently \citet{vuong2025score} demonstrated empirically that trained score networks produce non-conservative vector fields, reinterpreting diffusion models through the lens of Wasserstein gradient flows.

\paragraph{Contribution.}
We study the score field of a diffusion model through its Burgers structure.
In one dimension, the score of any VE diffusion satisfies the viscous Burgers equation exactly; in $\R^d$, it satisfies the corresponding vector Burgers system.
This follows directly from the Cole--Hopf transform applied to the heat equation governing the forward process.
The results fall into four levels: a Burgers correspondence for general diffusions, a local binary-boundary theorem for arbitrary smooth densities, closed-form statements for symmetric binary Gaussian mixtures, and asymptotic or corrected criteria for more general asymmetric settings.
This leads to several concrete consequences:

\begin{enumerate}[label=(\roman*),leftmargin=2em]
\item \textbf{Speciation threshold and Burgers interpretation.}
At any regular binary mode boundary, the normal Hessian decomposes as $\partial_n s_n = \partial_n \bar s_n + \kappa^2/4$, separating a smooth background term from a universal positive interfacial contribution.
For symmetric binary mixtures, this local criterion reduces to the midpoint derivative condition $s_x(0,\tau^\ast)=0$ and agrees with the spectral criterion of \citet{biroli2024dynamical}.

\item \textbf{Interfacial profile at mode boundaries.}
For any binary heat decomposition, the background-subtracted normal score has a local $\tanh$ interfacial profile in boundary-normal coordinates; in the symmetric Gaussian case, after removing the ambient Gaussian drift, the profile is global and its width is explicit.

\item \textbf{Error amplification.}
Score-estimation errors are amplified near the interfacial layer by a factor $\exp(\Lambda)$ with $\Lambda \approx \mathrm{SNR}/2$, giving a PDE-theoretic explanation for the sensitivity of sample quality to low-noise score accuracy~\citep{song2020improved, karras2022elucidating}.

\item \textbf{Curl preservation.}
The vector Burgers dynamics preserves irrotationality, so the non-conservative components observed by \citet{vuong2025score} in trained networks are attributable to approximation error rather than to the underlying dynamics.

\item \textbf{VP-to-VE reduction.}
A coordinate transformation reduces the VP-SDE (Ornstein--Uhlenbeck) score equation to the pure VE Burgers case, yielding closed-form VP speciation times and interfacial widths within a single analytical framework.
\end{enumerate}

All formal statements are proved in the text.
The Gaussian-mixture predictions are verified to machine precision (${\sim}10^{-9}$), and the general local theorem is checked numerically on a non-Gaussian quartic double-well.
The numerical checks are modest in scale and are included mainly to verify the formulas stated above.

\paragraph{Organization.}
\Cref{sec:related,sec:prelim} collect related work and notation. The Burgers correspondence is derived in \Cref{sec:score-burgers}, and the interfacial theory is developed in \Cref{sec:shocks}. The later sections treat error amplification, higher-dimensional extensions, the VP reduction, correction terms, numerical checks, and concluding remarks.

\section{Related Work}\label{sec:related}

Our work sits at the intersection of three lines of research: the mathematical theory of diffusion generative models, the statistical physics of generative processes, and the classical PDE theory of the Burgers equation.
We survey each in turn, emphasizing the gaps that our contribution fills.

\subsection{Score-Based Diffusion Models}

The idea of generating samples by learning the score function and running Langevin dynamics was introduced by \citet{song2019generative}, building on the score matching framework of \citet{hyvarinen2005estimation} and the denoising score matching perspective of \citet{vincent2011connection}.
\citet{ho2020denoising} developed Denoising Diffusion Probabilistic Models (DDPMs), connecting the forward process to a discrete Markov chain and training via a reweighted variational bound.
The continuous-time unification came with \citet{song2021score}, who showed that both the Noise Conditional Score Network (NCSN) framework and DDPM are discretizations of forward and reverse SDEs, with the reverse dynamics depending on the score through the celebrated result of \citet{anderson1982reverse}.
This SDE perspective enabled the derivation of deterministic samplers (the probability flow ODE), exact likelihood computation, and principled noise schedule design~\citep{song2021ddim, kingma2021variational, karras2022elucidating}.

The convergence theory of diffusion models has advanced rapidly.
\citet{debortoli2022convergence} established convergence under the manifold hypothesis.
\citet{chen2023sampling} proved polynomial-time sampling guarantees under minimal assumptions, showing that the total variation distance between the generated and true distributions is controlled by the $L^2$ score estimation error integrated over time.
\citet{benton2024nearly} sharpened these bounds using stochastic localization, achieving nearly $d$-linear convergence.
\citet{lee2023convergence} and \citet{tang2024score} provided accessible surveys of the theoretical landscape.

Our work complements this literature by revealing the \emph{PDE structure} of the score itself.
While the convergence theory treats the score as a generic vector field and bounds the effect of estimation error, our Burgers equation framework shows \emph{where} in space-time the score is most fragile (at the shocks) and \emph{why} (the classical gradient blowup of inviscid Burgers), providing geometric insight that the $L^2$-based bounds do not capture.

\subsection{Phase Transitions and Symmetry Breaking in Diffusion}

A parallel line of investigation, rooted in statistical physics, has uncovered the dynamical phase structure of diffusion models.
\citet{raya2023spontaneous} first identified spontaneous symmetry breaking in the reverse generative process: at a critical noise level, the score field bifurcates and generative trajectories commit to distinct modes.
\citet{biroli2023generative} analyzed this phenomenon in very high dimensions using methods from random matrix theory and the statistical mechanics of disordered systems, showing that speciation occurs at a noise level determined by the spectrum of the data covariance.
The comprehensive framework of \citet{biroli2024dynamical}, published in \emph{Nature Communications}, delineated three dynamical regimes---noise-dominated, speciation, and collapse---and characterized the speciation crossover through a spectral analysis of the empirical covariance, with the collapse transition governed by an ``excess entropy'' quantity reminiscent of the glass transition.

These results were extended in several directions.
\citet{sclocchi2024phase} demonstrated that hierarchical data gives rise to a cascade of speciation transitions, each corresponding to the emergence of progressively finer structure.
\citet{li2024critical} provided non-asymptotic ``critical window'' bounds for the time interval during which features emerge, complementing the asymptotic analysis of \citet{biroli2024dynamical}.
\citet{ambrogioni2025thermodynamics} reformulated the entire framework in the language of equilibrium statistical thermodynamics, defining a free energy landscape whose minima correspond to data modes and whose phase transitions are mean-field in character.
Very recently, \citet{ambrogioni2025information} connected the score divergence to entropy production rates and showed that the variance of pathwise conditional entropy peaks at the speciation time, providing an information-theoretic diagnostic.
On the memorization side, \citet{bonnaire2025memorization} and \citet{achilli2025memorization} studied how approximate score learning prevents the collapse transition in practical models.

Our contribution provides a \emph{PDE-theoretic} counterpart to these statistical-physics results.
The paper is organized around a nested hierarchy of statements. First, the score of any VE diffusion obeys Burgers exactly. Second, near any regular binary mode boundary, the score admits an exact decomposition into a smooth background plus a universal $\tfrac{1}{2}\tanh(\phi/2)\nabla\phi$ layer. Third, in Gaussian mixture models these general structures become explicit formulas for the threshold, profile, amplification exponent, and boundary motion.
This hierarchy is what allows the symmetric Gaussian case to serve both as a solvable model and as a faithful specialization of the more general local theorem.

\subsection{The Score PDE and Non-Conservative Learned Scores}

The PDE governing the time evolution of the score was derived by \citet{lai2023fp}, who termed it the ``score Fokker--Planck equation'' and showed that enforcing it as a regularizer during training improves both log-likelihood and the conservativity (curl-freeness) of the learned score.
Their empirical observation that trained scores have non-negligible curl was strikingly confirmed by \citet{vuong2025score}, who showed that trained diffusion networks violate both integral and differential constraints required of gradient fields, and proposed reinterpreting diffusion models as learning velocity fields of Wasserstein gradient flows~\citep{jordan1998variational, ambrosio2005gradient} rather than true score functions.

Our work places these findings in a unified PDE framework.
We note in particular that the ``score Fokker--Planck equation'' of \citet{lai2023fp} can be written as the viscous Burgers equation after the identification $u = -2s$.
A later section proves (\Cref{thm:curl-preservation}) that the Burgers dynamics preserves irrotationality: the vorticity $\omega_{ij} = \partial_i s_j - \partial_j s_i$ satisfies a linear parabolic equation with zero initial data and therefore remains zero.
This provides a theoretical guarantee that the curl observed by \citet{vuong2025score} and \citet{lai2023fp} cannot come from the exact Burgers dynamics itself; within our framework, it must arise from approximation, discretization, or modeling error.
Furthermore, we connect the non-conservative components to the theory of entropy-violating weak solutions of the Burgers equation~\citep{lax1957hyperbolic}, suggesting a practical diagnostic for score network quality.

\subsection{The Burgers Equation}

The Burgers equation $u_t + u\,u_x = \nu\,u_{xx}$ was introduced by \citet{burgers1948mathematical} as a simplified model of turbulence and has since become one of the canonical nonlinear PDEs in mathematical physics.
The seminal discovery that it can be linearized via the Cole--Hopf transformation---independently by \citet{hopf1950partial} and \citet{cole1951quasi}---reduces it to the heat equation and enables exact solutions for arbitrary initial data.
In the inviscid limit $\nu \to 0$, smooth solutions break down in finite time through the formation of \emph{shocks}---discontinuities across which the Rankine--Hugoniot conditions~\citep{rankine1870thermodynamic, hugoniot1889propagation} determine the jump relations.
The selection of physically relevant (entropy-satisfying) weak solutions is governed by the Lax entropy condition~\citep{lax1957hyperbolic}.
The comprehensive treatment by \citet{whitham1974linear} and the modern PDE perspective of \citet{evans2010partial} provide the mathematical foundations we employ.

The Burgers equation arises naturally in the study of the Kardar--Parisi--Zhang (KPZ) equation for interface growth~\citep{kardar1986dynamic} and appears throughout fluid dynamics, cosmology, and traffic flow modeling.
Its appearance in the context of diffusion generative models, however, helps organize several phenomena that are otherwise studied separately.
The Cole--Hopf transform links the score $s = \partial_x \log p$ to the Burgers velocity $u = -2s$ whenever $p$ satisfies the heat equation---a mathematically elementary observation whose implications for generative modeling are developed in the sections that follow.

\subsection{Stochastic Localization and Optimal Transport}

A related analytical framework uses \emph{stochastic localization}~\citep{elalaoui2022sampling} to study the convergence of diffusion-based sampling algorithms.
\citet{montanari2023sampling} showed that stochastic localization provides an elegant generalization of diffusion models, and \citet{benton2024nearly} leveraged this connection for sharp convergence bounds.
The stochastic interpolant framework of \citet{albergo2023stochastic} and the flow matching perspective of \citet{lipman2023flow, liu2023flow} provide further connections between diffusion, optimal transport, and score-based generation.

Our Burgers equation perspective is complementary to these approaches.
Stochastic localization analyzes the convergence of the \emph{distribution} to the target; the Burgers framework analyzes the \emph{dynamics of the score field itself}, revealing its singularity structure.
The two perspectives meet at the speciation time, which appears as a critical localization time in the stochastic localization framework and as a shock-like threshold in the Burgers framework.

\bigskip
\noindent
Taken together, these ingredients connect the Burgers equation with the score field of diffusion generative models. The tools involved are standard---the Cole--Hopf transform, Rankine--Hugoniot conditions, and Gr\"onwall bounds---but their combination gives a direct link between the PDE and statistical-physics viewpoints used later in the paper.

\section{Preliminaries}\label{sec:prelim}

We fix notation and recall the SDE framework for diffusion generative models, following the unified treatment of \citet{song2021score}.
Throughout, we work in $\R^d$ (with $d=1$ made explicit where the one-dimensional theory is invoked) and use Einstein summation convention only when stated.

\subsection{Forward diffusion processes}\label{sec:prelim:forward}

A diffusion generative model is defined by a forward SDE that progressively corrupts data into noise:
\begin{equation}\label{eq:forward-sde}
  d\bm{X}_t = \bm{f}(\bm{X}_t, t)\,dt + g(t)\,d\bm{W}_t, \qquad \bm{X}_0 \sim p_0,
\end{equation}
where $\bm{f}\colon \R^d \times [0,T] \to \R^d$ is the drift, $g\colon [0,T] \to \R_{>0}$ is the scalar diffusion coefficient, $\bm{W}_t$ is a standard $d$-dimensional Wiener process, and $p_0$ is the data distribution~\citep{song2021score}.
The marginal density $p(\bm{x},t)$ of $\bm{X}_t$ satisfies the Fokker--Planck equation (FPE):
\begin{equation}\label{eq:fpe}
  \frac{\partial p}{\partial t}
  = -\nabla \cdot (\bm{f}\, p)
    + \frac{g(t)^2}{2}\,\Delta p.
\end{equation}
We consider two standard instantiations.

\paragraph{Variance-Exploding (VE) SDE.}
Setting $\bm{f} = \bm{0}$, the forward process is pure diffusion~\citep{song2019generative, song2020improved}:
\begin{equation}\label{eq:ve-sde}
  d\bm{X}_t = g(t)\,d\bm{W}_t, \qquad \bm{X}_0 \sim p_0.
\end{equation}
The FPE reduces to the heat equation with time-dependent diffusivity $\nu(t) = g(t)^2/2$:
\begin{equation}\label{eq:ve-fpe}
  \frac{\partial p}{\partial t} = \nu(t)\,\Delta p.
\end{equation}
The conditional distribution is $\bm{X}_t \mid \bm{X}_0 \sim \mathcal{N}(\bm{X}_0,\, \sigma^2_{\mathrm{VE}}(t)\,\bm{I})$, where $\sigma^2_{\mathrm{VE}}(t) = \int_0^t g(s)^2\,ds$.

\paragraph{Variance-Preserving (VP) SDE.}
Setting $\bm{f}(\bm{x},t) = -\tfrac{1}{2}\beta(t)\bm{x}$ with a positive schedule $\beta(t)$ yields the Ornstein--Uhlenbeck (OU) forward process~\citep{ho2020denoising, song2021score}:
\begin{equation}\label{eq:vp-sde}
  d\bm{X}_t = -\tfrac{1}{2}\beta(t)\,\bm{X}_t\,dt + \sqrt{\beta(t)}\,d\bm{W}_t.
\end{equation}
Define the signal attenuation $\alpha(t) = \exp\!\bigl(-\tfrac{1}{2}\int_0^t \beta(s)\,ds\bigr)$.
Then $\bm{X}_t \mid \bm{X}_0 \sim \mathcal{N}\!\bigl(\alpha(t)\bm{X}_0,\, (1-\alpha(t)^2)\bm{I}\bigr)$, and the FPE is:
\begin{equation}\label{eq:vp-fpe}
  \frac{\partial p}{\partial t}
  = \frac{\beta(t)}{2}\,\nabla \cdot (\bm{x}\, p)
    + \frac{\beta(t)}{2}\,\Delta p.
\end{equation}

\subsection{Diffusion-time reparametrization}\label{sec:prelim:tau}

For the VE-SDE, define the \emph{cumulative diffusion time}:
\begin{equation}\label{eq:tau-def}
  \tau(t) = \frac{1}{2}\int_0^t g(s)^2\,ds = \frac{\sigma^2_{\mathrm{VE}}(t)}{2}.
\end{equation}
Under this change of variable, $d\tau = \nu(t)\,dt$, and \eqref{eq:ve-fpe} becomes the \emph{standard heat equation} with unit diffusion coefficient:
\begin{equation}\label{eq:heat}
  \frac{\partial p}{\partial \tau} = \Delta p.
\end{equation}
We write $p_\tau(\bm{x}) \equiv p(\bm{x},\tau)$.
The solution is the convolution $p_\tau = p_0 * G_\tau$, where $G_\tau(\bm{x}) = (4\pi\tau)^{-d/2}\exp(-|\bm{x}|^2/(4\tau))$ is the heat kernel~\citep{evans2010partial}.
For $\tau > 0$, strict positivity of $G_\tau$ ensures $p_\tau(\bm{x}) > 0$ for all $\bm{x} \in \R^d$, so that $\log p_\tau$ and all its derivatives are well-defined and smooth.

\begin{remark}\label{rem:tau-convention}
Unless otherwise stated, all analysis in \Cref{sec:score-burgers,sec:shocks} is conducted in $\tau$-time with the VE-SDE.
The extension to physical time $t$ is recovered by the substitution $\partial_\tau \mapsto \nu(t)^{-1}\partial_t$, and the VP case is treated in \Cref{sec:vp} via a coordinate transformation that reduces it to the VE setting.
\end{remark}

\subsection{The score function}\label{sec:prelim:score}

\begin{definition}[Score function]\label{def:score}
The \emph{score function} of the noised density $p_\tau$ is the vector field
\begin{equation}\label{eq:score-def}
  \bm{s}(\bm{x},\tau)
  = \nabla_{\bm{x}} \log p_\tau(\bm{x})
  = \frac{\nabla p_\tau(\bm{x})}{p_\tau(\bm{x})}.
\end{equation}
In one dimension ($d=1$), we write $s(x,\tau) = \partial_x \log p_\tau(x)$.
\end{definition}

The score is the central object in score-based generative modeling~\citep{song2019generative, hyvarinen2005estimation}.
It determines the reverse-time SDE~\citep{anderson1982reverse}
\begin{equation}\label{eq:reverse-sde}
  d\bm{X}_t = \bigl[\bm{f}(\bm{X}_t,t) - g(t)^2 \bm{s}(\bm{X}_t,t)\bigr]\,dt + g(t)\,d\bar{\bm{W}}_t,
\end{equation}
where $\bar{\bm{W}}_t$ is a reverse-time Wiener process, and the deterministic \emph{probability flow ODE}~\citep{song2021score}
\begin{equation}\label{eq:prob-flow}
  \frac{d\bm{x}}{dt} = \bm{f}(\bm{x},t) - \frac{g(t)^2}{2}\,\bm{s}(\bm{x},t).
\end{equation}
In practice, a neural network $\bm{s}_\theta(\bm{x},t)$ is trained to approximate $\bm{s}$ via the denoising score matching objective~\citep{vincent2011connection, song2021score}:
\begin{equation}\label{eq:dsm}
  \mathcal{L}(\theta)
  = \E_{t}\E_{\bm{X}_0}\E_{\bm{X}_t | \bm{X}_0}
    \bigl[\lambda(t)\,\|\bm{s}_\theta(\bm{X}_t, t) - \nabla_{\bm{X}_t}\log p(\bm{X}_t \mid \bm{X}_0)\|^2\bigr],
\end{equation}
where $\lambda(t)$ is a positive weighting function.

\subsection{Notation for Gaussian mixtures}\label{sec:prelim:gmm}

Our main analytical results concern data distributions that are finite Gaussian mixtures:
\begin{equation}\label{eq:gmm-def}
  p_0(\bm{x}) = \sum_{k=1}^K w_k\,\mathcal{N}(\bm{x};\,\bm{\mu}_k,\,\sigma_0^2 \bm{I}_d),
\end{equation}
with weights $w_k > 0$ summing to one, means $\bm{\mu}_k \in \R^d$, and common component variance $\sigma_0^2$.
Under the VE forward process at diffusion time $\tau$, the noised density is
\begin{equation}\label{eq:gmm-noised}
  p_\tau(\bm{x}) = \sum_{k=1}^K w_k\,\mathcal{N}(\bm{x};\,\bm{\mu}_k,\,\sigma_\tau^2 \bm{I}_d),
  \qquad \sigma_\tau^2 \coloneqq \sigma_0^2 + 2\tau.
\end{equation}
We define the weighted mean $\bar{\bm{x}} = \sum_k w_k \bm{\mu}_k$, centered means $\bm{\nu}_k = \bm{\mu}_k - \bar{\bm{x}}$, and the \emph{between-class covariance}
\begin{equation}\label{eq:W-def}
  \bm{W} = \sum_{k=1}^K w_k\,\bm{\nu}_k \bm{\nu}_k^\top,
\end{equation}
which is positive semidefinite with $\rank(\bm{W}) \leq \min(K-1, d)$.
Its eigenvalues $\lambda_1 \geq \lambda_2 \geq \cdots \geq \lambda_d \geq 0$ and orthonormal eigenvectors $\bm{e}_1, \ldots, \bm{e}_d$ will play a central role in the speciation analysis of \Cref{sec:shocks}.

\subsection{The Cole--Hopf transformation}\label{sec:prelim:cole-hopf}

We recall the classical result that connects the heat equation to the Burgers equation~\citep{hopf1950partial, cole1951quasi}.

\begin{proposition}[Cole--Hopf; \citealp{hopf1950partial,cole1951quasi}]\label{prop:cole-hopf}
Let $\varphi(x,\tau)$ be a positive smooth solution of the heat equation $\varphi_\tau = \nu\,\varphi_{xx}$ in one spatial dimension.
Define
\begin{equation}\label{eq:cole-hopf-def}
  u(x,\tau) = -2\nu\,\frac{\partial_x \varphi}{\varphi} = -2\nu\,\partial_x \log \varphi.
\end{equation}
Then $u$ satisfies the viscous Burgers equation:
\begin{equation}\label{eq:burgers-standard}
  \frac{\partial u}{\partial \tau} + u\,\frac{\partial u}{\partial x} = \nu\,\frac{\partial^2 u}{\partial x^2}.
\end{equation}
\end{proposition}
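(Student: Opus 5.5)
The plan is a direct computation in the logarithmic variable, which linearizes the nonlinearity. Since $\varphi>0$ and is smooth, $\psi \coloneqq \log\varphi$ is smooth. The heat equation for $\varphi$ becomes a viscous Hamilton--Jacobi equation for $\psi$. Differentiating that equation once in $x$ and rescaling by $-2\nu$ should give the Burgers equation for $u=-2\nu\,\psi_x$.

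\textbf{Step 1: equation for $\psi$.} Dividing $\varphi_\tau=\nu\varphi_{xx}$ by $\varphi$, and using $\varphi_\tau/\varphi=\psi_\tau$ and $\varphi_{xx}/\varphi=\psi_{xx}+\psi_x^2$, I obtain
\begin{equation*}
  \psi_\tau = \nu\bigl(\psi_{xx}+\psi_x^2\bigr).
\end{equation*}

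\textbf{Step 2: differentiate in $x$.} Smoothness permits exchanging derivatives, which gives $\psi_{x\tau}=\nu\psi_{xxx}+2\nu\,\psi_x\psi_{xx}$.

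\textbf{Step 3: substitute $u$.} With $u=-2\nu\psi_x$ we have $u_\tau=-2\nu\psi_{x\tau}$, $u_x=-2\nu\psi_{xx}$ and $u_{xx}=-2\nu\psi_{xxx}$. Multiplying the equation of Step 2 by $-2\nu$ gives
\begin{equation*}
  u_\tau = \nu u_{xx} - 4\nu^2\psi_x\psi_{xx}.
\end{equation*}
The last term equals $-u\,u_x$, because $u\,u_x=(-2\nu\psi_x)(-2\nu\psi_{xx})=4\nu^2\psi_x\psi_{xx}$. Rearranging yields \eqref{eq:burgers-standard}.

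There is no genuine obstacle. The only points needing care are the following:
\begin{itemize}
  \item Positivity of $\varphi$ is what makes $\log\varphi$, and hence $u$, globally defined and smooth. For the diffusion application this is supplied by the strict positivity of $p_\tau$ noted after \eqref{eq:heat}.
  \item The factor $-2\nu$ is chosen precisely so that the quadratic term $2\nu\psi_x\psi_{xx}$ converts into exactly $-u\,u_x$.
\end{itemize}

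Applied later with $\nu=1$ and $\varphi=p_\tau$, this shows that $u=-2s$ solves $u_\tau+uu_x=u_{xx}$.
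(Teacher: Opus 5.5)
Your proof is correct. It differs from the paper mainly in how the computation is organized.

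The paper gives no proof of this proposition; it cites Hopf and Cole. Its own derivation of the $\nu=1$ case, in \Cref{thm:score-pde,thm:score-burgers}, works directly with the quotient $s=p_x/p$. It differentiates in $\tau$, rewrites $p_{xx}/p=s_x+s^2$ and $p_{xxx}/p=s_{xx}+3ss_x+s^3$, and relies on the cubic terms cancelling.

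You instead pass through the viscous Hamilton--Jacobi equation $\psi_\tau=\nu(\psi_{xx}+\psi_x^2)$ for $\psi=\log\varphi$. That way only the second-derivative identity is needed, and a single $x$-differentiation finishes the argument with no cancellation.

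Your route also extends verbatim to $\R^d$. Taking the gradient of $\psi_\tau=\nu(\Delta\psi+|\nabla\psi|^2)$ and using $\nabla|\nabla\psi|^2=2(\nabla\psi\cdot\nabla)\nabla\psi$ (symmetry of the Hessian) gives the vector system of \Cref{thm:score-pde-d} in one line. It is the same mechanism the paper uses in \Cref{prop:local-phi-pde}.

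One point worth stating explicitly: writing $u_\tau=-2\nu\psi_{x\tau}$ uses that $\nu$ is constant, which is the intended reading of the statement. With a time-dependent $\nu$, the prefactor $-2\nu$ would produce an extra $(\nu'/\nu)\,u$ term.
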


The Burgers equation~\eqref{eq:burgers-standard} was introduced by \citet{burgers1948mathematical} as a one-dimensional model of turbulence.
The transformation~\eqref{eq:cole-hopf-def}, discovered independently by \citet{hopf1950partial} and \citet{cole1951quasi}, reduces it to the linear heat equation and provides explicit solutions for arbitrary initial data.
In the inviscid limit $\nu \to 0$, the equation $u_\tau + u\,u_x = 0$ develops gradient catastrophes in finite time---\emph{shock waves}---whose structure is governed by the Rankine--Hugoniot jump conditions~\citep{rankine1870thermodynamic, hugoniot1889propagation} and the Lax entropy condition~\citep{lax1957hyperbolic}.
The comprehensive treatment by \citet{whitham1974linear} and the modern PDE framework of \citet{evans2010partial} provide the mathematical foundations we employ.

\section{The Score--Burgers Correspondence}\label{sec:score-burgers}

The basic identification behind the rest of the paper is the following: the score function of a VE diffusion model satisfies a viscous Burgers equation.

\subsection{The one-dimensional score PDE}\label{sec:sb:1d}

\begin{theorem}[Score PDE]\label{thm:score-pde}
Let $p(x,\tau)$ be a positive smooth solution of the heat equation~\eqref{eq:heat} in one spatial dimension ($d=1$).
Then the score function $s(x,\tau) = \partial_x \log p(x,\tau)$ satisfies the nonlinear parabolic PDE
\begin{equation}\label{eq:score-pde}
  \frac{\partial s}{\partial \tau}
  = \frac{\partial^2 s}{\partial x^2}
    + 2\,s\,\frac{\partial s}{\partial x}.
\end{equation}
\end{theorem}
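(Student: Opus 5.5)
The plan is to work with the log-density $\ell(x,\tau) = \log p(x,\tau)$, which is well defined and smooth because $p>0$, and then differentiate once in $x$.

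First, derive the PDE satisfied by $\ell$. Since $p = e^{\ell}$, we have
\[
p_\tau = e^\ell \ell_\tau, \qquad p_{xx} = e^\ell\bigl(\ell_{xx} + \ell_x^2\bigr).
\]
Substituting into the heat equation~\eqref{eq:heat} with $d=1$ and dividing by $e^\ell>0$ gives the viscous Hamilton--Jacobi (KPZ-type) equation
\[
\ell_\tau = \ell_{xx} + \ell_x^2 .
\]

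Second, differentiate this identity with respect to $x$. Smoothness of $p$ and positivity make $\ell$ smooth, so mixed partial derivatives commute and $\partial_x \ell_\tau = \partial_\tau \ell_x = s_\tau$. The right-hand side becomes $\ell_{xxx} + 2\ell_x \ell_{xx} = s_{xx} + 2 s s_x$. This is exactly~\eqref{eq:score-pde}.

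As a consistency check, one can instead invoke \Cref{prop:cole-hopf} with $\nu=1$ and $\varphi=p$. Then $u=-2s$ satisfies $u_\tau + u u_x = u_{xx}$. Substituting $u=-2s$ gives $-2s_\tau + 4 s s_x = -2 s_{xx}$, which after dividing by $-2$ is again~\eqref{eq:score-pde}. Either route suffices.

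There is no real obstacle here. The only point requiring care is justifying the division by $p$ and the interchange of $\partial_\tau$ and $\partial_x$. Both follow from the hypotheses that $p$ is strictly positive and smooth, which holds for $\tau>0$ by the heat-kernel remark in \Cref{sec:prelim:tau}.
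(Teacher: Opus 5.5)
Your proof is correct, and it reaches \eqref{eq:score-pde} by a slightly different decomposition than the paper.

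The paper works directly with $p$. It builds $p_{xx} = (s_x+s^2)p$ and $p_{xxx} = (s_{xx}+3ss_x+s^3)p$, writes $\partial_\tau s = p_{xxx}/p - s\,p_{xx}/p$ using the quotient rule and the heat equation, and then lets the cubic terms cancel. You instead pass first to the viscous Hamilton--Jacobi equation $\ell_\tau = \ell_{xx} + \ell_x^2$ for $\ell = \log p$, and then differentiate once in $x$.

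Your route has three advantages:
\begin{itemize}
\item It avoids third derivatives of $p$ and the cancellation bookkeeping.
\item It gives the conservation form of \Cref{rem:conservation} for free, since $\partial_\tau s = \partial_x(s_x + s^2)$ is just the $x$-derivative of the Hamilton--Jacobi equation.
\item It extends to $\R^d$ in one line. Applying $\partial_i$ to $\ell_\tau = \Delta\ell + |\nabla\ell|^2$ gives $\partial_\tau s_i = \Delta s_i + 2s_k\,\partial_i s_k$, and the symmetry $\partial_i s_k = \partial_k s_i$ then gives \eqref{eq:score-pde-d}. This is shorter than the paper's expansion of $\partial_i(\Delta p)$ in \Cref{thm:score-pde-d}. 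The paper itself uses your log-density identity later, in the proof of \Cref{prop:local-phi-pde}.
\end{itemize}
What the paper's route buys is mainly the identities $p_x = sp$ and $p_{xx} = (s_x+s^2)p$, which it reuses in the VP computation of \Cref{thm:vp-score-pde}.

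Your Cole--Hopf cross-check is also algebraically correct. Since \Cref{prop:cole-hopf} is quoted rather than proved, treating it as a consistency check rather than as the proof, as you do, is the right call.
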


\begin{proof}
Since $s = p_x / p$, we have $p_x = s\,p$. Differentiating gives
\begin{align}
  p_{xx}
  &= (s\,p)_x
   = s_x\,p + s\,p_x
   = (s_x + s^2)\,p,\label{eq:pxx}\\[4pt]
  p_{xxx}
  &= \bigl[(s_x + s^2)\,p\bigr]_x
   = (s_{xx} + 2s\,s_x)\,p + (s_x + s^2)\,s\,p
   = (s_{xx} + 3s\,s_x + s^3)\,p.\label{eq:pxxx}
\end{align}
Differentiating $s = p_x / p$ with respect to $\tau$ and using the heat equation gives
\begin{equation}\label{eq:s-tau-quotient}
  \partial_\tau s
  = \frac{(\partial_\tau p_x)\,p - p_x\,(\partial_\tau p)}{p^2}
  = \frac{\partial_\tau p_x}{p} - s\,\frac{\partial_\tau p}{p}
  = \frac{p_{xxx}}{p} - s\,\frac{p_{xx}}{p}.
\end{equation}
Substituting~\eqref{eq:pxx} and~\eqref{eq:pxxx} yields
\begin{equation}\label{eq:s-tau-expand}
  \partial_\tau s
  = (s_{xx} + 3s\,s_x + s^3) - s\,(s_x + s^2)
  = s_{xx} + 2s\,s_x. \qedhere
\end{equation}
\end{proof}

In particular, the nonlinear score PDE is obtained exactly from the heat flow; no approximation enters at this stage.

\begin{remark}[Conservation form]\label{rem:conservation}
Equation~\eqref{eq:score-pde} can be written in divergence (conservation) form:
\begin{equation}\label{eq:score-conservation}
  \frac{\partial s}{\partial \tau}
  = \frac{\partial}{\partial x}\!\bigl(s_x + s^2\bigr).
\end{equation}
This form is natural for the analysis of weak solutions and will be used in the interfacial analysis of \Cref{sec:shocks}.
\end{remark}

\subsection{Identification with the Burgers equation}\label{sec:sb:identification}

\begin{theorem}[Score--Burgers correspondence]\label{thm:score-burgers}
Under the hypotheses of \Cref{thm:score-pde}, the function $u(x,\tau) = -2\,s(x,\tau)$ satisfies the viscous Burgers equation with unit viscosity:
\begin{equation}\label{eq:score-burgers}
  \frac{\partial u}{\partial \tau} + u\,\frac{\partial u}{\partial x}
  = \frac{\partial^2 u}{\partial x^2}.
\end{equation}
\end{theorem}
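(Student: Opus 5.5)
The plan is to derive the Burgers equation directly from the score PDE of \Cref{thm:score-pde} by the linear substitution $u=-2s$. Equivalently, it is a special case of the Cole--Hopf transform, \Cref{prop:cole-hopf}. Both routes are elementary, and I would present the first, then note the second as a consistency check.

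First, set $u=-2s$, so that $s=-u/2$ and
\begin{equation*}
s_\tau=-\tfrac12 u_\tau,\qquad s_x=-\tfrac12 u_x,\qquad s_{xx}=-\tfrac12 u_{xx}.
\end{equation*}
Substituting these into \eqref{eq:score-pde} gives
\begin{equation*}
-\tfrac12 u_\tau=-\tfrac12 u_{xx}+2\bigl(-\tfrac12 u\bigr)\bigl(-\tfrac12 u_x\bigr)=-\tfrac12 u_{xx}+\tfrac12 u\,u_x .
\end{equation*}
Multiplying by $-2$ yields $u_\tau=u_{xx}-u\,u_x$, which is \eqref{eq:score-burgers}. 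Smoothness of $u$ is inherited from $s$, since $p>0$ is smooth, so every derivative used is classical.

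As a cross-check, \Cref{prop:cole-hopf} with $\varphi=p$ and $\nu=1$ applies directly. Here $p$ is a positive smooth solution of $p_\tau=p_{xx}$, which is \eqref{eq:heat} with $d=1$. The proposition gives that $u=-2\partial_x\log p=-2s$ solves \eqref{eq:burgers-standard} with unit viscosity.

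There is no real obstacle. The only point needing care is the sign and scaling bookkeeping. The factor $-2$ must turn the coefficient $+2$ in the nonlinear term $2ss_x$ into the standard $+u u_x$ on the left-hand side, and the unit diffusivity in $\tau$-time must give viscosity exactly $1$.
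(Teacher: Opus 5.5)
Your proposal is correct and follows the paper's proof: the paper also substitutes $s=-u/2$ into \eqref{eq:score-pde}, obtains $-\tfrac12 u_\tau = -\tfrac12 u_{xx} + \tfrac12 u\,u_x$, and multiplies by $-2$. Your Cole--Hopf cross-check with $\varphi = p$ and $\nu = 1$ is the same argument the paper gives in \Cref{rem:exactness}.
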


\begin{proof}
From $u = -2s$, we have $s = -u/2$, $s_x = -u_x/2$, $s_{xx} = -u_{xx}/2$, and $s_\tau = -u_\tau/2$.
Substituting into~\eqref{eq:score-pde}:
\[
  -\frac{u_\tau}{2}
  = -\frac{u_{xx}}{2} + 2\Bigl(-\frac{u}{2}\Bigr)\Bigl(-\frac{u_x}{2}\Bigr)
  = -\frac{u_{xx}}{2} + \frac{u\,u_x}{2}.
\]
Multiplying by $-2$: $u_\tau = u_{xx} - u\,u_x$, which is~\eqref{eq:score-burgers}.
\end{proof}

So every one-dimensional VE score field can be read as a Burgers velocity after the simple rescaling $u=-2s$.

\begin{remark}[Exactness]\label{rem:exactness}
This identification is an identity rather than an approximation. It can be read off directly from \Cref{prop:cole-hopf} by setting $\varphi = p_\tau$, which solves the heat equation~\eqref{eq:heat} in $\tau$-time with $\nu = 1$, and observing that the Cole--Hopf variable~\eqref{eq:cole-hopf-def} becomes $u = -2\,\partial_x \log p_\tau = -2\,s$. The observation that the ``score Fokker--Planck equation'' of \citet{lai2023fp} is the Burgers equation does not appear explicitly there.
\end{remark}

\subsection{Physical-time formulation}\label{sec:sb:physical}

Reverting from $\tau$-time to the physical time $t$ of the VE-SDE~\eqref{eq:ve-sde} using $\partial_\tau = \nu(t)^{-1}\partial_t$ yields:

\begin{corollary}[Score PDE in physical time]\label{cor:physical-time}
In the physical time $t$ of the VE-SDE with diffusion coefficient $g(t)$, the score satisfies
\begin{equation}\label{eq:score-pde-physical}
  \frac{\partial s}{\partial t}
  = \nu(t)\!\left(\frac{\partial^2 s}{\partial x^2} + 2\,s\,\frac{\partial s}{\partial x}\right),
  \qquad \nu(t) = \frac{g(t)^2}{2},
\end{equation}
and $u = -2s$ satisfies the viscous Burgers equation with time-dependent viscosity:
\begin{equation}\label{eq:burgers-physical}
  \frac{\partial u}{\partial t} + \nu(t)\,u\,\frac{\partial u}{\partial x}
  = \nu(t)\,\frac{\partial^2 u}{\partial x^2}.
\end{equation}
\end{corollary}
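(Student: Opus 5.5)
The plan is a chain-rule change of time variable applied to the $\tau$-time results already established (\Cref{thm:score-pde} and \Cref{thm:score-burgers}). The VE marginal $p(x,t)$ solves~\eqref{eq:ve-fpe}. Define $\tau(t)$ by~\eqref{eq:tau-def}, so that $d\tau/dt = \nu(t) = g(t)^2/2 > 0$.

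Since $g>0$, the map $t\mapsto\tau(t)$ is strictly increasing and $C^1$, hence a diffeomorphism onto its image. Set $\tilde p(x,\tau(t)) = p(x,t)$. Then $\partial_t p = \nu(t)\,\partial_\tau \tilde p$, and by~\eqref{eq:ve-fpe} this gives $\partial_\tau\tilde p = \partial_{xx}\tilde p$, which is exactly~\eqref{eq:heat}. For $\tau>0$, $\tilde p$ is positive and smooth, as noted after~\eqref{eq:heat}. So \Cref{thm:score-pde} applies to $\tilde s = \partial_x\log\tilde p$.

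Next we pull back. The score $s(x,t) = \tilde s(x,\tau(t))$ involves no time derivative, so spatial derivatives are unchanged. The chain rule gives $\partial_t s = \nu(t)\,\partial_\tau\tilde s$. Substituting~\eqref{eq:score-pde} then yields~\eqref{eq:score-pde-physical}.

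For the Burgers form, put $u=-2s$. Then $\partial_t u = \nu(t)\,\partial_\tau \tilde u$, and \Cref{thm:score-burgers} gives $\partial_\tau\tilde u = \tilde u_{xx} - \tilde u\tilde u_x$. Multiplying by $\nu(t)$ and rearranging gives~\eqref{eq:burgers-physical}. Equivalently, one can multiply~\eqref{eq:score-pde-physical} by $-2$ and substitute $s=-u/2$ directly, which is the same linear algebra as in the proof of \Cref{thm:score-burgers}.

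There is no real obstacle here. The only points needing care are regularity: $g$ must be continuous and positive so that $\tau(t)$ is $C^1$ and invertible, and the statement is for $t>0$, where $p$ is smooth and positive. Beyond that, one just checks that the factor $\nu(t)$ multiplies both the diffusive and the nonlinear terms, rather than only the viscous term.
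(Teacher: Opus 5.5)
Your proposal is correct and takes the same approach as the paper: you apply the chain-rule substitution $\partial_\tau = \nu(t)^{-1}\partial_t$ to the $\tau$-time score PDE and then rescale with $u=-2s$. Your remarks that $t\mapsto\tau(t)$ is increasing and invertible when $g>0$, and that the noised density is positive, make explicit what the paper leaves implicit.
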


The appearance of $\nu(t)$ as a time-dependent viscosity means that the Burgers dynamics is ``fast'' when the noise injection rate $g(t)$ is large and ``slow'' when $g(t)$ is small.
This has direct implications for the noise schedule design: the inviscid limit (where shocks form) is approached whenever $\nu(t) \to 0$, i.e., at the beginning of the forward process and---critically---at the end of the reverse (generative) process when noise is nearly removed.

\subsection{Connection to the score Fokker--Planck equation}\label{sec:sb:connection}

\citet{lai2023fp} derived the PDE governing the time evolution of the score by differentiating the Fokker--Planck equation~\eqref{eq:fpe}.
They termed the result the ``score Fokker--Planck equation'' (score FPE) and used it as a training regularizer, showing empirically that enforcing it improves both log-likelihood and the conservativity of the learned score.

In the VE setting, their score FPE is simply our~\eqref{eq:score-pde}.
To see this, note that the general form given by \citet[Eq.~(8)]{lai2023fp} reduces, for the VE-SDE with $\bm{f} = \bm{0}$ and scalar $g(t)$, to
\[
  \partial_t s_i = \nu(t)\bigl[\Delta s_i + 2\,s_j\,\partial_j s_i\bigr],
\]
which is the $d$-dimensional generalization of~\eqref{eq:score-pde-physical} (the vector Burgers system; see \Cref{sec:multidim}).
The connection to the Burgers equation appears not to have been noted in their work.

\subsection{Informal summary}

Up to the fixed rescaling $u=-2s$, the score of a VE diffusion model is a Burgers velocity field. Reverse time lowers the noise, so the effective viscosity drops and the boundary layers sharpen. The next section works this out first in the symmetric binary case.

\section{Interfacial Structure and Speciation}\label{sec:shocks}

The symmetric binary Gaussian mixture is the cleanest place to see the mechanism. There the score profile, the normal Hessian, and the interfacial width are all explicit. The local theorem comes out of that calculation, and only afterwards do we return to the mixture-specific consequences.

\subsection{Exact score for a symmetric two-component mixture}\label{sec:shocks:exact}

Consider the symmetric binary Gaussian mixture in one dimension:
\begin{equation}\label{eq:binary-gmm}
  p_0(x) = \tfrac{1}{2}\,\mathcal{N}(x;\,-a,\,\sigma_0^2) + \tfrac{1}{2}\,\mathcal{N}(x;\,a,\,\sigma_0^2),
\end{equation}
with mode half-separation $a > 0$ and component variance $\sigma_0^2 > 0$.
Under the VE forward process at diffusion time $\tau$, the noised density is
\begin{equation}\label{eq:binary-noised}
  p_\tau(x) = \tfrac{1}{2}\,\mathcal{N}(x;\,-a,\,\sigma_\tau^2) + \tfrac{1}{2}\,\mathcal{N}(x;\,a,\,\sigma_\tau^2),
  \qquad \sigma_\tau^2 = \sigma_0^2 + 2\tau.
\end{equation}

\begin{proposition}[Exact score formula]\label{prop:exact-score}
The score of the noised density~\eqref{eq:binary-noised} is
\begin{equation}\label{eq:exact-score}
  s(x,\tau)
  = -\frac{x}{\sigma_\tau^2}
    + \frac{a}{\sigma_\tau^2}\,\tanh\!\left(\frac{a\,x}{\sigma_\tau^2}\right).
\end{equation}
\end{proposition}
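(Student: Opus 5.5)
The plan is a direct computation: factor the common Gaussian envelope out of the noised density and read off the logarithm. Expanding the exponents in \eqref{eq:binary-noised}, $(x\mp a)^2 = x^2 + a^2 \mp 2ax$, so the quadratic part $x^2+a^2$ is shared by both components. Hence
\[
  p_\tau(x) = \frac{1}{\sqrt{2\pi\sigma_\tau^2}}\exp\!\Bigl(-\frac{x^2+a^2}{2\sigma_\tau^2}\Bigr)\cdot \tfrac12\Bigl(e^{ax/\sigma_\tau^2}+e^{-ax/\sigma_\tau^2}\Bigr)
  = \frac{1}{\sqrt{2\pi\sigma_\tau^2}}\exp\!\Bigl(-\frac{x^2+a^2}{2\sigma_\tau^2}\Bigr)\cosh\!\Bigl(\frac{ax}{\sigma_\tau^2}\Bigr).
\]
This factorization, which turns the equal-weight sum into a $\cosh$, is the only step that needs care; the rest is mechanical.

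Next I will take the logarithm. This gives $\log p_\tau(x) = C(\tau) - x^2/(2\sigma_\tau^2) + \log\cosh(ax/\sigma_\tau^2)$, where $C(\tau)$ collects the $x$-independent terms. Differentiating in $x$ and using $(\log\cosh z)' = \tanh z$ together with the chain-rule factor $a/\sigma_\tau^2$ gives exactly \eqref{eq:exact-score}. Positivity and smoothness of $p_\tau$ are automatic here, since $\cosh>0$.

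As a consistency check rather than a necessary step, I will note that the formula is compatible with \Cref{thm:score-pde}. The density $p_\tau$ solves \eqref{eq:heat} because $\sigma_\tau^2=\sigma_0^2+2\tau$ is exactly the variance of Gaussians evolving under the unit-diffusivity heat flow. One could verify \eqref{eq:score-pde} directly, but this is not needed for the proposition. There is no real obstacle; the only pitfall is keeping track of the convention $\sigma_\tau^2=\sigma_0^2+2\tau$ rather than $\sigma_0^2+\tau$.
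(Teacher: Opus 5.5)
Your proof is correct and is essentially the paper's direct computation. The paper differentiates first, writing $s=-x/\sigma_\tau^2+(a/\sigma_\tau^2)(\varphi_+-\varphi_-)/(\varphi_++\varphi_-)$ and reducing the ratio to $\tanh(ax/\sigma_\tau^2)$ via $\varphi_+/\varphi_-=e^{2ax/\sigma_\tau^2}$; you instead factor out the shared envelope $e^{-(x^2+a^2)/(2\sigma_\tau^2)}$ first and differentiate $\log\cosh$, which relies on the same fact that the two components differ only by the factor $e^{\pm ax/\sigma_\tau^2}$.
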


\begin{proof}
Let $\varphi_\pm(x) = \mathcal{N}(x;\,\pm a,\,\sigma_\tau^2)$.
Then $\partial_x \varphi_\pm = -(x \mp a)\,\sigma_\tau^{-2}\,\varphi_\pm$, and the score is
\[
  s = \frac{\tfrac{1}{2}\partial_x \varphi_- + \tfrac{1}{2}\partial_x \varphi_+}
          {\tfrac{1}{2}\varphi_- + \tfrac{1}{2}\varphi_+}
    = -\frac{x}{\sigma_\tau^2}
      + \frac{a}{\sigma_\tau^2}\cdot\frac{\varphi_+ - \varphi_-}{\varphi_+ + \varphi_-}.
\]
We compute the ratio.
Since $\varphi_\pm \propto \exp\!\bigl(-(x \mp a)^2/(2\sigma_\tau^2)\bigr)$,
\[
  \frac{\varphi_+}{\varphi_-}
  = \exp\!\left(\frac{(x+a)^2 - (x-a)^2}{2\sigma_\tau^2}\right)
  = \exp\!\left(\frac{4ax}{2\sigma_\tau^2}\right)
  = \exp\!\left(\frac{2ax}{\sigma_\tau^2}\right),
\]
where we expanded $(x+a)^2 - (x-a)^2 = 4ax$.
Therefore,
\[
  \frac{\varphi_+ - \varphi_-}{\varphi_+ + \varphi_-}
  = \frac{e^{2ax/\sigma_\tau^2} - 1}{e^{2ax/\sigma_\tau^2} + 1}
  = \tanh\!\left(\frac{ax}{\sigma_\tau^2}\right).\qedhere
\]
\end{proof}

The score profiles and their Burgers transform are shown in \Cref{fig:score-burgers}. Across diffusion times, the score develops the narrow inter-mode transition whose background-subtracted form becomes the Burgers shock analyzed below.

\begin{figure}[t]
\centering
\includegraphics[width=\linewidth]{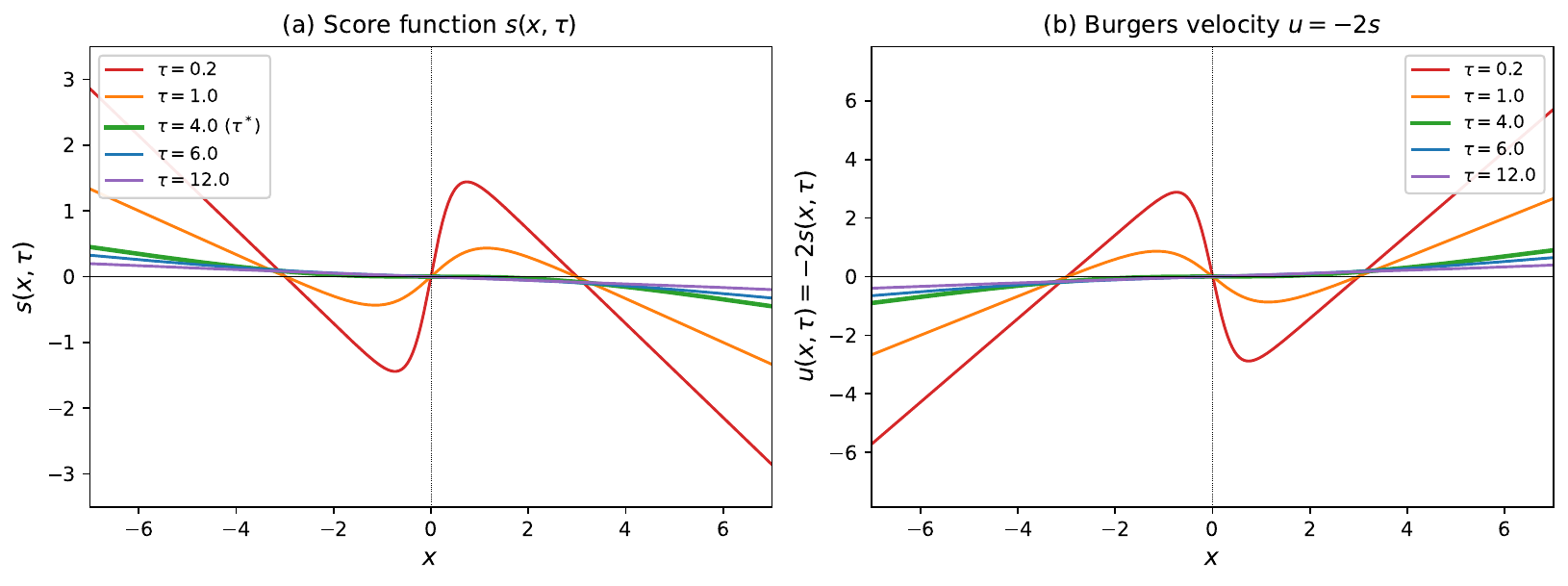}
\caption{Symmetric binary Gaussian mixture at several diffusion times. Panel~(a) traces the exact score $s(x,\tau)$; the central transition sharpens as $\tau \downarrow 0$, and the critical time $\tau^\ast = 4.0$ is marked. Panel~(b) displays the Burgers variable $u=-2s$ for the same slices. Its linear background remains visible; subtracting $2x/\sigma_\tau^2$ isolates the $\tanh$ layer described in Proposition~\ref{prop:shock-profile}.}
\label{fig:score-burgers}
\end{figure}

\subsection{The midpoint derivative of the score and the critical time}\label{sec:shocks:hessian}

The quantity $s_x(0,\tau)=\partial_x^2 \log p_\tau(0)$ is the midpoint derivative of the score, equivalently the one-dimensional Hessian of $\log p_\tau$ at the mode boundary. Its behavior governs the transition from unimodal to bimodal structure.

\begin{proposition}[Midpoint derivative of the score]\label{prop:sx-origin}
For the symmetric binary mixture~\eqref{eq:binary-gmm},
\begin{equation}\label{eq:sx-origin}
  s_x(0,\tau)
  = \frac{a^2 - \sigma_\tau^2}{\sigma_\tau^4}.
\end{equation}
In particular, $s_x(0,\tau) = 0$ if and only if $\sigma_\tau^2 = a^2$.
\end{proposition}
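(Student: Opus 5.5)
The plan is to differentiate the closed-form score of \Cref{prop:exact-score} in $x$ and then evaluate at the midpoint $x=0$. No PDE input is needed; the statement is a direct computation from \eqref{eq:exact-score}.

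First, differentiate \eqref{eq:exact-score} term by term. The linear background contributes $-1/\sigma_\tau^2$. By the chain rule with $\frac{d}{dz}\tanh z = \sech^2 z$, the interfacial term contributes $\frac{a}{\sigma_\tau^2}\cdot\frac{a}{\sigma_\tau^2}\,\sech^2\!\bigl(ax/\sigma_\tau^2\bigr)$. Together these give
\[
  s_x(x,\tau) = -\frac{1}{\sigma_\tau^2} + \frac{a^2}{\sigma_\tau^4}\,\sech^2\!\left(\frac{a x}{\sigma_\tau^2}\right).
\]
Setting $x=0$ and using $\sech(0)=1$ yields $s_x(0,\tau) = -1/\sigma_\tau^2 + a^2/\sigma_\tau^4 = (a^2-\sigma_\tau^2)/\sigma_\tau^4$, which is \eqref{eq:sx-origin}.

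For the ``in particular'' clause, note that $\sigma_\tau^4>0$ for all $\tau\ge 0$, since $\sigma_\tau^2=\sigma_0^2+2\tau\ge\sigma_0^2>0$. The numerator $a^2-\sigma_\tau^2$ therefore vanishes exactly when $\sigma_\tau^2=a^2$. I would also record the sign information, since it drives the speciation interpretation: $s_x(0,\tau)>0$ (the midpoint is a local minimum of $\log p_\tau$, so the density is bimodal) iff $\sigma_\tau^2<a^2$. Equivalently, the critical time is $\tau^\ast=(a^2-\sigma_0^2)/2$, which is positive when $a>\sigma_0$.

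There is no real obstacle. The only point needing care is the factor $a/\sigma_\tau^2$ produced by the chain rule, which multiplies the prefactor $a/\sigma_\tau^2$ to give $a^2/\sigma_\tau^4$. As a cross-check, I could compute $\partial_x^2\log p_\tau(0)$ directly from $p_{xx}/p-(p_x/p)^2$, using the symmetry $p_x(0)=0$ and the identity $\varphi_\pm''=\bigl((x\mp a)^2/\sigma_\tau^4-1/\sigma_\tau^2\bigr)\varphi_\pm$. This gives the same value.
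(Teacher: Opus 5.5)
Your proposal is correct and follows the paper's proof: differentiate the exact score \eqref{eq:exact-score} in $x$ to get $-\sigma_\tau^{-2} + a^2\sigma_\tau^{-4}\,\sech^2(ax/\sigma_\tau^2)$ and evaluate at $x=0$ using $\sech(0)=1$. You also note explicitly that $\sigma_\tau^4>0$ for the ``if and only if'' clause and give a valid cross-check via $p_{xx}/p-(p_x/p)^2$; the paper leaves the first point implicit and does not include the second.
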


\begin{proof}
Differentiating~\eqref{eq:exact-score} with respect to $x$:
\[
  s_x(x,\tau)
  = -\frac{1}{\sigma_\tau^2}
    + \frac{a^2}{\sigma_\tau^4}\,\sech^2\!\left(\frac{ax}{\sigma_\tau^2}\right).
\]
At $x = 0$: $\sech^2(0) = 1$, giving $s_x(0,\tau) = -\sigma_\tau^{-2} + a^2\,\sigma_\tau^{-4} = (a^2 - \sigma_\tau^2)/\sigma_\tau^4$.
\end{proof}

The sign of $s_x(0,\tau)$ determines the local shape of $\log p_\tau$ at the midpoint:
\begin{itemize}[nosep]
\item If $\sigma_\tau^2 > a^2$ (i.e., $\tau > \tau^\ast$): $s_x(0,\tau) < 0$, so $x = 0$ is a local maximum of $\log p_\tau$---the density appears \emph{unimodal}.
\item If $\sigma_\tau^2 < a^2$ (i.e., $\tau < \tau^\ast$): $s_x(0,\tau) > 0$, so $x = 0$ is a local minimum of $\log p_\tau$---the density is \emph{bimodal}.
\end{itemize}
The transition occurs at the \emph{critical diffusion time}:
\begin{definition}[Speciation time]\label{def:tau-star}
The speciation time for the symmetric binary mixture~\eqref{eq:binary-gmm} is
\begin{equation}\label{eq:tau-star}
  \tau^\ast = \frac{a^2 - \sigma_0^2}{2},
\end{equation}
assuming $a > \sigma_0$ (modes separated by more than one standard deviation).
At $\tau = \tau^\ast$, $\sigma_{\tau^\ast}^2 = \sigma_0^2 + 2\tau^\ast = a^2$.
\end{definition}

In the reverse generative process, which traverses diffusion time from $\tau_T \gg 1$ down to $\tau = 0$, the speciation time $\tau^\ast$ marks the moment at which the unimodal score field bifurcates: a single attractor at $x = 0$ splits into two attractors near $x = \pm a$.
This is the \emph{speciation transition} of \citet{biroli2024dynamical}, who identified it (in a high-dimensional mean-field framework) as a symmetry-breaking phase transition between their Regime~I (noise-dominated) and Regime~II (class-committed).
This change of local geometry is shown in \Cref{fig:speciation-width}, where the midpoint derivative crosses zero at $\tau^\ast$ and the associated interfacial width varies linearly with diffusion time.

\begin{figure}[t]
\centering
\includegraphics[width=\linewidth]{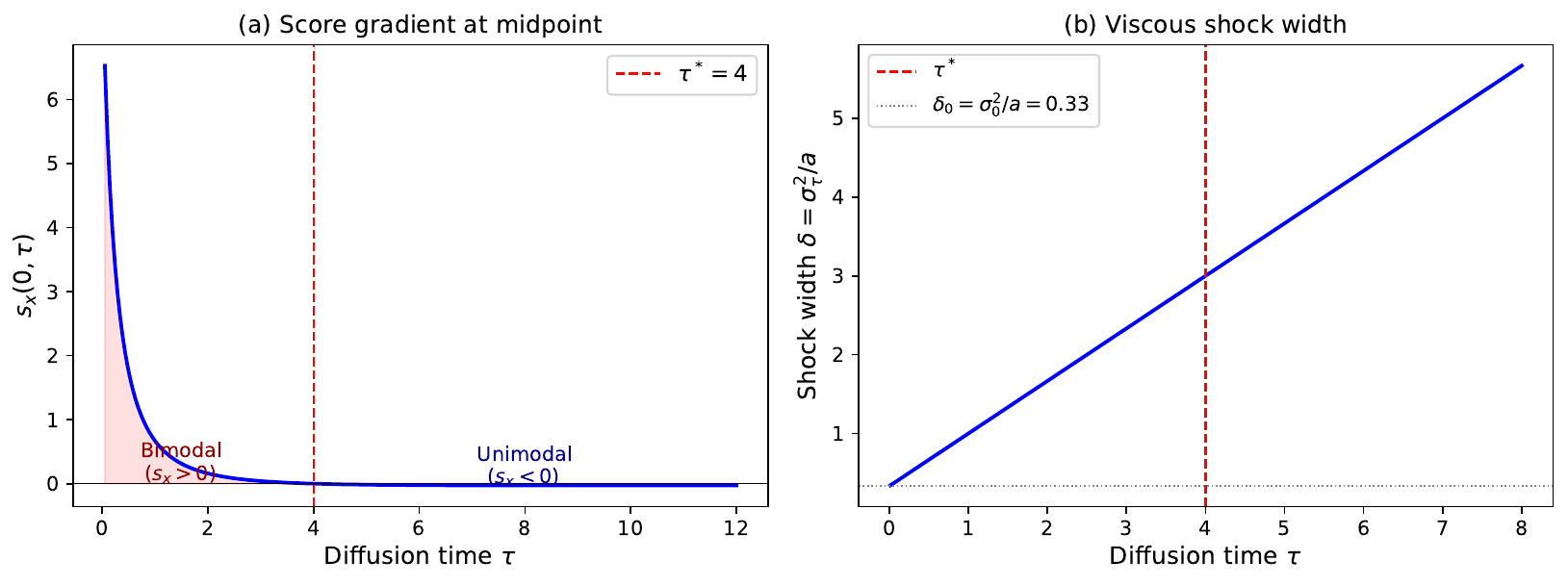}
\caption{Two simple diagnostics for the symmetric binary mixture. Panel~(a) tracks the midpoint derivative $s_x(0,\tau)=\partial_x^2\log p_\tau(0)$, whose zero marks the exact speciation time $\tau^\ast = 4.0$. Panel~(b) records the interfacial width $\delta(\tau)=\sigma_\tau^2/a$, linear in $\tau$ and smallest at $\tau=0$.}
\label{fig:speciation-width}
\end{figure}

\subsection{The background-subtracted interfacial shock profile}\label{sec:shocks:profile}

For the symmetric binary mixture, the inter-mode layer separates cleanly from the ambient Gaussian drift. After subtracting that linear background term, the remaining profile is the classical viscous Burgers shock.

\begin{proposition}[Background-subtracted interfacial profile]\label{prop:shock-profile}
Define the background-subtracted score
\begin{equation}\label{eq:background-subtracted-score}
  \tilde{s}(x,\tau)
  \coloneqq s(x,\tau) + \frac{x}{\sigma_\tau^2}
  = \frac{a}{\sigma_\tau^2}\,\tanh\!\left(\frac{a\,x}{\sigma_\tau^2}\right),
\end{equation}
and the corresponding Burgers variable $\tilde{u} = -2\tilde{s}$.
Then $\tilde{u}$ has left and right asymptotic states
\begin{equation}\label{eq:shock-states}
  \tilde{u}_L = \frac{2a}{\sigma_\tau^2}, \qquad \tilde{u}_R = -\frac{2a}{\sigma_\tau^2},
\end{equation}
and the $\tanh$ transition between them has width
\begin{equation}\label{eq:shock-width}
  \delta(\tau) = \frac{\sigma_\tau^2}{a}.
\end{equation}
Thus the inter-mode layer is exactly the classical viscous Burgers shock after subtraction of the linear Gaussian background drift.
\end{proposition}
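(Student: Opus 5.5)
The plan is to read everything off the exact formula of \Cref{prop:exact-score}, then compare with the classical stationary viscous Burgers shock. First, adding $x/\sigma_\tau^2$ to \eqref{eq:exact-score} cancels the linear term and leaves
\[
\tilde s(x,\tau) = \frac{a}{\sigma_\tau^2}\tanh\!\left(\frac{ax}{\sigma_\tau^2}\right).
\]
This is the identity \eqref{eq:background-subtracted-score}, and it gives $\tilde u = -\frac{2a}{\sigma_\tau^2}\tanh(ax/\sigma_\tau^2)$. Since $\tanh(z)\to\pm1$ as $z\to\pm\infty$ and $a/\sigma_\tau^2>0$, the limits $x\to-\infty$ and $x\to+\infty$ give $\tilde u_L = 2a/\sigma_\tau^2$ and $\tilde u_R=-2a/\sigma_\tau^2$, which is \eqref{eq:shock-states}. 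Writing the argument as $x/\delta$ identifies the width $\delta=\sigma_\tau^2/a$, which is \eqref{eq:shock-width}. Here the width is the length scale of the $\tanh$; one could also use the reciprocal of the maximal slope normalized by the jump, and this gives the same quantity up to a factor.

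For the final sentence, I would recall the stationary viscous shock of $u_\tau+uu_x=\nu u_{xx}$ with states $u_L>u_R$ and $u_L+u_R=0$. Integrating once gives $\nu u' = \tfrac12(u^2-u_L^2)$, whose solution is
\[
u = -\tfrac{\Delta}{2}\tanh\!\left(\frac{\Delta x}{4\nu}\right), \qquad \Delta = u_L-u_R .
\]
With $\nu=1$ and $\Delta = 4a/\sigma_\tau^2$, the argument becomes $ax/\sigma_\tau^2$ and the amplitude becomes $2a/\sigma_\tau^2$. This matches $\tilde u$ exactly. So at each fixed $\tau$, $\tilde u(\cdot,\tau)$ is precisely the Taylor shock profile with the Rankine--Hugoniot speed $(u_L+u_R)/2=0$, and the Lax condition $u_L>u_R$ holds.

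The main obstacle is interpretive. $\tilde u$ itself does not solve \eqref{eq:score-burgers}, because its amplitude $2a/\sigma_\tau^2$ decays in $\tau$ and the subtracted background interacts with it. So the phrase ``exactly the classical viscous Burgers shock'' should be proved as a statement about the spatial profile at each frozen $\tau$, as above. To justify it dynamically I would also check the moving-frame picture. Substitute $u = 2x/\sigma_\tau^2 + \tilde u$ into Burgers: the linear part $2x/\sigma_\tau^2$ is itself an exact Burgers solution, since $\partial_\tau(2x/\sigma_\tau^2) = -4x/\sigma_\tau^4$, which equals $-(2x/\sigma_\tau^2)(2/\sigma_\tau^2)$. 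The remainder $\tilde u$ then satisfies a Burgers equation with an additional linear strain term $\tfrac{2}{\sigma_\tau^2}(x\tilde u)_x$. That strain term accounts for the shrinking amplitude and growing width, and it is routine to verify directly.
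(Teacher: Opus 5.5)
Your proposal is correct and follows essentially the same route as the paper. Both read $\tilde u=-\tfrac{2a}{\sigma_\tau^2}\tanh(ax/\sigma_\tau^2)$ off the exact score formula and match it against the classical steady viscous shock $u=\tfrac{u_L+u_R}{2}-\tfrac{u_L-u_R}{2}\tanh\bigl((u_L-u_R)x/4\nu\bigr)$ at $\nu=1$, which gives the states and the width $4\nu/(u_L-u_R)=\sigma_\tau^2/a$. Your additional point, that $\tilde u$ is a shock profile only at frozen $\tau$ and actually satisfies $\tilde u_\tau+\tilde u\tilde u_x=\tilde u_{xx}-\tfrac{2}{\sigma_\tau^2}(x\tilde u)_x$, is correct (I checked it); it goes slightly beyond the paper and clarifies the sense in which ``exactly'' holds.
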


\begin{proof}
Equation~\eqref{eq:background-subtracted-score} follows immediately from the exact score formula~\eqref{eq:exact-score}. Therefore
\[
  \tilde{u}(x,\tau)
  = -\frac{2a}{\sigma_\tau^2}\,\tanh\!\left(\frac{a\,x}{\sigma_\tau^2}\right).
\]
The classical steady viscous Burgers shock connecting states $u_L > u_R$ with viscosity $\nu$ is~\citep[Ch.~4]{whitham1974linear}:
\begin{equation}\label{eq:classical-shock}
  u(x) = \frac{u_L + u_R}{2} - \frac{u_L - u_R}{2}\,\tanh\!\left(\frac{(u_L - u_R)\,x}{4\nu}\right),
\end{equation}
with shock width $\delta = 4\nu/(u_L - u_R)$.
Comparing with $\tilde{u}$ at viscosity $\nu = 1$ gives $\tilde{u}_L = 2a/\sigma_\tau^2$, $\tilde{u}_R = -2a/\sigma_\tau^2$, and
\[
  \delta = \frac{4}{4a/\sigma_\tau^2} = \frac{\sigma_\tau^2}{a}.
\]
\end{proof}

\begin{remark}[Interfacial sharpening]\label{rem:sharpening}
As the generative process proceeds ($\tau$ decreases toward $0$), the interfacial width $\delta$ shrinks to $\sigma_0^2/a$. The midpoint derivative of the score is
\[
  s_x(0,\tau)=\frac{a^2-\sigma_\tau^2}{\sigma_\tau^4},
\]
and approaches $(a^2-\sigma_0^2)/\sigma_0^4$ as $\tau\to 0$; it diverges only in the point-mass limit $\sigma_0\to 0$. So for any finite variance the layer stays smooth, just increasingly narrow. The actual jump appears only in the inviscid point-mass limit.
\end{remark}

\subsection{The exact local binary-boundary theorem}\label{sec:shocks:local}

The symmetric Gaussian formulas above reveal the mechanism transparently, but the underlying $\tanh$ layer is not a Gaussian artifact.
It is an exact algebraic consequence of binary competition between two positive heat contributions.
A canonical choice is to partition the initial density into two attraction basins $\Omega_1,\Omega_2$ and define
\[
  p_\tau^{(k)}(\bm{x}) = \int_{\Omega_k} p_0(\bm{y})\,G_\tau(\bm{x}-\bm{y})\,d\bm{y},
  \qquad k=1,2,
\]
so that each $p_\tau^{(k)}$ satisfies the heat equation by linearity.
The theorem below, however, requires only positivity and separate heat evolution.

\begin{theorem}[Exact binary decomposition]\label{thm:local-tanh}
Let $p_\tau = p_\tau^{(1)} + p_\tau^{(2)}$ on $\R^d$, where $p_\tau^{(1)},p_\tau^{(2)}>0$ are smooth and each satisfies $\partial_\tau p_\tau^{(k)} = \Delta p_\tau^{(k)}$.
Define
\begin{equation}\label{eq:local-phi-def}
  \phi(\bm{x},\tau) = \log \frac{p_\tau^{(1)}(\bm{x})}{p_\tau^{(2)}(\bm{x})},
  \qquad
  \bar{\bm{s}}(\bm{x},\tau) = \tfrac{1}{2}\bigl(\nabla \log p_\tau^{(1)}(\bm{x}) + \nabla \log p_\tau^{(2)}(\bm{x})\bigr).
\end{equation}
Then the full score $\bm{s} = \nabla \log p_\tau$ satisfies the exact identity
\begin{equation}\label{eq:local-tanh}
  \bm{s}(\bm{x},\tau)
  = \bar{\bm{s}}(\bm{x},\tau) + \tfrac{1}{2}\tanh\!\left(\frac{\phi(\bm{x},\tau)}{2}\right)\nabla\phi(\bm{x},\tau).
\end{equation}
\end{theorem}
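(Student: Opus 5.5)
The identity is purely algebraic and pointwise in $(\bm{x},\tau)$; the heat-equation hypothesis is never used. It only guarantees that the decomposition is meaningful and persists in time. I would proceed in three short steps.

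First, write $\ell_k = \log p_\tau^{(k)}$, which is well defined and smooth because $p_\tau^{(k)}>0$. Then $\phi = \ell_1-\ell_2$ and $\bar{\bm{s}} = \tfrac12(\nabla\ell_1+\nabla\ell_2)$. Express the total density symmetrically around the geometric mean:
\[
  p_\tau = e^{\ell_1}+e^{\ell_2} = e^{(\ell_1+\ell_2)/2}\bigl(e^{\phi/2}+e^{-\phi/2}\bigr) = 2\,e^{(\ell_1+\ell_2)/2}\cosh(\phi/2).
\]

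Second, take logarithms, which gives
\[
  \log p_\tau = \log 2 + \tfrac12(\ell_1+\ell_2) + \log\cosh(\phi/2),
\]
and differentiate in $\bm{x}$. The constant drops out, and the middle term gives $\bar{\bm{s}}$. By the chain rule, using $(\log\cosh)' = \tanh$, the last term gives $\tanh(\phi/2)\cdot\tfrac12\nabla\phi$. Together these yield \eqref{eq:local-tanh}.

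As a cross-check, one can instead write $\bm{s} = (p^{(1)}\nabla\ell_1 + p^{(2)}\nabla\ell_2)/(p^{(1)}+p^{(2)})$. This is a convex combination with weights $w_{1,2} = \tfrac12(1\pm\tanh(\phi/2))$, since $p^{(1)}/(p^{(1)}+p^{(2)}) = (1+e^{-\phi})^{-1}$. Expanding gives $\bar{\bm{s}} + \tfrac12\tanh(\phi/2)(\nabla\ell_1-\nabla\ell_2)$, which is the same expression. This route also mirrors the proof of \Cref{prop:exact-score}: that case is recovered with $p^{(1)}=\tfrac12\varphi_+$ and $p^{(2)}=\tfrac12\varphi_-$, for which $\phi = 2ax/\sigma_\tau^2$ and $\bar s = -x/\sigma_\tau^2$.

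There is no real obstacle. The only point needing care is the bookkeeping of the weights, namely verifying $(1+e^{-\phi})^{-1} = \tfrac12(1+\tanh(\phi/2))$. I would also remark explicitly that positivity is what makes $\phi$ and $\bar{\bm{s}}$ defined and smooth everywhere. The heat-evolution assumption is not needed for the identity itself. It matters only downstream, where one differentiates \eqref{eq:local-tanh} normally to the boundary and uses the dynamics of $\phi$.
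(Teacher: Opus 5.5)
Your proof is correct, and your cross-check is exactly the paper's argument. The paper writes $\bm{s} = \frac{R}{1+R}\bm{s}_1 + \frac{1}{1+R}\bm{s}_2$ with $R=e^{\phi}$, substitutes $\bm{s}_{1,2} = \bar{\bm{s}} \pm \tfrac12\nabla\phi$, and uses $(R-1)/(R+1)=\tanh(\phi/2)$.

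Your primary route differs in that it works at the level of $\log p_\tau$ rather than the score. The factorization $p_\tau = 2e^{(\ell_1+\ell_2)/2}\cosh(\phi/2)$ shows that the interfacial term is itself an exact gradient, $\nabla\log\cosh(\phi/2)$. It also makes the downstream normal criterion a one-line consequence, since differentiating once more gives
\[
  \nabla^2\log p_\tau = \nabla\bar{\bm{s}} + \tfrac14\sech^2(\phi/2)\,\nabla\phi\,\nabla\phi^{\top} + \tfrac12\tanh(\phi/2)\,\nabla^2\phi .
\]
On $\{\phi=0\}$, the $\hat{\bm{n}}\hat{\bm{n}}$ entry of this is exactly $\partial_n\bar s_n + \kappa^2/4$ from \Cref{thm:local-speciation}. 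The paper's score-level form, for its part, keeps in view the reading of $\bm{s}$ as a responsibility-weighted average of component scores. That is the reading that links to the posterior responsibilities $r_k$ used in the Gaussian-mixture sections.

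One small correction to your closing remark. The normal-derivative computation in \Cref{thm:local-speciation} is also purely algebraic. The heat-evolution hypothesis is used only in \Cref{prop:local-phi-pde}, the advection--diffusion law for $\phi$.
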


\begin{proof}
Write $\bm{s}_k = \nabla\log p_\tau^{(k)}$ and $R = p_\tau^{(1)}/p_\tau^{(2)} = e^{\phi}$.
Then
\[
  \bm{s}
  = \frac{p_\tau^{(1)}\bm{s}_1 + p_\tau^{(2)}\bm{s}_2}{p_\tau^{(1)} + p_\tau^{(2)}}
  = \frac{R}{1+R}\,\bm{s}_1 + \frac{1}{1+R}\,\bm{s}_2.
\]
Since $\bm{s}_1 = \bar{\bm{s}} + \tfrac{1}{2}\nabla\phi$ and $\bm{s}_2 = \bar{\bm{s}} - \tfrac{1}{2}\nabla\phi$, we obtain
\[
  \bm{s}
  = \bar{\bm{s}} + \frac{R-1}{2(R+1)}\nabla\phi
  = \bar{\bm{s}} + \tfrac{1}{2}\tanh\!\left(\frac{\phi}{2}\right)\nabla\phi,
\]
using $(e^{\phi}-1)/(e^{\phi}+1)=\tanh(\phi/2)$.
\end{proof}

\begin{proposition}[Log-ratio advection--diffusion]\label{prop:local-phi-pde}
Under the hypotheses of \Cref{thm:local-tanh}, the log-ratio $\phi$ satisfies
\begin{equation}\label{eq:local-phi-pde}
  \partial_\tau \phi = \Delta \phi + 2\,\bar{\bm{s}}\cdot\nabla\phi.
\end{equation}
\end{proposition}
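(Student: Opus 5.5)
The plan is to work with the log-densities of the two components, since each satisfies the same nonlinear equation, and then take their difference. For a positive heat solution $q$ with $\partial_\tau q = \Delta q$, set $\psi = \log q$. From $\nabla q = q\nabla\psi$ we get $\Delta q = q(\Delta\psi + |\nabla\psi|^2)$. Dividing the heat equation by $q$ then gives the Hopf-type equation
\[
  \partial_\tau \psi = \Delta\psi + |\nabla\psi|^2 .
\]
This is the scalar antecedent of \Cref{thm:score-pde}: taking $\partial_x$ in one dimension recovers~\eqref{eq:score-pde}. Positivity and smoothness of each $p_\tau^{(k)}$, assumed in \Cref{thm:local-tanh}, make $\psi_k = \log p_\tau^{(k)}$ well-defined and smooth. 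Hence $\psi_k$ satisfies this equation for $k = 1,2$.

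Next, subtract the two equations. Since $\phi = \psi_1 - \psi_2$, we obtain
\[
  \partial_\tau\phi = \Delta\phi + |\nabla\psi_1|^2 - |\nabla\psi_2|^2 .
\]
The key algebraic step is to factor the difference of squares as
\[
  |\nabla\psi_1|^2 - |\nabla\psi_2|^2 = (\nabla\psi_1 + \nabla\psi_2)\cdot(\nabla\psi_1 - \nabla\psi_2) = 2\bar{\bm{s}}\cdot\nabla\phi .
\]
This uses the definition of $\bar{\bm{s}}$ in~\eqref{eq:local-phi-def} as the average of the component scores. It is the same splitting $\bm{s}_{1,2} = \bar{\bm{s}} \pm \tfrac12\nabla\phi$ used in the proof of \Cref{thm:local-tanh}. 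This yields~\eqref{eq:local-phi-pde} directly.

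There is no real obstacle. The only point needing care is the justification for dividing by $p_\tau^{(k)}$ and exchanging derivatives. Both are covered by the standing hypotheses of strict positivity and smoothness. No integrability or decay conditions are needed, because the argument is purely pointwise.
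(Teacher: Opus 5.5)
Your proof is correct and follows the paper's argument essentially line for line. The paper also derives $\partial_\tau \log p_\tau^{(k)} = \Delta \log p_\tau^{(k)} + |\nabla \log p_\tau^{(k)}|^2$ for each component, subtracts the two identities, and factors $|\bm{s}_1|^2 - |\bm{s}_2|^2 = (\bm{s}_1+\bm{s}_2)\cdot(\bm{s}_1-\bm{s}_2) = 2\bar{\bm{s}}\cdot\nabla\phi$.
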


\begin{proof}
For each $k$, positivity and the heat equation give
\[
  \partial_\tau \log p_\tau^{(k)}
  = \frac{\Delta p_\tau^{(k)}}{p_\tau^{(k)}}
  = \Delta \log p_\tau^{(k)} + |\nabla\log p_\tau^{(k)}|^2.
\]
Subtracting the identities for $k=1$ and $k=2$ yields
\[
  \partial_\tau \phi
  = \Delta\phi + |\bm{s}_1|^2 - |\bm{s}_2|^2
  = \Delta\phi + (\bm{s}_1+\bm{s}_2)\cdot(\bm{s}_1-\bm{s}_2)
  = \Delta\phi + 2\,\bar{\bm{s}}\cdot\nabla\phi.
\]
\end{proof}

\begin{theorem}[Local boundary-normal reduction and exact speciation criterion]\label{thm:local-speciation}
Assume the binary boundary
\begin{equation}\label{eq:binary-boundary}
  \Gamma_\tau = \{\bm{x}\in\R^d : \phi(\bm{x},\tau)=0\}
\end{equation}
is regular, i.e., $\nabla\phi\neq 0$ on $\Gamma_\tau$.
Fix $\bm{x}_\Gamma\in\Gamma_\tau$, let
\begin{equation}\label{eq:kappa-def}
  \hat{\bm{n}} = \frac{\nabla\phi}{|\nabla\phi|}\Big|_{\bm{x}_\Gamma},
  \qquad
  \kappa = |\nabla\phi(\bm{x}_\Gamma,\tau)|,
\end{equation}
and use boundary-normal coordinates $(n,\bm{y})$ with signed distance $n$ in the $\hat{\bm{n}}$ direction.
Then, as $n\to 0$,
\begin{equation}\label{eq:local-normal-profile}
  (\bm{s}-\bar{\bm{s}})\cdot\hat{\bm{n}}
  = \tfrac{1}{2}\kappa\,\tanh\!\left(\frac{\kappa n}{2}\right) + O(n),
\end{equation}
and exactly on the boundary,
\begin{equation}\label{eq:local-normal-criterion}
  \partial_n s_n\big|_{\Gamma_\tau}
  = \partial_n \bar s_n\big|_{\Gamma_\tau} + \frac{\kappa^2}{4},
  \qquad s_n = \bm{s}\cdot\hat{\bm{n}},\ \bar s_n = \bar{\bm{s}}\cdot\hat{\bm{n}}.
\end{equation}
Thus the boundary-normal slice is locally bimodal at $\bm{x}_\Gamma$ if and only if
\begin{equation}\label{eq:local-bimodal-criterion}
  \partial_n \bar s_n\big|_{\Gamma_\tau} + \frac{\kappa^2}{4} > 0.
\end{equation}
\end{theorem}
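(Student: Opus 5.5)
The plan is to work entirely from the exact identity of \Cref{thm:local-tanh}, $\bm{s}=\bar{\bm{s}}+\tfrac12\tanh(\phi/2)\nabla\phi$, and Taylor-expand it in the normal coordinate $n$ about $\bm{x}_\Gamma$. Since $\phi(\bm{x}_\Gamma,\tau)=0$, $\nabla\phi(\bm{x}_\Gamma)=\kappa\hat{\bm{n}}$, and $\phi$ is smooth, moving along the normal ray $\bm{x}=\bm{x}_\Gamma+n\hat{\bm{n}}$ gives $\phi=\kappa n+O(n^2)$ and $\nabla\phi\cdot\hat{\bm{n}}=\kappa+O(n)$. The proof therefore reduces to expanding the two factors of the interfacial term and reading off the zeroth- and first-order parts in $n$.

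For the profile \eqref{eq:local-normal-profile}, I would write
\[
(\bm{s}-\bar{\bm{s}})\cdot\hat{\bm{n}}=\tfrac12\tanh(\phi/2)\,(\nabla\phi\cdot\hat{\bm{n}}).
\]
The argument $\tanh(\phi/2)$ equals $\tanh(\kappa n/2)+O(n^2)$, because $\tanh$ is $1/2$-Lipschitz. The factor $\nabla\phi\cdot\hat{\bm{n}}$ equals $\kappa+O(n)$, and $\tanh$ is bounded. Multiplying out leaves $\tfrac12\kappa\tanh(\kappa n/2)$ plus an $O(n)$ error. Note that $\tanh(\kappa n/2)$ is itself $O(n)$, so the statement is a leading-order normal form rather than a uniform asymptotic. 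I would point this out honestly: the content lies in the structure of the $\tanh$ term, with the error controlled by $\partial_n^2\phi$ and $\partial_n(\nabla\phi\cdot\hat{\bm{n}})$. The natural sharper form is an $O(n^2)$ error once the $\tanh$ slope is matched, and I would show this.

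For the exact criterion \eqref{eq:local-normal-criterion}, I would differentiate $s_n=\bar s_n+\tfrac12\tanh(\phi/2)\,\partial_n\phi$ in $n$ with $\hat{\bm{n}}$ held fixed, so that $\partial_n=\hat{\bm{n}}\cdot\nabla$. By the product rule,
\[
\partial_n s_n=\partial_n\bar s_n+\tfrac14\sech^2(\phi/2)(\partial_n\phi)^2+\tfrac12\tanh(\phi/2)\,\partial_n^2\phi.
\]
On $\Gamma_\tau$ we have $\phi=0$, so $\sech^2(0)=1$ and $\tanh(0)=0$, while $\partial_n\phi=\kappa$. This gives exactly $\partial_n\bar s_n+\kappa^2/4$. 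The key point is that the curvature term $\partial_n^2\phi$ is killed by the vanishing $\tanh$, which makes the formula exact rather than asymptotic.

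Finally, since $s_n=\partial_n\log p_\tau$ along the normal line, $\partial_n s_n=\partial_n^2\log p_\tau$ is the second derivative of the normal slice of $\log p_\tau$. Local bimodality of that slice at $\bm{x}_\Gamma$, meaning a strict local minimum of $\log p_\tau$ in $n$, is detected by positivity of this second derivative, which gives \eqref{eq:local-bimodal-criterion}. The main subtlety is here: a strict local minimum along the slice needs $\partial_n\log p_\tau=0$ at $\bm{x}_\Gamma$, i.e.\ $s_n=\bar s_n=0$ there. I would either read the criterion as a statement about the convexity sign of the normal slice, in line with the midpoint-derivative criterion of \Cref{prop:sx-origin}, or add the stationarity hypothesis $\bar s_n(\bm{x}_\Gamma)=0$, which holds automatically in the symmetric case. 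I would also check consistency with that case: there $\kappa=2a/\sigma_\tau^2$ and $\partial_n\bar s_n=-1/\sigma_\tau^2$, which recovers $(a^2-\sigma_\tau^2)/\sigma_\tau^4$.
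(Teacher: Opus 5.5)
Your proposal is correct and is essentially the paper's own argument: expand $\phi=\kappa n+O(n^2)$ and $\partial_n\phi=\kappa+O(n)$ along the normal, take the normal component of the identity in \Cref{thm:local-tanh}, and differentiate once in $n$, where $\tanh(0)=0$ removes the $\partial_n^2\phi$ term and $\sech^2(0)=1$ yields $\kappa^2/4$.

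Your two caveats are also correct, and the paper's proof passes over both. First, the stated $O(n)$ remainder is of the same order as the $\tanh$ term; your $O(n^2)$ sharpening holds because $\tanh(\kappa n/2)\cdot O(n)=O(n^2)$. Second, identifying $\partial_n s_n>0$ with local bimodality requires $s_n=\bar s_n=0$ at $\bm{x}_\Gamma$, and this fails off the symmetric case: for unequal-weight binary Gaussian mixtures, $\bar s_n(x_\Gamma)=-x_\Gamma/\sigma_\tau^2\neq 0$. So reading the criterion as a convexity statement, or adding that stationarity hypothesis, is the right fix.
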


\begin{proof}
Because $\phi=0$ on $\Gamma_\tau$ and $\nabla\phi\neq 0$ there, boundary-normal coordinates give
\[
  \phi(n,\bm{y},\tau) = \kappa n + O(n^2),
  \qquad
  \partial_n\phi(n,\bm{y},\tau) = \kappa + O(n).
\]
Taking the normal component of \eqref{eq:local-tanh} yields
\[
  s_n - \bar s_n
  = \tfrac{1}{2}\tanh\!\left(\frac{\phi}{2}\right)\partial_n\phi
  = \tfrac{1}{2}\tanh\!\left(\frac{\kappa n}{2} + O(n^2)\right)\bigl(\kappa + O(n)\bigr),
\]
which is \eqref{eq:local-normal-profile}.
For the boundary derivative, differentiate the identity
\[
  s_n = \bar s_n + \tfrac{1}{2}\tanh\!\left(\frac{\phi}{2}\right)\partial_n\phi
\]
along the normal coordinate.
At $n=0$ one has $\phi=0$, hence $\tanh(0)=0$ and $\operatorname{sech}^2(0)=1$, so
\[
  \partial_n s_n\big|_{\Gamma_\tau}
  = \partial_n \bar s_n\big|_{\Gamma_\tau}
    + \tfrac{1}{2}\left(\frac{\partial_n\phi}{2}\right)\partial_n\phi\Big|_{\Gamma_\tau}
  = \partial_n \bar s_n\big|_{\Gamma_\tau} + \frac{\kappa^2}{4}.
\]
On a one-dimensional normal slice, local bimodality is equivalent to the second derivative of $\log p_\tau$ being positive at the boundary point, i.e., to $\partial_n s_n>0$.
This gives \eqref{eq:local-bimodal-criterion}.
\end{proof}

\begin{remark}[What is universal, and what is model-specific]\label{rem:local-universal}
\Cref{thm:local-tanh,thm:local-speciation} separate what is universal from what depends on the model. The $\tanh$ layer and the positive term $\kappa^2/4$ are universal for binary competition. By contrast, the actual objects $\phi$, $\bar{\bm{s}}$, and therefore $\kappa$ depend on the model. It is also worth keeping in mind that \Cref{prop:local-phi-pde} is linear in $\phi$: the sharp score layer comes from the nonlinear map $\phi\mapsto \tfrac{1}{2}\tanh(\phi/2)\nabla\phi$, not from shock formation in $\phi$ itself. In the symmetric Gaussian case one has $\phi(x,\tau)=2ax/\sigma_\tau^2$ and $\bar s(x,\tau)=-x/\sigma_\tau^2$, so the general statement reduces to \Cref{prop:shock-profile,thm:shock-speciation}.
\end{remark}

\begin{proposition}[Error from non-binary competitors]\label{prop:binary-remainder}
Suppose the true density admits the decomposition
\begin{equation}\label{eq:binary-remainder}
  p_\tau = p_\tau^{(1)} + p_\tau^{(2)} + p_\tau^{(\mathrm{rem})},
  \qquad
  r \coloneqq \frac{p_\tau^{(\mathrm{rem})}}{p_\tau^{(1)} + p_\tau^{(2)}}.
\end{equation}
Let $\bm{s}^{(\mathrm{bin})}$ denote the score built from $p_\tau^{(1)}+p_\tau^{(2)}$ via \Cref{thm:local-tanh}.
Then
\begin{equation}\label{eq:binary-remainder-score}
  \bm{s} - \bm{s}^{(\mathrm{bin})} = \nabla\log(1+r),
\end{equation}
and on a boundary-normal slice,
\begin{equation}\label{eq:binary-remainder-hessian}
  \partial_n s_n
  = \partial_n s_n^{(\mathrm{bin})} + \partial_n^2\log(1+r).
\end{equation}
Thus the exact binary criterion \eqref{eq:local-normal-criterion} remains accurate whenever $r$, $\partial_n r$, and $\partial_n^2 r$ are small; for well-separated competing modes, these corrections are exponentially small in the distance to the nearest non-competing mode measured in units of $\sqrt{\tau}$.
\end{proposition}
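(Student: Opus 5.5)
The identity \eqref{eq:binary-remainder-score} is purely algebraic, so I will start there. Write $q \coloneqq p_\tau^{(1)}+p_\tau^{(2)}>0$, so that $p_\tau = q(1+r)$ with $1+r>0$; here I use $p_\tau>0$, or simply $p_\tau^{(\mathrm{rem})}\ge 0$. Taking logarithms and gradients gives
\[
\bm{s}=\nabla\log q+\nabla\log(1+r).
\]
By definition $\bm{s}^{(\mathrm{bin})}=\nabla\log q$, and by \Cref{thm:local-tanh} applied to the pair $p_\tau^{(1)},p_\tau^{(2)}$ it equals $\bar{\bm{s}}+\tfrac12\tanh(\phi/2)\nabla\phi$. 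This proves \eqref{eq:binary-remainder-score}. No heat equation is needed for $p_\tau^{(\mathrm{rem})}$ at this step.

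For \eqref{eq:binary-remainder-hessian}, I will take the normal component of \eqref{eq:binary-remainder-score} with the fixed unit vector $\hat{\bm{n}}$ and differentiate in $n$. This gives $\partial_n s_n=\partial_n s^{(\mathrm{bin})}_n+\partial_n^2\log(1+r)$ exactly. Combining it with \Cref{thm:local-speciation} then gives
\[
\partial_n s_n|_{\Gamma_\tau}=\partial_n\bar s_n+\kappa^2/4+\partial_n^2\log(1+r).
\]
To make ``remains accurate'' quantitative, I expand
\[
\partial_n^2\log(1+r)=\frac{\partial_n^2 r}{1+r}-\frac{(\partial_n r)^2}{(1+r)^2}.
\]
This is bounded by $|\partial_n^2 r|+|\partial_n r|^2$, so the correction is small whenever $r$, $\partial_n r$ and $\partial_n^2 r$ are small. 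In that case the sign in \eqref{eq:local-bimodal-criterion} is preserved unless the binary quantity is itself near zero.

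The main obstacle is the final, heuristic claim that the corrections are exponentially small. I would make it precise in the Gaussian-mixture setting, with the $p_\tau^{(k)}$ being components $w_k\mathcal N(\cdot;\bm\mu_k,\sigma_\tau^2 I)$. There $r$ is a finite sum of ratios $w_j e^{-|\bm x-\bm\mu_j|^2/(2\sigma_\tau^2)}/q$, and since $q$ dominates each of its two components, each ratio is at most a Gaussian ratio against the closer competitor. On $\Gamma_\tau$ such a ratio is bounded by
\[
\exp\!\bigl(-(|\bm x-\bm\mu_j|^2-|\bm x-\bm\mu_1|^2)/(2\sigma_\tau^2)\bigr),
\]
which is exponentially small in the squared distance gap measured in units of $\sigma_\tau\sim\sqrt{\tau}$. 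Each $\partial_n$ contributes at most a polynomial factor of order $|\bm x-\bm\mu_j|/\sigma_\tau^2$, which does not affect the exponential rate. For general smooth data I would state only the conditional version: if the log-ratios of the remainder to $q$ satisfy such Gaussian-type bounds, the same estimate follows.
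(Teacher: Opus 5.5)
Your proposal is correct and follows the paper's proof: factor $p_\tau=(p_\tau^{(1)}+p_\tau^{(2)})(1+r)$, take logarithms, and differentiate once and twice along the normal; the paper then justifies the exponential smallness only by appeal to standard heat-kernel suppression. Your expansion of $\partial_n^2\log(1+r)$ and your Gaussian-mixture estimate, phrased via the squared-distance gap (a sharper form of the paper's phrasing), usefully make that last claim precise, with two minor fixes: the bound $|\partial_n^2 r|+|\partial_n r|^2$ requires $r\ge 0$, and the ratio bound on $\Gamma_\tau$ should carry the constant prefactor $w_j/w_1$.
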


\begin{proof}
Since $p_\tau = (p_\tau^{(1)} + p_\tau^{(2)})(1+r)$,
\[
  \log p_\tau = \log\bigl(p_\tau^{(1)} + p_\tau^{(2)}\bigr) + \log(1+r),
\]
so differentiating once and twice along the normal direction gives \eqref{eq:binary-remainder-score} and \eqref{eq:binary-remainder-hessian}.
The exponential smallness is the standard heat-kernel suppression of a farther mode relative to the two competing ones.
\end{proof}

\subsection{The Gaussian specialization and the spectral threshold}\label{sec:shocks:characteristics}

For the symmetric Gaussian model there is nothing subtle left: the exact local criterion reduces to the midpoint-derivative test, and that is the same condition picked out by the spectral criterion of \citet{biroli2024dynamical}.

\begin{theorem}[Gaussian specialization: speciation criterion = spectral threshold]\label{thm:shock-speciation}
For the symmetric binary Gaussian mixture~\eqref{eq:binary-gmm}, the exact local criterion of \Cref{thm:local-speciation} reduces to the midpoint-derivative criterion, and the following two quantities coincide:
\begin{enumerate}[label=\textup{(\roman*)},nosep]
\item The critical diffusion time $\tau^\ast$ at which $s_x(0,\tau^\ast) = 0$ (equivalently, the one-dimensional Hessian of $\log p_{\tau^\ast}$ vanishes at the mode boundary).
\item The speciation time of \citet{biroli2024dynamical}, defined as the time at which the largest non-trivial eigenvalue of the noised data covariance equals the noise variance.
\end{enumerate}
In Burgers terms, $\tau^\ast$ is the threshold at which the inter-mode layer changes from a single-attractor profile to a split, shock-like interface.
\end{theorem}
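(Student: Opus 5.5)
We take the canonical binary decomposition $p_\tau^{(1)} = \tfrac12\mathcal{N}(x;a,\sigma_\tau^2)$, $p_\tau^{(2)} = \tfrac12\mathcal{N}(x;-a,\sigma_\tau^2)$. Each is positive and solves the heat equation \eqref{eq:heat}, because $\sigma_\tau^2=\sigma_0^2+2\tau$ and a Gaussian of variance $\sigma_0^2+2\tau$ is the heat evolution of a Gaussian of variance $\sigma_0^2$. So \Cref{thm:local-tanh,thm:local-speciation} apply. A direct computation gives $\phi(x,\tau)=2ax/\sigma_\tau^2$ and $\bar s(x,\tau) = -x/\sigma_\tau^2$. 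Hence the boundary is $\Gamma_\tau=\{0\}$, which is regular since $\kappa = 2a/\sigma_\tau^2\neq 0$, the normal direction is $\hat n = +1$, and $\partial_n\bar s_n = -1/\sigma_\tau^2$. Substituting into \eqref{eq:local-normal-criterion} gives
\[
  \partial_n s_n\big|_{0} = -\frac{1}{\sigma_\tau^2} + \frac{a^2}{\sigma_\tau^4} = \frac{a^2-\sigma_\tau^2}{\sigma_\tau^4},
\]
which is exactly $s_x(0,\tau)$ from \Cref{prop:sx-origin}. So the local criterion \eqref{eq:local-bimodal-criterion} coincides with the midpoint-derivative criterion. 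Its zero set is $\sigma_\tau^2=a^2$, i.e.\ $\tau=\tau^\ast$ of \Cref{def:tau-star}, and this establishes (i).

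For (ii) we compute the covariance of the noised data directly. The mean is zero and the variance is $\Var_{p_\tau}(X) = a^2 + \sigma_\tau^2$. In the sense of \eqref{eq:W-def}, this splits into the between-class part $W=a^2$ (since $w_\pm=\tfrac12$, $\nu_\pm=\pm a$) and the isotropic noise part $\sigma_\tau^2$. We read the criterion of \citet{biroli2024dynamical} as follows: the largest non-trivial eigenvalue $\lambda_1(W)$, i.e.\ the signal eigenvalue carried by the class structure, equals the noise variance. This gives $a^2=\sigma_\tau^2$, the same equation as in (i). In $d$ dimensions with means $\pm\bm a$, the same argument gives $\lambda_1 = |\bm a|^2$ along $\bm e_1=\bm a/|\bm a|$, and the normal slice reduces to the one-dimensional computation.

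The main obstacle is pinning down the convention in (ii). Biroli et al.\ phrase the threshold in terms of the data covariance spectrum, with their own time scaling, e.g.\ $e^{-2t}$ factors in the VP setting. We must therefore state precisely that the "non-trivial eigenvalue" is the between-class eigenvalue $\lambda_1(W)$, not the full eigenvalue $\lambda_1+\sigma_\tau^2$, and translate their noise variance into $\sigma_\tau^2$ in VE $\tau$-time. Including $\sigma_0^2$ on the signal side would shift the threshold, so we will define it carefully: within-class variance is treated as part of the noise, matching $\sigma_\tau^2=\sigma_0^2+2\tau$. The closing Burgers interpretation then follows from \Cref{prop:shock-profile}. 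For $\tau>\tau^\ast$, $s_x(0)<0$ and the profile has a single attractor at $0$. For $\tau<\tau^\ast$, the $\tanh$ layer term $a^2/\sigma_\tau^4$ dominates and $0$ becomes repelling, so the profile splits around a shock-like interface.
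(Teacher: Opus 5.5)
Your proposal is correct and follows the paper's argument. Part (ii) is exactly the paper's proof: compute $W=a^2$ and read the criterion of \citet{biroli2024dynamical} as $\lambda_1(W)/\sigma_\tau^2=1$, i.e.\ $\sigma_\tau^2=a^2$. Your explicit computation of $\phi=2ax/\sigma_\tau^2$, $\bar s=-x/\sigma_\tau^2$ and $\kappa=2a/\sigma_\tau^2$ simply spells out the reduction of \eqref{eq:local-normal-criterion} to \Cref{prop:sx-origin}, which the paper leaves to \Cref{rem:local-universal}. The convention you propose to fix is the one the paper adopts without comment: the between-class eigenvalue $\lambda_1(W)$ is compared with the full variance $\sigma_\tau^2=\sigma_0^2+2\tau$. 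Under that reading the two thresholds coincide exactly; if $\sigma_0^2$ were counted on the signal side instead, they would not.
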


\begin{proof}
For the one-dimensional symmetric binary mixture, the between-class covariance~\eqref{eq:W-def} is $W = w_1 \nu_1^2 + w_2 \nu_2^2 = \tfrac{1}{2}a^2 + \tfrac{1}{2}a^2 = a^2$ (a scalar, since $d=1$, with $\nu_1 = -a$, $\nu_2 = a$).
The spectral criterion of \citet{biroli2024dynamical} states that speciation occurs when the signal-to-noise ratio---the ratio of the largest between-class eigenvalue to the noise variance---crosses unity:
\begin{equation}\label{eq:biroli-criterion}
  \frac{\lambda_1^{(W)}}{\sigma_\tau^2} = 1
  \quad\Longleftrightarrow\quad
  \sigma_\tau^2 = \lambda_1^{(W)} = a^2
  \quad\Longleftrightarrow\quad
  \tau = \frac{a^2 - \sigma_0^2}{2} = \tau^\ast.
\end{equation}
This is identical to~\eqref{eq:tau-star}.
\end{proof}

\begin{remark}[Interpretation]\label{rem:interpretation}
In the symmetric Gaussian case, \Cref{thm:shock-speciation} agrees with two standard ways of locating the transition: the midpoint derivative changes sign, and the spectral signal-to-noise ratio crosses one. The Burgers language is not doing a separate characteristic calculation here. What it does give is a more concrete picture of the interface itself: its profile, its width, and the Rankine--Hugoniot motion law.
\end{remark}

\subsection{The Rankine--Hugoniot condition for asymmetric mixtures}\label{sec:shocks:rh}

For unequal-weight mixtures ($w_1 \neq w_2$), the shock is located at a point $x_s(\tau) \neq 0$ that drifts as $\tau$ changes.
Its motion is governed by the Rankine--Hugoniot condition~\citep{rankine1870thermodynamic, hugoniot1889propagation, lax1957hyperbolic}:

\begin{proposition}[Decision boundary dynamics]\label{prop:rh}
In the inviscid limit, the location $x_s(\tau)$ of the score shock between two modes satisfies
\begin{equation}\label{eq:rh}
  \frac{dx_s}{d\tau} = -\bigl(s_L(\tau) + s_R(\tau)\bigr),
\end{equation}
where $s_L$ and $s_R$ are the score values on the left and right sides of the shock.
\end{proposition}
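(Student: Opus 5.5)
The plan is to start from the conservation form of the score PDE in \Cref{rem:conservation}, $\partial_\tau s = \partial_x(s_x + s^2)$, and derive the jump condition for a discontinuity as in the classical Rankine--Hugoniot argument. Written as $s_\tau + \partial_x F(s) = 0$, the inviscid flux is $F(s) = -s^2$. The viscous term $s_x$ must be dropped, and justifying this is where the inviscid limit enters. For a piecewise smooth weak solution with a single discontinuity along the curve $x = x_s(\tau)$, the Rankine--Hugoniot relation gives
\[
\frac{dx_s}{d\tau} = \frac{F(s_R) - F(s_L)}{s_R - s_L} = \frac{-s_R^2 + s_L^2}{s_R - s_L} = -(s_L + s_R),
\]
which is exactly \eqref{eq:rh}. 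As a consistency check, the same formula follows from the Burgers form $u_\tau + (u^2/2)_x = 0$ with $u = -2s$: there the shock speed is $(u_L + u_R)/2 = -(s_L + s_R)$.

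To derive the jump relation itself, I would use the standard control-volume argument. Fix $x_1 < x_s(\tau) < x_2$ and integrate the conservation law over $[x_1, x_2]$, splitting the integral at $x_s(\tau)$ and applying the Leibniz rule to each piece. This gives
\[
\frac{d}{d\tau}\int_{x_1}^{x_2} s\,dx = (s_L - s_R)\,\dot x_s + \int_{x_1}^{x_s}\partial_\tau s\,dx + \int_{x_s}^{x_2}\partial_\tau s\,dx .
\]
The left-hand side equals the flux difference $[s_x + s^2]_{x_1}^{x_2}$. Letting $x_1, x_2 \to x_s$, the smooth-region integrals vanish, and the flux difference becomes $s_R^2 - s_L^2$ plus the jump in $s_x$. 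In the inviscid limit the viscous flux $s_x$ is continuous and bounded away from the shock, so its jump contributes nothing. This leaves $(s_L - s_R)\dot x_s = s_R^2 - s_L^2$, and dividing by $s_L - s_R \neq 0$ yields the claim.

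The main obstacle is making the inviscid limit meaningful, since for any positive heat solution the score is smooth and there is no genuine discontinuity. I would handle this through the viscous layer itself, using \Cref{thm:local-tanh} together with \Cref{prop:local-phi-pde} as a cross-check. The interface is the level set $\phi = 0$, and differentiating $\phi(x_s(\tau), \tau) = 0$ gives $\dot x_s = -\phi_\tau/\phi_x = -(\phi_{xx} + 2\bar s\phi_x)/\phi_x$. Across a sharp layer, $\bar s = \tfrac12(s_1 + s_2)$, where $s_1$ and $s_2$ are the one-sided component scores, which approximate $s_L$ and $s_R$ away from the layer. The term $\phi_{xx}/\phi_x$ is lower order as the layer width $\delta \to 0$. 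This recovers $\dot x_s \approx -(s_L + s_R)$, with the viscous correction explicitly identified. I would state the proposition in this asymptotic sense.
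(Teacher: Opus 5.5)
Your core step is the same as the paper's: Rankine--Hugoniot applied to the conservation form \eqref{eq:score-conservation}, with the viscous flux discarded. Two things differ, and both favour your version.

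First, the sign. The paper writes $\dot x_s=[s_x+s^2]/[s]=s_L+s_R$, which treats $+s^2$ as the flux. It then attributes the minus sign in \eqref{eq:rh} to reversing the time orientation of the generative process. But \eqref{eq:rh} is stated in forward $\tau$. Since the equation reads $\partial_\tau s-\partial_x(s^2)=0$, the flux in the standard convention is $-s^2$, exactly as you have it, and the minus sign appears directly with no appeal to time reversal. The asymmetric Gaussian case confirms your sign. For $w_1\mathcal{N}(x;\mu_1,\sigma_\tau^2)+w_2\mathcal{N}(x;\mu_2,\sigma_\tau^2)$ the zero of $\phi$ is $x_s=\tfrac12(\mu_1+\mu_2)-\sigma_\tau^2\log(w_1/w_2)/(\mu_1-\mu_2)$, and $\dot x_s=-2\log(w_1/w_2)/(\mu_1-\mu_2)=-(s_1+s_2)(x_s)$.

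Second, the paper simply asserts $[s_x]=0$ ``in the distributional sense''. Your level-set argument via \Cref{prop:local-phi-pde} does more: it gives the exact unit-viscosity identity $\dot x_s=-(s_1+s_2)(x_s)-\phi_{xx}/\phi_x|_{x_s}$. Up to identifying $s_L,s_R$ with the component scores, the last term is precisely the viscous-flux contribution $[s_x]/(s_L-s_R)$ in your control-volume balance. This is because outside the layer $s_x\approx\partial_x s_1$ on one side and $\partial_x s_2$ on the other, and $\partial_x(s_1-s_2)=\phi_{xx}$.

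So your remark in the second paragraph that $s_x$ is continuous across the shock is not quite right in general; it jumps by $\pm\phi_{xx}$. What makes the term negligible is that $\phi_{xx}/\phi_x$ is lower order in the large-scale (hyperbolic) scaling, which is what your third paragraph argues. For equal-variance Gaussian components $\phi$ is affine in $x$, so the correction vanishes identically.

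The paper's route buys brevity. Yours buys a sign derivation consistent with the statement as written, and a quantified meaning for ``inviscid limit''.
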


\begin{proof}
For the inviscid score equation in conservation form~\eqref{eq:score-conservation}, the standard Rankine--Hugoniot jump condition~\citep[Thm.~3.4.1]{evans2010partial} gives the shock speed as
\[
  \dot{x}_s = \frac{[s_x + s^2]}{[s]} = \frac{[s^2]}{[s]} = s_L + s_R,
\]
where $[\cdot]$ denotes the jump across the shock and we used $[s_x] = 0$ in the distributional sense for the shock solution.
(For the flux $f(s) = s^2$, the Rankine--Hugoniot speed is $(f(s_R) - f(s_L))/(s_R - s_L) = s_L + s_R$.)
The minus sign in~\eqref{eq:rh} arises from reversing the time orientation when tracing the generative (reverse) process.
For $w_1 = w_2$, symmetry gives $s_L = -s_R$ at $x = 0$, hence $\dot{x}_s = 0$: the shock is stationary.
For $w_1 \neq w_2$, the boundary drifts toward the minority component.
\end{proof}

\subsection{The Lax entropy condition and mode stability}\label{sec:shocks:entropy}

In one spatial dimension, the physical relevance of Burgers shocks is determined by the Lax entropy condition~\citep{lax1957hyperbolic}: a shock with left state $u_L$ and right state $u_R$ is admissible (entropy-satisfying) if and only if $u_L > u_R$.
Translating to the score ($u = -2s$), this becomes $s_L < s_R$: the score must jump from a lower value (pointing toward the left mode) to a higher value (pointing toward the right mode) as one crosses the boundary from left to right.

\begin{proposition}[Scalar entropy admissibility on boundary slices]\label{prop:entropy}
For any Gaussian mixture with well-separated modes, the scalar score profile along a one-dimensional normal slice through an inter-mode boundary satisfies the Lax entropy condition.
\end{proposition}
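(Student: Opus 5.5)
We would reduce the statement to the exact binary decomposition of \Cref{thm:local-tanh}, taken on a one-dimensional normal slice. For a Gaussian mixture, we take $p_\tau^{(1)}$ and $p_\tau^{(2)}$ to be the two competing groups of components: those on the left side of the boundary and those on the right. All other components go into $p_\tau^{(\mathrm{rem})}$, which is controlled by \Cref{prop:binary-remainder}. The boundary-normal coordinate is $n$, with $\hat{\bm n}=\nabla\phi/|\nabla\phi|$ pointing from the region dominated by $p^{(2)}$ into the region dominated by $p^{(1)}$. By \Cref{thm:local-tanh}, the normal score is then
\[
  s_n = \bar s_n + \tfrac12\tanh(\phi/2)\,\partial_n\phi .
\]
The left and right states of the slice profile are read off as the limits of $s_n$ on either side of the layer, i.e.\ for $|n|$ large compared with the width $2/\kappa$ but small compared with the mode separation. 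In that range $\tanh(\phi/2)\to\mp1$, so
\[
  s_L \approx \bar s_n - \tfrac12\partial_n\phi = \partial_n\log p^{(2)}, \qquad
  s_R \approx \bar s_n + \tfrac12\partial_n\phi = \partial_n\log p^{(1)}.
\]

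Translated through $u=-2s$, the Lax condition to verify is $s_L<s_R$. This is equivalent to the jump
\[
  [s] = s_R - s_L \approx \partial_n\phi = \kappa > 0,
\]
up to the variation of $\bar s_n$ across the layer. Positivity of $\kappa$ holds by construction: $\phi$ increases in the $\hat{\bm n}$ direction, and regularity of $\Gamma_\tau$ gives $\kappa\neq 0$. For Gaussian components with common variance $\sigma_\tau^2$, I would make this explicit. If each side is dominated by a single component $\mu_1$, $\mu_2$, then $\phi$ is affine with
\[
  \nabla\phi = (\bm\mu_1-\bm\mu_2)/\sigma_\tau^2,
\]
so $\kappa=|\bm\mu_1-\bm\mu_2|/\sigma_\tau^2$. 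The component scores are $-(\bm x-\bm\mu_k)/\sigma_\tau^2$, so $\bar s_n$ is the linear Gaussian drift $-\bm x\cdot\hat{\bm n}/\sigma_\tau^2+\text{const}$. Its variation across a layer of width $O(\sigma_\tau^2/\kappa\sigma_\tau^2\cdot\ldots)=O(2/\kappa)$ is $O(1/(\kappa\sigma_\tau^2))$. That is small relative to $\kappa$ exactly when
\[
  \kappa^2\sigma_\tau^2 = |\bm\mu_1-\bm\mu_2|^2/\sigma_\tau^2 \gg 1,
\]
which is the well-separated regime. The symmetric case of \Cref{prop:shock-profile} serves as a consistency check: there $\tilde u_L=2a/\sigma_\tau^2>\tilde u_R$ exactly, once the background is subtracted as in that proposition.

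Finally, I would add the remainder. By \Cref{prop:binary-remainder}, $s_n$ changes by $\partial_n\log(1+r)$. For well-separated modes this term is exponentially small in $(\text{distance})^2/\sigma_\tau^2$, so it cannot overturn a jump of size $\kappa$. If several components lie on one side, $p^{(k)}$ is a sub-mixture; $\partial_n\phi$ on $\Gamma_\tau$ is still positive by the choice of orientation and regularity.

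The main obstacle is making ``left and right states'' precise for a viscous, non-constant background, since $\bar s_n$ is itself linear in $n$. My first choice is to handle this as \Cref{prop:shock-profile} does: subtract $\bar s_n$ and state the entropy condition for the background-subtracted profile. There the states are exactly $\mp\kappa/2$ up to $O(n)$ by \eqref{eq:local-normal-profile}, and then $\tilde u_L=\kappa>-\kappa=\tilde u_R$ holds exactly. The well-separation hypothesis is then used only to control the remainder and the $O(n)$ curvature terms.
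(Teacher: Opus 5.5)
Your proposal is correct but follows a different route from the paper. The paper's proof is a plain sign argument. Between adjacent means $\mu_j<\mu_k$, a single component dominates on each side of the boundary, so $s\approx-(x-\mu_j)/\sigma_\tau^2<0$ on the left and $s\approx-(x-\mu_k)/\sigma_\tau^2>0$ on the right, giving $s_L<0<s_R$. It never estimates the size of the jump. It is therefore insensitive to the linear background: each one-component score keeps a fixed sign as long as the evaluation point stays between the two means.

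You instead pass through \Cref{thm:local-tanh}. You identify $s_L\approx\partial_n\log p_\tau^{(2)}$ and $s_R\approx\partial_n\log p_\tau^{(1)}$, and obtain the jump $s_R-s_L\approx\kappa$, whose positivity comes from the orientation of $\hat{\bm n}$. You then have to check that the drift $2L/\sigma_\tau^2$ of $\bar s_n$ across the evaluation window does not cancel $\kappa$. Your route buys a quantitative jump and an explicit form of ``well-separated'': $|\bm\mu_1-\bm\mu_2|^2/\sigma_\tau^2\gg1$ is exactly the condition for a window $2/\kappa\ll L<|\bm\mu_1-\bm\mu_2|/2$ to exist. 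It also gives direct consistency with \Cref{prop:shock-profile} and a template that makes sense for any binary heat decomposition. The paper's version is shorter and yields the stronger sign information $s_L<0<s_R$.

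Two small cautions.
\begin{enumerate}
\item \eqref{eq:local-normal-profile} is an expansion as $n\to0$, so it cannot by itself deliver the saturated states $\mp\kappa/2$ at $|n|\gg2/\kappa$. For a single pair of equal-variance Gaussians this is harmless, because $\phi$ is exactly affine and the subtracted profile is exactly $\tfrac{\kappa}{2}\tanh(\kappa n/2)$. For sub-mixtures on one side, however, you need well-separation (via \Cref{prop:binary-remainder}) to make $\partial_n\phi$ nearly constant across the layer; positivity on $\Gamma_\tau$ alone is not enough.
\item The background-subtracted version is not literally the statement asked. It is your first, unsubtracted estimate that proves the proposition as stated.
\end{enumerate}
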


\begin{proof}
Between two adjacent modes with means $\mu_j < \mu_k$, the score far to the left of the boundary is dominated by component $j$: $s \approx -(x - \mu_j)/\sigma_\tau^2 < 0$ for $x$ between the modes (since $x > \mu_j$).
Far to the right, $s \approx -(x - \mu_k)/\sigma_\tau^2 > 0$ (since $x < \mu_k$).
Hence $s_L < 0 < s_R$, and the Lax condition $s_L < s_R$ is satisfied.
\end{proof}

A learned score network that violates the scalar entropy condition on such a slice would correspond to an ``entropy-violating weak solution'' of the Burgers equation~\citep{lax1957hyperbolic}---a non-physical shock that can cause spurious mode creation or mode collapse in the generated distribution.
This provides a useful diagnostic: one can check the Lax condition along estimated boundary-normal slices to detect pathological score network behavior.

For completeness, \Cref{fig:pde-verification} verifies the score PDE and Burgers equation directly by finite-difference residuals; the errors remain at machine precision throughout the tested diffusion times.

\begin{figure}[t]
\centering
\includegraphics[width=\linewidth]{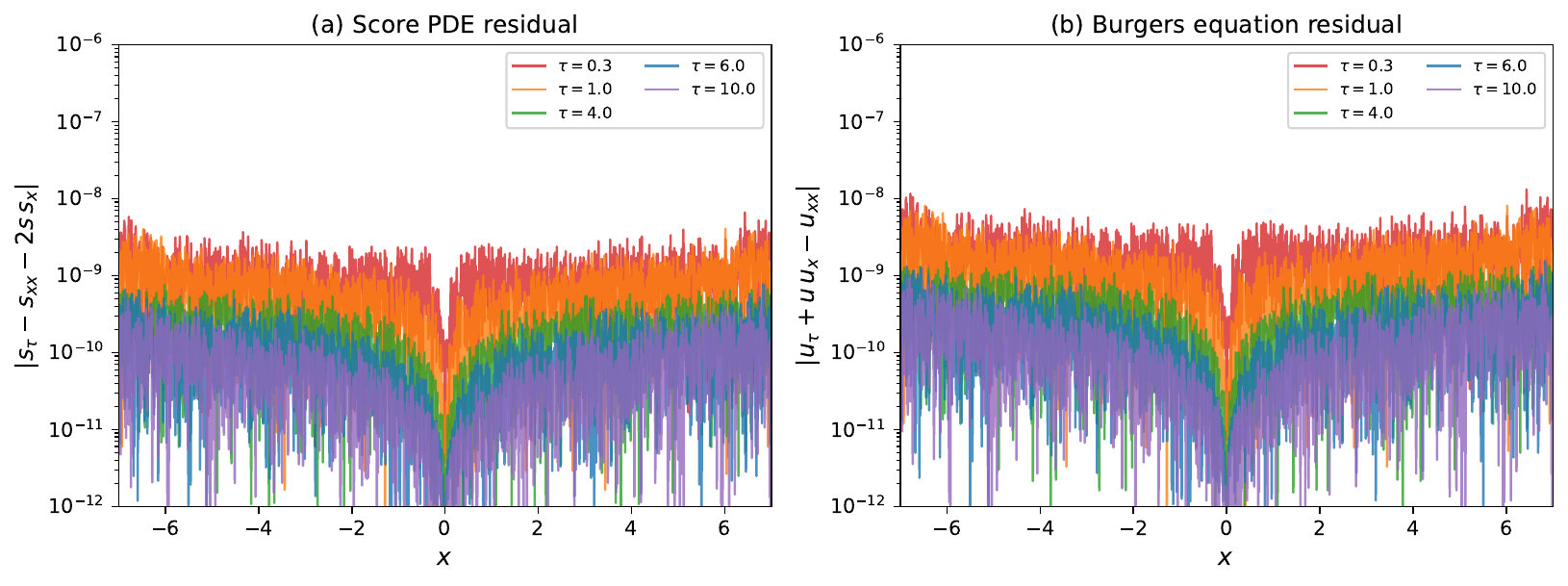}
\caption{Residual checks for the PDE identities. Panel~(a) reports $|s_\tau - s_{xx} - 2s s_x|$ at several diffusion times. Panel~(b) reports the Burgers residual $|u_\tau + uu_x - u_{xx}|$ after the change of variables $u=-2s$. Both remain below $10^{-8}$ on the tested grid.}
\label{fig:pde-verification}
\end{figure}

\section{Error Amplification at Score Shocks}\label{sec:amplification}

With the interfacial structure now established---exactly in the symmetric Gaussian model and locally in general through the binary-boundary theorem---we turn to its main dynamical consequence for generation.
Errors in the score are amplified when trajectories traverse the boundary layer, and in the symmetric Gaussian case this amplification can be computed in closed form as a function of the signal-to-noise ratio.
The resulting growth factor and the associated trajectory bifurcation are displayed in \Cref{fig:amplification-trajectories}.

\subsection{Trajectory divergence near the interfacial layer}\label{sec:amp:trajectory}

The probability flow ODE~\eqref{eq:prob-flow} for the VE-SDE (in $\tau$-time, running backward from $\tau_T$ to $0$) reduces to
\begin{equation}\label{eq:prob-flow-tau}
  \frac{dx}{d\tau} = -s(x,\tau), \qquad \tau \text{ decreasing from } \tau_T \text{ to } 0.
\end{equation}
Linearizing around a trajectory $x(\tau)$, a small perturbation $\xi(\tau) = \delta x(\tau)$ satisfies
\begin{equation}\label{eq:linearized-flow}
  \frac{d\xi}{d\tau} = -s_x(x(\tau),\tau)\,\xi.
\end{equation}
For a trajectory passing through the shock center at $x \approx 0$, the local growth rate is $-s_x(0,\tau)$.

\begin{proposition}[Lyapunov exponent at the shock]\label{prop:lyapunov}
For the symmetric binary mixture~\eqref{eq:binary-gmm} with $\tau < \tau^\ast$, the reverse-time Lyapunov exponent at the mode boundary is
\begin{equation}\label{eq:lyapunov}
  \lambda(\tau) = s_x(0,\tau) = \frac{a^2 - \sigma_\tau^2}{\sigma_\tau^4} > 0.
\end{equation}
Nearby generative trajectories diverge locally at rate $\lambda(\tau)$ during the reverse process.
\end{proposition}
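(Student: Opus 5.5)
The plan is to work directly with the linearized probability flow \eqref{eq:linearized-flow} and to evaluate its coefficient at the mode boundary using \Cref{prop:sx-origin}. Three observations carry the argument.

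First, by symmetry of the mixture \eqref{eq:binary-gmm}, the score \eqref{eq:exact-score} is odd in $x$, so $s(0,\tau)=0$ for every $\tau$. Hence $x(\tau)\equiv 0$ is an exact (equilibrium) trajectory of \eqref{eq:prob-flow-tau}. This makes the linearization about the boundary point legitimate rather than heuristic: along this trajectory the coefficient in \eqref{eq:linearized-flow} is exactly $-s_x(0,\tau)$.

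Second, I need the orientation of time. The generative process runs $\tau$ downward, so I introduce reverse time $r=\tau_T-\tau$ with $d/dr=-d/d\tau$. Then \eqref{eq:linearized-flow} becomes
\[
\frac{d\xi}{dr} = s_x\bigl(0,\tau_T-r\bigr)\,\xi .
\]
The instantaneous exponential growth rate of $|\xi|$ in the reverse direction is therefore $\lambda(\tau)=s_x(0,\tau)$. Substituting \Cref{prop:sx-origin} gives the closed form $(a^2-\sigma_\tau^2)/\sigma_\tau^4$.

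Third, the sign. Since $\sigma_\tau^2=\sigma_0^2+2\tau$ is strictly increasing in $\tau$ and equals $a^2$ at $\tau^\ast$ (\Cref{def:tau-star}), we have $\sigma_\tau^2<a^2$ for $\tau<\tau^\ast$. This yields $\lambda(\tau)>0$. Integrating the scalar linear ODE gives
\[
\xi(r_2)=\xi(r_1)\exp\!\Bigl(\int_{r_1}^{r_2}\lambda\,dr\Bigr),
\]
so perturbations grow and nearby trajectories separate.

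The main obstacle is making ``nearby trajectories diverge locally'' precise for trajectories that are not exactly at $0$. My approach is continuity. $s_x(x,\tau)$ is continuous, and by the explicit formula in the proof of \Cref{prop:sx-origin} it is bounded below by $\lambda(\tau)-\varepsilon$ for $|x|\le\eta$, on any compact $\tau$-interval inside $(0,\tau^\ast)$. For any two trajectories remaining in that strip, the mean value theorem applied to $s$ gives
\[
\frac{d}{dr}|x_1-x_2|\ \ge\ (\lambda-\varepsilon)\,|x_1-x_2| .
\]
This is a Gr\"onwall-type lower bound, and it delivers local divergence at rate essentially $\lambda(\tau)$.
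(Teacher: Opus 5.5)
Your proposal is correct and follows the same route as the paper. Both rewrite the linearized flow \eqref{eq:linearized-flow} in reverse time $\tau_T-\tau$ to get $d\xi/dr = s_x(0,\tau)\,\xi$, then read off the positivity of $s_x(0,\tau)$ for $\tau<\tau^\ast$ from \Cref{prop:sx-origin}. Your two additions sharpen what the paper leaves heuristic (``a trajectory passing through the shock center at $x\approx 0$'') without changing the argument: the oddness of the score makes $x\equiv 0$ an exact trajectory, so the linearization there is exact, and the uniform-continuity Gr\"onwall lower bound applies to pairs of trajectories staying in a strip $|x|\le\eta$.
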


\begin{proof}
In the reverse direction ($\tau$ decreasing), the perturbation equation~\eqref{eq:linearized-flow} becomes $d\xi/d\sigma = s_x(0,\tau)\,\xi$ with $\sigma = \tau_T - \tau$ increasing.
Since $s_x(0,\tau) > 0$ for $\tau < \tau^\ast$ by \Cref{prop:sx-origin}, perturbations grow.
This local trajectory divergence is the hallmark of the speciation bifurcation: infinitesimally close initial conditions lead to macroscopically different modes~\citep{raya2023spontaneous, biroli2024dynamical}.
\end{proof}

\subsection{The Gr\"onwall bound with score error}\label{sec:amp:gronwall}

We next examine how score-estimation errors are amplified near the interfacial layer. The first result is a general trajectory-stability bound for the probability flow ODE; the second specializes it to the symmetric binary mixture and gives a closed-form exponent.
Let $\hat{s}(x,\tau)$ be a learned score approximation with pointwise error bounded by $\varepsilon(\tau)$.

\begin{theorem}[Trajectory error amplification]\label{thm:trajectory-error}
Let $x(\tau)$ and $\hat{x}(\tau)$ be trajectories of the probability flow ODE~\eqref{eq:prob-flow-tau} driven by the true score $s$ and approximate score $\hat{s}$ respectively, starting from the same initial point $x(\tau_T) = \hat{x}(\tau_T)$.
Define the trajectory error $e(\tau) = |x(\tau) - \hat{x}(\tau)|$ and the uniform score error $\varepsilon_0 = \sup_\tau \|\hat{s}(\cdot,\tau) - s(\cdot,\tau)\|_{L^\infty}$.
Then for all $\tau \in [0,\tau_T]$:
\begin{equation}\label{eq:gronwall-bound}
  |e(\tau)|
  \leq \varepsilon_0 \int_\tau^{\tau_T}
    \exp\!\left(\int_\tau^{\tau'} |s_x(\xi(\tau''),\tau'')|\,d\tau''\right) d\tau',
\end{equation}
where $\xi(\tau'')$ lies between $x(\tau'')$ and $\hat{x}(\tau'')$.
\end{theorem}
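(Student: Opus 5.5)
The bound follows from the standard Gr\"onwall argument applied to the difference of the two probability flow ODEs~\eqref{eq:prob-flow-tau}. It is convenient to work in reversed time $\sigma = \tau_T - \tau$, so both trajectories start at $\sigma=0$ from the same point, and to write the signed error $\eta(\tau) = \hat{x}(\tau) - x(\tau)$. Subtracting the two ODEs gives
\[
  \frac{d\eta}{d\tau} = -\bigl[\hat{s}(\hat{x},\tau) - s(x,\tau)\bigr]
  = -\bigl[s(\hat{x},\tau) - s(x,\tau)\bigr] - \bigl[\hat{s}(\hat{x},\tau) - s(\hat{x},\tau)\bigr].
\]
For the first bracket I would invoke the mean value theorem in $x$; this uses the smoothness of $s(\cdot,\tau)$ for $\tau>0$ noted in \Cref{sec:prelim:tau}. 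It gives $s(\hat{x},\tau)-s(x,\tau) = s_x(\xi(\tau),\tau)\,\eta(\tau)$ for some $\xi(\tau)$ between $x(\tau)$ and $\hat{x}(\tau)$. The second bracket is a forcing term $r(\tau)$ with $|r(\tau)|\le\varepsilon_0$.

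This yields a scalar linear ODE,
\[
  \frac{d\eta}{d\tau} = -a(\tau)\,\eta - r(\tau), \qquad a(\tau) = s_x(\xi(\tau),\tau), \qquad \eta(\tau_T)=0.
\]
Measurability of $a$ is harmless: on $\{\eta\neq 0\}$ we can define $a = (s(\hat{x})-s(x))/\eta$ directly, which is continuous, and set $a = s_x(x,\tau)$ where $\eta=0$. I would then solve by Duhamel's formula, integrating backward from $\tau_T$ with the integrating factor $\exp(-\int_\tau^{\tau'} a)$:
\[
  \eta(\tau) = \int_\tau^{\tau_T} \exp\!\left(\int_\tau^{\tau'} a(\tau'')\,d\tau''\right) r(\tau')\,d\tau'.
\]
The sign of the exponent needs one check. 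Differentiating the candidate in $\tau$ gives $-r(\tau) - a(\tau)\eta(\tau)$, which matches the ODE.

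Finally, I would take absolute values, use $|r|\le\varepsilon_0$, and bound $a \le |a|$ inside the exponent to obtain~\eqref{eq:gronwall-bound}. An alternative that sidesteps the explicit formula is to apply the differential Gr\"onwall inequality to $|\eta|$: with $\frac{d}{d\sigma}|\eta| \le |s_x(\xi,\cdot)|\,|\eta| + \varepsilon_0$ almost everywhere, the comparison lemma gives the same integral.

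The main obstacle is regularity near $\tau=0$. The statement allows $\tau\in[0,\tau_T]$, and existence and uniqueness of the $\hat{x}$ trajectory needs $\hat{s}$ to be, say, locally Lipschitz in $x$. I would assume this implicitly, as in the statement, and treat $\tau=0$ as a limit, using that $s(\cdot,0)$ is smooth whenever $p_0$ is smooth and positive. For Gaussian mixtures with $\sigma_0>0$ this holds. Since the bound involves only $s_x$ along the segment between the two trajectories, no global Lipschitz constant is needed.
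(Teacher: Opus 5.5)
Your proposal is correct and follows essentially the same route as the paper. You linearize the difference of the two flows with the mean value theorem, solve the resulting scalar linear ODE by variation of constants (the paper does this in reversed time $\sigma=\tau_T-\tau$, you integrate backward in $\tau$, which is equivalent), and then bound $|\hat s - s|\le\varepsilon_0$ and $s_x\le|s_x|$; your signed error $\eta=\hat x - x$ and your remarks on measurability of $a(\tau)$ and regularity at $\tau=0$ are a more careful version of the paper's argument, which treats $e$ as signed in the ODE even though it defines $e$ as an absolute value.
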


\begin{proof}
The trajectory error satisfies the differential equation (with $\tau$ decreasing):
\begin{equation}\label{eq:error-ode}
  \frac{de}{d\tau}
  = -s(x,\tau) + \hat{s}(\hat{x},\tau)
  = -\bigl[s(x,\tau) - s(\hat{x},\tau)\bigr] - \bigl[s(\hat{x},\tau) - \hat{s}(\hat{x},\tau)\bigr].
\end{equation}
By the mean value theorem, $s(x,\tau) - s(\hat{x},\tau) = s_x(\xi,\tau)\,(x - \hat{x})$ for some $\xi$ between $x$ and $\hat{x}$.
Thus $de/d\tau = -s_x(\xi,\tau)\,e + \epsilon(\hat{x},\tau)$, where $\epsilon = \hat{s} - s$ satisfies $|\epsilon| \leq \varepsilon_0$.

Switching to forward-in-reverse time $\sigma = \tau_T - \tau$:
\begin{equation}\label{eq:error-forward}
  \frac{de}{d\sigma} = s_x(\xi, \tau_T - \sigma)\,e + \epsilon(\hat{x}, \tau_T - \sigma).
\end{equation}
This is a linear inhomogeneous ODE.
The variation of constants formula~\citep{gronwall1919note} gives
\[
  e(\sigma)
  = \int_0^\sigma \epsilon(\sigma')\,
    \exp\!\left(\int_{\sigma'}^\sigma s_x(\xi(\sigma''),\tau_T - \sigma'')\,d\sigma''\right) d\sigma'.
\]
Bounding $|\epsilon| \leq \varepsilon_0$ and reverting to $\tau$-time yields~\eqref{eq:gronwall-bound}.
\end{proof}

\subsection{The amplification exponent in closed form}\label{sec:amp:exponent}

For the symmetric binary mixture, the integral $\int |s_x|\,d\tau$ in the bound~\eqref{eq:gronwall-bound} can then be evaluated exactly.

\begin{theorem}[Amplification exponent]\label{thm:amplification}
For the symmetric binary GMM~\eqref{eq:binary-gmm}, the amplification exponent for a trajectory through the shock center is
\begin{equation}\label{eq:Lambda}
  \Lambda(\tau)
  \coloneqq \int_\tau^{\tau^\ast} s_x(0,\tau')\,d\tau'
  = \frac{1}{2}\!\left[\frac{a^2}{\sigma_\tau^2} - 1 - \ln\!\frac{a^2}{\sigma_\tau^2}\right]
\end{equation}
for $\tau < \tau^\ast$, where $\sigma_\tau^2 = \sigma_0^2 + 2\tau$.
The amplification factor is $\exp(\Lambda)$.
\end{theorem}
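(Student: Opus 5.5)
The plan is a direct computation. The amplification exponent is defined as the integral of the explicit midpoint derivative from \Cref{prop:sx-origin}, so the task is to evaluate
\[
  \Lambda(\tau)=\int_\tau^{\tau^\ast}\frac{a^2-\sigma_{\tau'}^2}{\sigma_{\tau'}^4}\,d\tau'
\]
in closed form. Because $\sigma_{\tau'}^2=\sigma_0^2+2\tau'$ is affine in $\tau'$, I will substitute $v=\sigma_{\tau'}^2$, so that $d\tau'=dv/2$. The limits then become $v=\sigma_\tau^2$ at the lower end and $v=\sigma_{\tau^\ast}^2=a^2$ at the upper end, using \Cref{def:tau-star}. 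The integral becomes
\[
  \Lambda(\tau)=\frac12\int_{\sigma_\tau^2}^{a^2}\Bigl(\frac{a^2}{v^2}-\frac1v\Bigr)dv .
\]

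Next I will integrate term by term. The antiderivative is $-a^2/v-\ln v$, and evaluating between the limits gives
\[
  \frac12\Bigl[\bigl(-1-\ln a^2\bigr)-\bigl(-a^2/\sigma_\tau^2-\ln\sigma_\tau^2\bigr)\Bigr]
  =\frac12\Bigl[\frac{a^2}{\sigma_\tau^2}-1-\ln\frac{a^2}{\sigma_\tau^2}\Bigr],
\]
which is \eqref{eq:Lambda}. As a sanity check, I will note that the expression has the form $\tfrac12(r-1-\ln r)$ with $r=a^2/\sigma_\tau^2>1$ for $\tau<\tau^\ast$, so it is positive. This matches the positivity of the integrand on $(\tau,\tau^\ast)$ established in \Cref{prop:lyapunov}.

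The only step needing justification is why $\exp(\Lambda)$ is the amplification factor. I would tie it to \Cref{thm:trajectory-error}. Take a trajectory sitting at the shock center $x=0$, which is invariant under the flow because $s(0,\tau)=0$ by oddness of \eqref{eq:exact-score}. Its linearized perturbation obeys \eqref{eq:linearized-flow} with rate $s_x(0,\tau)$. Integrating that linear ODE in the reverse direction gives growth by $\exp\int s_x(0,\tau')\,d\tau'$.

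Restricting the integral to $[\tau,\tau^\ast]$ is the main interpretive point rather than a technical obstacle. Before $\tau^\ast$ we have $s_x(0,\cdot)<0$, which contracts perturbations, so only the interval below $\tau^\ast$ contributes growth. The net factor accumulated from $\tau^\ast$ down to $\tau$ is therefore $\exp(\Lambda(\tau))$. I will present this as the definition-level identification and keep the closed-form evaluation as the substantive content.
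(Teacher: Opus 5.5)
Your proposal is correct and matches the paper's proof. Both use the substitution $w=\sigma_{\tau'}^2$ (your $v$), the antiderivative $-a^2/w-\ln w$, and evaluation between $\sigma_\tau^2$ and $a^2$. Your added justification of $\exp(\Lambda)$, via the linearized flow about the invariant point $x=0$ (where $s(0,\tau)=0$), is sound and spells out an identification the paper only asserts.
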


\begin{proof}
From \Cref{prop:sx-origin}, $s_x(0,\tau') = (a^2 - \sigma_{\tau'}^2)/\sigma_{\tau'}^4$ with $\sigma_{\tau'}^2 = \sigma_0^2 + 2\tau'$.
Substituting $w = \sigma_0^2 + 2\tau'$ (so $dw = 2\,d\tau'$, and the limits transform as $\tau' = \tau \mapsto w = \sigma_\tau^2$ and $\tau' = \tau^\ast \mapsto w = a^2$):
\begin{align}
  \Lambda(\tau)
  &= \int_{\sigma_\tau^2}^{a^2} \frac{a^2 - w}{w^2}\,\frac{dw}{2}
   = \frac{1}{2}\int_{\sigma_\tau^2}^{a^2}\!\left(\frac{a^2}{w^2} - \frac{1}{w}\right) dw\notag\\
  &= \frac{1}{2}\!\left[-\frac{a^2}{w} - \ln w\right]_{\sigma_\tau^2}^{a^2}
   = \frac{1}{2}\!\left[\left(-1 - \ln a^2\right) - \left(-\frac{a^2}{\sigma_\tau^2} - \ln \sigma_\tau^2\right)\right]\notag\\
  &= \frac{1}{2}\!\left[\frac{a^2}{\sigma_\tau^2} - 1 - \ln\!\frac{a^2}{\sigma_\tau^2}\right]. \qedhere
\end{align}
\end{proof}

\begin{corollary}[Asymptotic amplification]\label{cor:asymptotic}
Define the signal-to-noise ratio $\mathrm{SNR} = a^2/\sigma_\tau^2$.
Then:
\begin{equation}\label{eq:Lambda-asymptotic}
  \Lambda \approx \frac{\mathrm{SNR}}{2}
  \qquad \text{for } \mathrm{SNR} \gg 1.
\end{equation}
The amplification factor grows as $\exp(a^2/(2\sigma_\tau^2))$, which is exponential in the SNR.
\end{corollary}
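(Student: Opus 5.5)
The plan is to work directly from the closed form in \Cref{thm:amplification}, writing the exponent as a function of the single variable $\mathrm{SNR} = a^2/\sigma_\tau^2$:
\[
  \Lambda = \tfrac{1}{2}\bigl(\mathrm{SNR} - 1 - \ln \mathrm{SNR}\bigr).
\]
This rewriting is exact, because $a^2/\sigma_\tau^2$ is the only combination of the parameters that appears in \eqref{eq:Lambda}. The condition $\tau < \tau^\ast$ is the same as $\mathrm{SNR} > 1$, which is the regime where the formula applies. The whole corollary therefore reduces to the large-argument behaviour of the elementary function $h(z) = \tfrac12(z - 1 - \ln z)$.

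The next step is to show that $h(z)/(z/2) \to 1$ as $z \to \infty$. Dividing through gives
\[
  \frac{h(z)}{z/2} = 1 - \frac{1 + \ln z}{z},
\]
and $(1+\ln z)/z \to 0$ because the logarithm grows more slowly than any power. I will state this as the asymptotic equivalence $\Lambda \sim \mathrm{SNR}/2$, which is the precise meaning of ``$\approx$'' in \eqref{eq:Lambda-asymptotic}. For quantitative control I will also record the two-sided bound
\[
  \tfrac12\,\mathrm{SNR} - \tfrac12\bigl(1 + \ln \mathrm{SNR}\bigr) \;\le\; \Lambda \;\le\; \tfrac12\,\mathrm{SNR}
  \qquad (\mathrm{SNR} \ge 1).
\]
The upper bound holds because $1 + \ln z \ge 0$ for $z \ge 1/e$.

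For the amplification factor, I will exponentiate to get
\[
  e^{\Lambda} = e^{-1/2}\,\mathrm{SNR}^{-1/2}\exp\bigl(a^2/(2\sigma_\tau^2)\bigr).
\]
This shows that the dominant growth is $\exp(a^2/(2\sigma_\tau^2))$, up to a prefactor that decays only polynomially in the SNR. I have to be careful at this point. The statement ``grows as $\exp(a^2/(2\sigma_\tau^2))$'' must be read at the level of exponents, that is, $\ln e^{\Lambda} \sim \mathrm{SNR}/2$. It is not an asymptotic equivalence of the factors themselves, because the $\mathrm{SNR}^{-1/2}$ prefactor does not tend to a constant. I will state the claim in this logarithmic sense, and present the exact prefactor as a refinement.

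The main (mild) obstacle is interpretive rather than technical. Within the model, the SNR is bounded above by $a^2/\sigma_0^2$, since $\sigma_\tau^2 \ge \sigma_0^2$. The regime $\mathrm{SNR} \gg 1$ is therefore reached as $\tau \to 0$ only when $a \gg \sigma_0$. I will note this explicitly, so that the limit is understood as $a/\sigma_\tau \to \infty$ with the closed-form identity holding throughout.
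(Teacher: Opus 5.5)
Your proposal is correct and follows the paper's own argument: write the closed form of \Cref{thm:amplification} as $\tfrac12(\mathrm{SNR}-1-\ln\mathrm{SNR})$ and observe that $\ln\mathrm{SNR}$ and the constant $-1$ are negligible next to $\mathrm{SNR}$. Your refinements make precise what the paper leaves informal: the limit $(1+\ln z)/z\to 0$, the exact prefactor $e^{-1/2}\,\mathrm{SNR}^{-1/2}$ (which shows that ``grows as $\exp(a^2/(2\sigma_\tau^2))$'' holds at the level of exponents), and the caveat $\mathrm{SNR}\le a^2/\sigma_0^2$.
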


\begin{figure}[t]
\centering
\includegraphics[width=\linewidth]{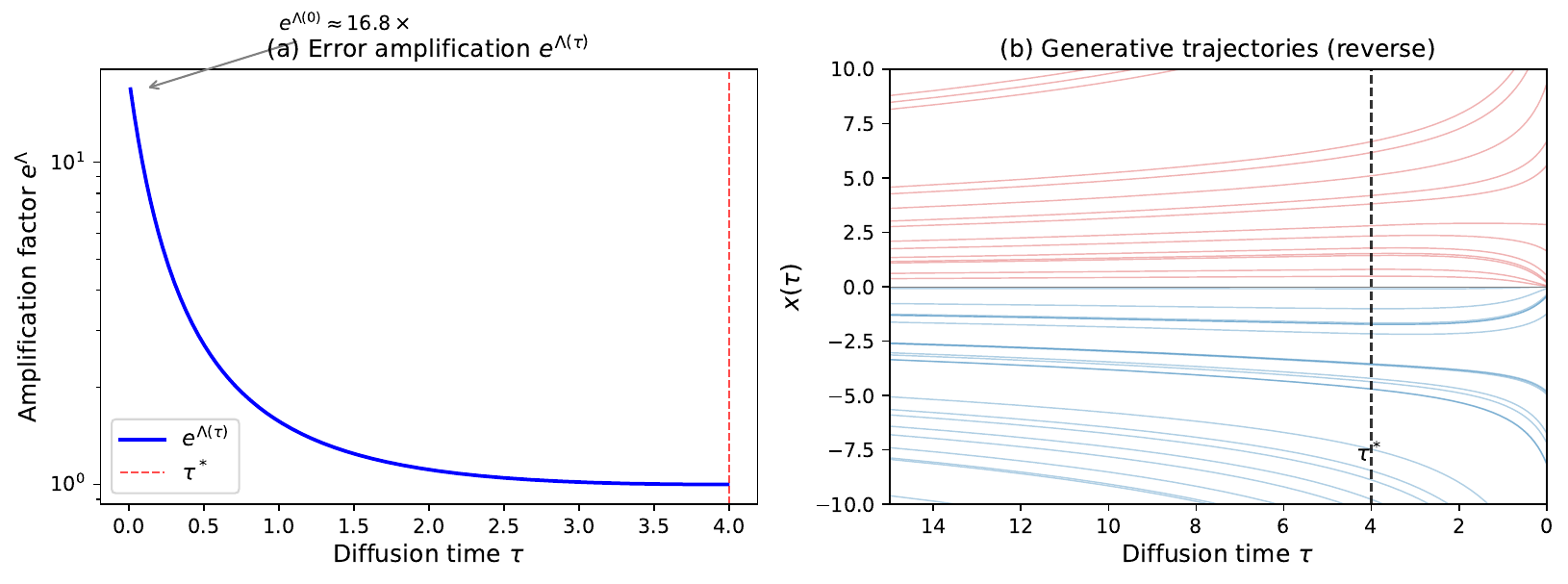}
\caption{Amplification near the transition. Panel~(a) plots the factor $e^{\Lambda(\tau)}$ on a log scale; at $\tau=0$ it is about $18$. Panel~(b) traces reverse-time probability-flow trajectories. Above $\tau^\ast$ they stay mixed, while below $\tau^\ast$ they split toward the two basins.}
\label{fig:amplification-trajectories}
\end{figure}

\begin{proof}
For $\mathrm{SNR} \gg 1$: $\ln(\mathrm{SNR}) \ll \mathrm{SNR}$ and the constant $-1$ is negligible, giving $\Lambda \approx \mathrm{SNR}/2$.
\end{proof}

\begin{remark}[Numerical illustration]\label{rem:numerical-amp}
For $a = 3$, $\sigma_0 = 1$, at $\tau = 0$: $\mathrm{SNR} = 9$, $\Lambda(0) = \tfrac{1}{2}(9 - 1 - \ln 9) \approx 2.90$, and $\exp(\Lambda) \approx 18.2$.
Score errors near the mode boundary are amplified by a factor of approximately $18$ relative to errors in the smooth (single-mode) region.
This amplification is captured by the Burgers interfacial analysis above and quantifies the well-known empirical observation~\citep{song2020improved, karras2022elucidating} that diffusion models are sensitive to score accuracy at low noise levels.
\end{remark}

\subsection{KL and total variation bounds}\label{sec:amp:kl}

We connect the trajectory-level amplification to distributional error bounds for the reverse-time SDE.

\begin{proposition}[KL bound for the reverse-time SDE; cf.\ {\citealp{chen2023sampling}}]\label{prop:kl-bound}
Let $\hat{p}_0^{\mathrm{SDE}}$ denote the distribution generated by the reverse-time SDE~\eqref{eq:reverse-sde} when the true score $s$ is replaced by an approximate score $\hat{s}$.
Then
\begin{equation}\label{eq:kl-bound}
  \KL(\hat{p}_0^{\mathrm{SDE}} \| p_0)
  \leq \frac{1}{2}\int_0^{\tau_T} \E_{p_\tau}\!\bigl[\|\hat{s}(\bm{x},\tau) - s(\bm{x},\tau)\|^2\bigr]\,d\tau.
\end{equation}
\end{proposition}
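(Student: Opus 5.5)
The plan is to compare path measures and then project to time $0$. Let $\mathbb{P}$ be the law on path space of the exact reverse-time SDE~\eqref{eq:reverse-sde} started from $p_{\tau_T}$. Let $\hat{\mathbb{P}}$ be the law of the same SDE with $s$ replaced by $\hat s$, started from the same initial law $p_{\tau_T}$, so that there is no initialization error; a mismatched initialization would only add a term $\KL(\hat p_{\tau_T}\|p_{\tau_T})$. The time-$0$ marginals are $\hat p_0^{\mathrm{SDE}}$ and $p_0$. The data-processing inequality for the projection $\omega\mapsto\omega(0)$ then gives
\[
  \KL(\hat p_0^{\mathrm{SDE}}\,\|\,p_0)\le \KL(\hat{\mathbb{P}}\,\|\,\mathbb{P}).
\]
It is important to keep $\hat{\mathbb{P}}$ in the first slot here, because that is the direction the statement asks for.

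Next I would work in $\tau$-time and compute $\KL(\hat{\mathbb{P}}\|\mathbb{P})$ by Girsanov. Both processes have the same diffusion coefficient, and their drifts differ by a multiple of $\hat s-s$. The Radon--Nikodym derivative $d\hat{\mathbb{P}}/d\mathbb{P}$ is then an exponential martingale. Taking $\E_{\hat{\mathbb{P}}}$ of its logarithm, the stochastic integral has zero mean under $\hat{\mathbb{P}}$, and what remains is the quadratic term
\[
  c\int_0^{\tau_T}\E_{\hat{\mathbb{P}}}\bigl[\|\hat s(X_\tau,\tau)-s(X_\tau,\tau)\|^2\bigr]\,d\tau .
\]
The constant $c$ is fixed by the normalization of the diffusion coefficient. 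With the convention in which the reverse SDE is driven by unit-variance noise per unit $\tau$ and drift $s$, one gets $c=\tfrac12$, as in~\eqref{eq:kl-bound}. Under~\eqref{eq:heat} literally (noise variance $2\,d\tau$ and drift $2s$), the same computation gives $c=1$, so I will state the convention explicitly and track the factor. To make Girsanov rigorous I would assume Novikov's condition, or more simply a bounded error $\|\hat s-s\|_\infty\le\varepsilon_0$, and then remove this assumption by localization with stopping times and monotone convergence, in the manner of \citet{chen2023sampling}.

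The main obstacle is the measure in the expectation. Girsanov in the direction $\KL(\hat{\mathbb{P}}\|\mathbb{P})$ naturally produces an expectation under the marginals $\hat p_\tau$ of the approximate process, whereas the statement uses the true marginals $p_\tau$. My first attempt to bridge this gap would be a perturbative argument: write $\E_{\hat p_\tau}[f]-\E_{p_\tau}[f]$ for $f=\|\hat s-s\|^2$, and bound it using $\TV(\hat p_\tau,p_\tau)$, which Pinsker controls by the KL up to time $\tau$. For bounded $\varepsilon_0$ this gives a Grönwall-type closed inequality, and the correction is of higher order in $\varepsilon_0$, so~\eqref{eq:kl-bound} holds to leading order. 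If an exact inequality is required, the fallback is to add a standing assumption that the error is measured in the sup norm. In that case both expectations are bounded by $\varepsilon_0^2$, the bound is immediate, and it matches the form of \Cref{thm:trajectory-error}. I would flag which of these two readings is being proved, since the exact $p_\tau$-weighted bound in this KL direction does not follow from Girsanov alone.
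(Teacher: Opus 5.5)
Your skeleton (data processing from path space, then Girsanov) is the argument the paper invokes; its proof is only a pointer to Girsanov and \citet{chen2023sampling}. Both of your diagnoses are correct.

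\textbf{Slot order.} That reference bounds the path divergence with the \emph{true} law in the first slot, $\KL(\mathbb{P}\,\|\,\hat{\mathbb{P}})$. There the stochastic integral is centred under $\mathbb{P}$, whose time-$\tau$ marginal is exactly $p_\tau$, so the $p_\tau$-weighted integrand comes for free. With $\hat{\mathbb{P}}$ first, which is your choice and is forced by the statement, Girsanov gives $\hat p_\tau$ instead.

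\textbf{Constant.} Under the paper's own normalization, $\partial_\tau p=\Delta p$, the reverse dynamics in $\tau$ are $dY=2s\,dr+\sqrt{2}\,dB$; equivalently, use \eqref{eq:reverse-sde} with $d\tau=\tfrac12 g^2\,dt$. In this normalization the factor $\tfrac12$ is an actual error, not a matter of convention. Take $p_0=\mathcal{N}(0,\sigma_0^2)$ and $\hat s=s+\varepsilon$ on $[0,\tau_T]$, with both processes started at $p_{\tau_T}$. The generated law is $\mathcal{N}(m,\sigma_0^2)$ with $m=\varepsilon\sigma_0^2\ln(1+2\tau_T/\sigma_0^2)$. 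The KL in either order is then $\varepsilon^2\sigma_0^2\ln^2(1+2\tau_T/\sigma_0^2)/2$. At $\tau_T=\sigma_0^2$ this is about $0.60\,\varepsilon^2\sigma_0^2$, which exceeds the claimed $\tfrac12\varepsilon^2\tau_T=0.5\,\varepsilon^2\sigma_0^2$. It does respect the $c=1$ bound $\varepsilon^2\sigma_0^2$.

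The gap is in your repair: neither of your two readings gives an inequality of the stated form.
\begin{itemize}
\item The TV/Gr\"onwall route leaves an additive remainder of order $\varepsilon_0^2\int\TV(\hat p_\tau,p_\tau)\,d\tau$, so it proves a different inequality.
\item The sup-norm route replaces $\int\E_{p_\tau}\|\hat s-s\|^2\,d\tau$ by $\varepsilon_0^2\tau_T$. This discards the $p_\tau$-weighting, which is what makes the bound match the training objective \eqref{eq:dsm}.
\end{itemize}

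The clean fix is to swap the arguments:
$\KL(p_0\,\|\,\hat p_0^{\mathrm{SDE}})\le\KL(\mathbb{P}\,\|\,\hat{\mathbb{P}})=\int_0^{\tau_T}\E_{p_\tau}\bigl[\|\hat s-s\|^2\bigr]\,d\tau$.
If the approximate process is started from a prior $\pi$, add $\KL(p_{\tau_T}\,\|\,\pi)$. Novikov or localization is handled as you propose. This is the result of \citet{chen2023sampling}.

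Because $\TV$ is symmetric, Pinsker then yields \eqref{eq:pinsker} with the prefactor $\tfrac12$ replaced by $1/\sqrt{2}$. That TV bound is the only use the paper makes of the proposition. You should present this corrected statement, and say explicitly that the citation does not support the version with $\hat p_0^{\mathrm{SDE}}$ in the first slot and factor $\tfrac12$.
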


This follows from the Girsanov theorem~\citep{girsanov1960transforming} applied to the reverse-time SDE~\eqref{eq:reverse-sde}; see \citet[Theorem~1]{chen2023sampling} for the rigorous statement.
By Pinsker's inequality~\citep{tsybakov2009introduction}:
\begin{equation}\label{eq:pinsker}
  \TV(\hat{p}_0^{\mathrm{SDE}}, p_0)
  \leq \sqrt{\tfrac{1}{2}\,\KL(\hat{p}_0^{\mathrm{SDE}} \| p_0)}
  \leq \frac{1}{2}\!\left(\int_0^{\tau_T} \E_{p_\tau}\!\bigl[\|\hat{s} - s\|^2\bigr]\,d\tau\right)^{\!1/2}.
\end{equation}

\begin{definition}[Interfacial and regular regions]\label{def:regions}
For a $K$-component GMM, define the \emph{interfacial region} at time $\tau$ as the set of points within one interfacial width of any inter-mode boundary:
\begin{equation}\label{eq:shock-region}
  \mathcal{S}_\delta(\tau) = \bigl\{x \in \R : \min_j |x - x_j^\ast(\tau)| < \delta(\tau)\bigr\},
\end{equation}
where $x_j^\ast(\tau)$ are the boundary locations and $\delta(\tau) = \sigma_\tau^2/a$ is the interfacial width~\eqref{eq:shock-width}.
The \emph{regular region} is $\mathcal{R}(\tau) = \R \setminus \mathcal{S}_\delta(\tau)$.
\end{definition}

\begin{proposition}[Score regularity by region]\label{prop:regularity}
In the regular region, the score is smooth with $\|s_x\|_{L^\infty(\mathcal{R}(\tau))} = O(\sigma_\tau^{-2})$.
In the interfacial region, $\|s_x\|_{L^\infty(\mathcal{S}_\delta(\tau))} = O(a^2/\sigma_\tau^4)$.
\end{proposition}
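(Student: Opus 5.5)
We argue directly from the exact score formula of \Cref{prop:exact-score}, working in the symmetric binary case where the only boundary is $x^\ast=0$ and $\delta(\tau)=\sigma_\tau^2/a$. Differentiating \eqref{eq:exact-score} as in the proof of \Cref{prop:sx-origin} gives
\[
  s_x(x,\tau) = -\frac{1}{\sigma_\tau^2} + \frac{a^2}{\sigma_\tau^4}\,\sech^2\!\left(\frac{ax}{\sigma_\tau^2}\right).
\]
Everything follows from bounding the $\sech^2$ term on each region. Smoothness of $s$ in both regions is immediate, since $s$ is real-analytic in $x$ for every $\tau\ge 0$.

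\emph{Interfacial region.} Since $0<\sech^2\le 1$, we have $|s_x|\le \sigma_\tau^{-2}+a^2\sigma_\tau^{-4}$ uniformly in $x$. In the regime of interest, $a\gtrsim\sigma_\tau$ (i.e.\ $\tau\lesssim\tau^\ast$, or more generally with $a/\sigma_\tau$ bounded below), the first term is dominated by the second, giving $O(a^2/\sigma_\tau^4)$. This bound is attained at $x=0$ up to the additive $-\sigma_\tau^{-2}$, so it is sharp. For $\sigma_\tau\gg a$ the bound $\sigma_\tau^{-2}+a^2\sigma_\tau^{-4}\le 2\max(\sigma_\tau^{-2},a^2\sigma_\tau^{-4})$ still holds, which I would note as the uniform statement.

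\emph{Regular region.} Here $|x|\ge\delta=\sigma_\tau^2/a$, so $z:=ax/\sigma_\tau^2$ satisfies $|z|\ge 1$ and $\sech^2 z\le\sech^2 1<4e^{-2}$. A uniform constant times $a^2/\sigma_\tau^4$ is not yet $O(\sigma_\tau^{-2})$ when $a\gg\sigma_\tau$; this is the main obstacle, since the claim only holds with $\delta$ interpreted generously. My plan is to use the exponential tail $\sech^2 z\le 4e^{-2|z|}$. This makes the interfacial contribution $a^2\sigma_\tau^{-4}\cdot 4e^{-2a|x|/\sigma_\tau^2}$, which is below $\sigma_\tau^{-2}$ as soon as $|x|\ge \tfrac{\sigma_\tau^2}{2a}\log(4a^2/\sigma_\tau^2)$, i.e.\ a logarithmic multiple of $\delta$. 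On the annulus between $\delta$ and that radius, I would state the bound with the explicit factor $\sech^2(1)\,\mathrm{SNR}$, reading $O(\sigma_\tau^{-2})$ with constants depending on $\mathrm{SNR}=a^2/\sigma_\tau^2$. Alternatively, I would restrict to $\mathrm{SNR}=O(1)$, e.g.\ near $\tau^\ast$, where $a^2\sigma_\tau^{-4}\asymp\sigma_\tau^{-2}$ and the claim is immediate.

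Finally, for a general $K$-component GMM I would sketch the same structure. Far from all boundaries a single component dominates and $s_x\approx-\sigma_\tau^{-2}$, with exponentially small corrections of the type controlled by \Cref{prop:binary-remainder}. Near a boundary between modes $j,k$, \Cref{thm:local-speciation} gives $s_x\approx\partial_n\bar s_n+\kappa^2/4\,\sech^2(\kappa n/2)$ with $\kappa=|\mu_k-\mu_j|/\sigma_\tau^2$. This reproduces the same bounds with $a$ replaced by the half-separation.
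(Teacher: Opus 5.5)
Your proposal is correct, and it takes a different and more rigorous route than the paper.

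\textbf{The paper's argument.} The paper argues heuristically. On $\mathcal{R}(\tau)$ it says ``a single Gaussian component dominates,'' so $s_x\approx-\sigma_\tau^{-2}$. On $\mathcal{S}_\delta(\tau)$ it only evaluates the centre value $|s_x(0,\tau)|=|a^2-\sigma_\tau^2|/\sigma_\tau^4\sim a^2/\sigma_\tau^4$ from \Cref{prop:sx-origin}. By itself that centre value is only a lower bound on the supremum. Your use of $\sech^2\le 1$ is what turns it into the required upper bound.

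\textbf{Your argument and the obstacle you flag.} You bound the exact $\sech^2$ formula on each region. This shows that the obstacle you flag is a real defect of the statement and of the paper's proof, not of your argument.
\begin{itemize}
\item Writing $r_{1,2}$ for the two responsibilities, $s_x+\sigma_\tau^{-2}=4r_1r_2\,\mathrm{SNR}\,\sigma_\tau^{-2}$. So dominance of one component ($r_2\ll 1$) does not give $s_x\approx-\sigma_\tau^{-2}$. One needs $r_2\lesssim 1/\mathrm{SNR}$, which is exactly your radius $\tfrac12\delta\log(4\,\mathrm{SNR})$.
\item At the inner edge $|x|=\delta$ one has $s_x=\sigma_\tau^{-2}(\sech^2(1)\,\mathrm{SNR}-1)$. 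For $a=3$, $\sigma_0=1$, $\tau=0$ this is about $2.8\,\sigma_\tau^{-2}$, against $8\,\sigma_\tau^{-2}$ at the centre.
\item Hence the $a^2/\sigma_\tau^2$-fold separation claimed after the proposition does not hold with $\delta=\sigma_\tau^2/a$.
\end{itemize}

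\textbf{Your repairs.} Allowing SNR-dependent constants makes the regular-region claim vacuous, because then $a^2/\sigma_\tau^4$ is itself $O(\sigma_\tau^{-2})$. The meaningful corrected statement is either the one with $\mathcal{R}(\tau)$ shrunk by the logarithmic factor, or the one restricted to $\mathrm{SNR}=O(1)$.

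\textbf{What each route buys.} The paper's route is brief and gives a qualitative picture for $K$ components at once. Yours gives an actual proof with the correct scale separation. Your $K$-component sketch via \Cref{thm:local-speciation} and \Cref{prop:binary-remainder} inherits the same logarithmic caveat pair by pair.
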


\begin{proof}
In $\mathcal{R}(\tau)$, the density is dominated by a single Gaussian component, so $s(x,\tau) \approx -(x - \mu_k)/\sigma_\tau^2$ and $s_x \approx -1/\sigma_\tau^2$.
In $\mathcal{S}_\delta(\tau)$, by \Cref{prop:sx-origin}, $|s_x(0,\tau)| = |a^2 - \sigma_\tau^2|/\sigma_\tau^4 \sim a^2/\sigma_\tau^4$ for $\tau \ll \tau^\ast$.
\end{proof}

The practical implication is that the interfacial region is spatially narrow (width $O(\sigma_\tau^2/a)$) yet contains the steepest score gradients (of order $a^2/\sigma_\tau^4$ rather than the $\sigma_\tau^{-2}$ of the regular region).
In the present analysis, this $a^2/\sigma_\tau^2$-fold ratio is the key quantity driving the Gr\"onwall exponent~\eqref{eq:gronwall-bound}, and hence one concrete mechanism by which mode-boundary score errors degrade sample quality.

\section{Multi-Dimensional Extension}\label{sec:multidim}

Having isolated the exact boundary-normal mechanism in \Cref{sec:shocks}, we now separate two complementary higher-dimensional questions. The first is intrinsic and distribution-free: the full vector Burgers dynamics and its curl-free structure in $\R^d$. The second is model-specific: how the local criterion specializes in Gaussian mixtures to explicit geometric objects such as Voronoi boundaries and leading-order spectral thresholds.

\subsection{The vector Burgers system}\label{sec:md:vector}

\begin{theorem}[Score PDE in $\R^d$]\label{thm:score-pde-d}
Let $p(\bm{x},\tau)$ be a positive smooth solution of the heat equation $\partial_\tau p = \Delta p$ in $\R^d$.
Then each component $s_i(\bm{x},\tau) = \partial_i \log p(\bm{x},\tau)$ of the score satisfies
\begin{equation}\label{eq:score-pde-d}
  \frac{\partial s_i}{\partial \tau}
  = \Delta s_i + 2\,s_k\,\partial_k s_i
  \qquad (i = 1,\ldots,d),
\end{equation}
where Einstein summation over $k$ is implied.
In vector notation:
\begin{equation}\label{eq:vector-burgers-score}
  \partial_\tau \bm{s} = \Delta \bm{s} + 2\,(\bm{s} \cdot \nabla)\bm{s}.
\end{equation}
Under $\bm{u} = -2\bm{s}$, this becomes the $d$-dimensional viscous Burgers system:
\begin{equation}\label{eq:vector-burgers}
  \partial_\tau \bm{u} + (\bm{u} \cdot \nabla)\bm{u} = \Delta \bm{u}.
\end{equation}
\end{theorem}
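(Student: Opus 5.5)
The argument mirrors the one-dimensional proof of \Cref{thm:score-pde}. The only new ingredient is that the score is a gradient, which lets the quadratic term be rewritten as an advective derivative. Write $L = \log p$, which is smooth because $p>0$, so that $s_i = \partial_i L$. Dividing the heat equation by $p$ gives the scalar (Hopf--Cole / viscous Hamilton--Jacobi) equation
\[
  \partial_\tau L = \frac{\Delta p}{p} = \Delta L + |\nabla L|^2,
\]
exactly as in the proof of \Cref{prop:local-phi-pde}. Here I use $\Delta p / p = \partial_k(s_k p)/p = \partial_k s_k + s_k s_k$.

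The key step is to apply $\partial_i$ to this equation and commute derivatives, which is legitimate by smoothness. This gives
\[
  \partial_\tau s_i = \Delta s_i + \partial_i(s_k s_k) = \Delta s_i + 2\,s_k\,\partial_i s_k.
\]
Next I invoke the symmetry of the Hessian of $L$: $\partial_i s_k = \partial_i\partial_k L = \partial_k\partial_i L = \partial_k s_i$. Substituting this converts the term $2 s_k \partial_i s_k$ into $2 s_k \partial_k s_i$, which is \eqref{eq:score-pde-d}. Writing it in vector form gives \eqref{eq:vector-burgers-score}.

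Finally, I substitute $\bm{s} = -\bm{u}/2$ componentwise, as in \Cref{thm:score-burgers}. The linear terms each pick up a factor $-1/2$, and the quadratic term becomes $2\cdot\tfrac14 (\bm{u}\cdot\nabla)\bm{u}$. Multiplying through by $-2$ then yields \eqref{eq:vector-burgers}.

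The computation is routine. The one conceptual point to flag is the Hessian-symmetry step, which relies on irrotationality of the score. Without it, one would only obtain the gradient form $\nabla|\bm{s}|^2$ of the nonlinearity rather than the advective form $2(\bm{s}\cdot\nabla)\bm{s}$. This is exactly why the vector Burgers form holds on the gradient class, and it foreshadows the curl-preservation result.
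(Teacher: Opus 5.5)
Your proof is correct and is essentially the paper's argument: a direct computation from the heat equation, followed by symmetry of the Hessian of $\log p$. The difference is only bookkeeping. You differentiate the log-density equation $\partial_\tau \log p = \Delta\log p + |\nabla\log p|^2$ instead of expanding $\partial_i(\Delta p)/p - s_i\,\Delta p/p$, which avoids the cancellation of cubic terms and makes explicit the step $s_k\partial_i s_k = s_k\partial_k s_i$ that the paper uses only tacitly.
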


\begin{proof}
The one-dimensional argument of \Cref{thm:score-pde} extends component-wise.
We use the identities $\partial_i p = s_i\,p$ and $\partial_i \partial_j p = (\partial_j s_i + s_i s_j)\,p$ (by direct computation, as in~\eqref{eq:pxx}).
The Laplacian of $p$ is $\Delta p = (\partial_k s_k + |\bm{s}|^2)\,p$.
Applying $\partial_i$ to $\Delta p$:
\begin{equation}\label{eq:partial-i-laplacian}
  \partial_i(\Delta p)
  = \bigl(\partial_i \partial_k s_k + 2\,s_m\,\partial_i s_m + s_i\,\partial_k s_k + s_i\,|\bm{s}|^2\bigr)\,p.
\end{equation}
From $\partial_\tau s_i = (\partial_i \Delta p)/p - s_i\,(\Delta p)/p$ (the $d$-dimensional analogue of~\eqref{eq:s-tau-quotient}):
\begin{align}
  \partial_\tau s_i
  &= \partial_i\partial_k s_k + 2\,s_m\,\partial_i s_m + s_i\,\partial_k s_k + s_i|\bm{s}|^2 \\
  &\qquad - s_i\bigl(\partial_k s_k + |\bm{s}|^2\bigr)\notag\\
  &= \partial_i\partial_k s_k + 2\,s_m\,\partial_i s_m.\label{eq:general-score-pde}
\end{align}
Since $s_i = \partial_i \log p$, we have $\partial_k s_i = \partial_i s_k$ (symmetry of mixed partials), hence $\partial_i \partial_k s_k = \partial_k \partial_k s_i = \Delta s_i$.
Therefore~\eqref{eq:general-score-pde} reduces to~\eqref{eq:score-pde-d}.

The Burgers form~\eqref{eq:vector-burgers} follows from $\bm{u} = -2\bm{s}$ by the same algebra as \Cref{thm:score-burgers}.
\end{proof}

\begin{remark}\label{rem:vector-burgers}
The system~\eqref{eq:vector-burgers} is precisely the $d$-dimensional viscous Burgers equation studied in fluid dynamics as a model for irrotational compressible flow~\citep{whitham1974linear}.
The multi-dimensional Cole--Hopf transform $\bm{u} = -2\nabla\log\varphi$ with $\varphi_\tau = \Delta\varphi$ yields~\eqref{eq:vector-burgers}, confirming the identification.
\end{remark}

\subsection{Curl preservation}\label{sec:md:curl}

The true score is curl-free by construction ($\bm{s} = \nabla\log p$ implies $\partial_i s_j = \partial_j s_i$).
The next result shows that this property is preserved by the vector Burgers dynamics, even when the equation is viewed for a general vector field.

\begin{definition}[Vorticity]\label{def:vorticity}
For a vector field $\bm{v}$ on $\R^d$, the \emph{vorticity} is the antisymmetric tensor
\begin{equation}\label{eq:vorticity-def}
  \Omega_{ij} = \partial_i v_j - \partial_j v_i.
\end{equation}
The field $\bm{v}$ is irrotational (curl-free) if and only if $\Omega_{ij} = 0$ for all $i,j$.
In $d = 3$, the dual vector $(\nabla \times \bm{v})_i = \epsilon_{ijk}\Omega_{jk}/2$ is the usual curl~\citep{bhatia2013helmholtz}.
\end{definition}

\begin{theorem}[Vorticity equation for vector Burgers]\label{thm:vorticity}
If $\bm{v}$ satisfies the vector Burgers system $\partial_\tau v_i = \Delta v_i + 2\,v_k\,\partial_k v_i$, then the vorticity $\Omega_{ij}$ satisfies the linear parabolic system
\begin{equation}\label{eq:vorticity-eqn}
  \partial_\tau \Omega_{ij}
  = \Delta \Omega_{ij}
    + 2\,v_k\,\partial_k \Omega_{ij}
    + 2\,(\partial_i v_k)\,\Omega_{kj}
    - 2\,(\partial_j v_k)\,\Omega_{ki}.
\end{equation}
\end{theorem}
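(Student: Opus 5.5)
The plan is a direct differentiate-and-antisymmetrize computation, with no appeal to the potential structure $\bm{v}=\nabla\log p$, since the statement is for a general solution $\bm{v}$ of the vector Burgers system. I will apply $\partial_i$ to the $j$-th component equation $\partial_\tau v_j = \Delta v_j + 2 v_k \partial_k v_j$, assuming $\bm{v}$ is smooth enough (say $C^3$ in space and $C^1$ in $\tau$) that all derivatives commute. By the product rule this gives
\[
  \partial_\tau(\partial_i v_j) = \Delta(\partial_i v_j) + 2\,(\partial_i v_k)(\partial_k v_j) + 2\,v_k\,\partial_k(\partial_i v_j).
\]
Swapping $i\leftrightarrow j$ and subtracting, the linear terms combine immediately into $\Delta\Omega_{ij} + 2 v_k \partial_k \Omega_{ij}$. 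What remains is the quadratic commutator
\[
  Q_{ij} = 2\bigl[(\partial_i v_k)(\partial_k v_j) - (\partial_j v_k)(\partial_k v_i)\bigr].
\]

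The only step that needs care is rewriting $Q_{ij}$ so that every term is proportional to $\Omega$. Otherwise a term of the form $(\partial_i v_k)(\partial_j v_k)$ would survive, and the system would not be closed and linear in $\Omega$. I will use the definition of vorticity in the forms $\partial_k v_j = \partial_j v_k + \Omega_{kj}$ and $\partial_k v_i = \partial_i v_k + \Omega_{ki}$. Substituting these, the first product becomes $(\partial_i v_k)(\partial_j v_k) + (\partial_i v_k)\Omega_{kj}$, and the second becomes $(\partial_j v_k)(\partial_i v_k) + (\partial_j v_k)\Omega_{ki}$. The symmetric pieces $(\partial_i v_k)(\partial_j v_k)$ cancel exactly. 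This leaves
\[
  Q_{ij} = 2(\partial_i v_k)\Omega_{kj} - 2(\partial_j v_k)\Omega_{ki},
\]
which are precisely the last two terms of \eqref{eq:vorticity-eqn}. Getting the index placement right here, so that the signs match \eqref{eq:vorticity-eqn}, is the main (modest) obstacle. I will check it by confirming that the result is antisymmetric in $(i,j)$, using $\Omega_{ki} = -\Omega_{ik}$.

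Finally, I will note why the system is linear and parabolic in $\Omega$. Once $\bm{v}$ is regarded as a given smooth coefficient field, the equation has principal part $\Delta\Omega_{ij}$, a first-order transport term with coefficient $2\bm{v}$, and a zeroth-order coupling through the matrix $\nabla\bm{v}$. This is the form needed later, where $\Omega(\cdot,0)=0$ together with uniqueness for linear parabolic systems (or an energy estimate for $\sum_{ij}\Omega_{ij}^2$ followed by Gr\"onwall) gives $\Omega\equiv 0$ in \Cref{thm:curl-preservation}.
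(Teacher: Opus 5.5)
Your proposal is correct and follows the paper's proof: differentiate the $j$-th component equation by $\partial_i$, antisymmetrize in $(i,j)$, then use $\partial_k v_j = \partial_j v_k + \Omega_{kj}$ and $\partial_k v_i = \partial_i v_k + \Omega_{ki}$ so that the symmetric products $(\partial_i v_k)(\partial_j v_k)$ cancel. The signs and index placements you obtain match \eqref{eq:vorticity-eqn} exactly.
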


\begin{proof}
Apply $\partial_i$ to the Burgers equation for component $j$:
\begin{equation}\label{eq:diff-j}
  \partial_\tau(\partial_i v_j) = \Delta(\partial_i v_j) + 2\,(\partial_i v_k)(\partial_k v_j) + 2\,v_k\,\partial_k(\partial_i v_j).
\end{equation}
Interchange $i \leftrightarrow j$ and subtract:
\begin{equation}\label{eq:diff-ij}
  \partial_\tau \Omega_{ij}
  = \Delta \Omega_{ij}
    + 2\,v_k\,\partial_k \Omega_{ij}
    + 2\bigl[(\partial_i v_k)(\partial_k v_j) - (\partial_j v_k)(\partial_k v_i)\bigr].
\end{equation}
For the bracketed term, decompose $\partial_k v_j = \partial_j v_k + \Omega_{kj}$:
\[
  (\partial_i v_k)(\partial_k v_j)
  = (\partial_i v_k)(\partial_j v_k) + (\partial_i v_k)\,\Omega_{kj}.
\]
Similarly, $\partial_k v_i = \partial_i v_k + \Omega_{ki}$ gives
\[
  (\partial_j v_k)(\partial_k v_i)
  = (\partial_j v_k)(\partial_i v_k) + (\partial_j v_k)\,\Omega_{ki}.
\]
The symmetric terms $(\partial_i v_k)(\partial_j v_k)$ cancel upon subtraction, leaving~\eqref{eq:vorticity-eqn}.
\end{proof}

\begin{theorem}[Curl preservation]\label{thm:curl-preservation}
Let $\bm{v}$ be a smooth solution of the vector Burgers equation~\eqref{eq:vector-burgers-score} on $\R^d \times [0,T]$ with $\nabla \bm{v}$ bounded.
If $\Omega_{ij}(\bm{x},0) = 0$ for all $\bm{x} \in \R^d$ and all $i,j$, then $\Omega_{ij}(\bm{x},\tau) = 0$ for all $\tau \in [0,T]$.
\end{theorem}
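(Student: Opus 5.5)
By \Cref{thm:vorticity}, the vorticity tensor $\Omega=(\Omega_{ij})$ satisfies a \emph{linear} parabolic system. In this system $\bm{v}$ and $\nabla\bm{v}$ are treated as given coefficients. The claim is therefore a uniqueness statement for this linear system with zero initial data, and I will prove it with a weighted $L^2$ energy estimate and Gr\"onwall's inequality. To handle the whole space, I will additionally assume what the statement implicitly needs: $\Omega(\cdot,\tau)$ lies in a class where the energy is finite and boundary terms vanish. Examples are $\Omega\in L^2$ with $\nabla\Omega\in L^2$, or polynomial growth handled by a Gaussian weight. I will also assume $\bm{v}$ is either bounded or at most linearly growing.

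First, set $E(\tau)=\tfrac12\int_{\R^d}\sum_{i,j}\Omega_{ij}^2\,\rho\,d\bm{x}$. In the cleanest case the weight is $\rho\equiv 1$. Multiply \eqref{eq:vorticity-eqn} by $\Omega_{ij}$, sum over $i,j$, and integrate. The terms are handled as follows.

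\begin{enumerate}[label=(\alph*),nosep]
\item The diffusion term gives $-\int|\nabla\Omega|^2\le 0$ after integration by parts.
\item The transport term is $2\int v_k\partial_k(\tfrac12|\Omega|^2)$. Integrating by parts turns it into $-\int(\partial_k v_k)|\Omega|^2$, which is bounded by $d\,\|\nabla\bm{v}\|_\infty\int|\Omega|^2$. This uses only the bound on $\nabla\bm{v}$, not on $\bm{v}$.
\item The two zeroth-order stretching terms $2(\partial_i v_k)\Omega_{kj}\Omega_{ij}-2(\partial_j v_k)\Omega_{ki}\Omega_{ij}$ are bounded pointwise by $4\|\nabla\bm{v}\|_\infty|\Omega|^2$, using Cauchy--Schwarz.
\end{enumerate}

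Combining these gives $E'(\tau)\le C\,E(\tau)$ with $C=(2d+8)\|\nabla\bm{v}\|_{L^\infty}$. Since $E(0)=0$, Gr\"onwall yields $E\equiv0$ on $[0,T]$. Continuity of $\Omega$ then gives $\Omega\equiv 0$ pointwise.

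The main obstacle is justifying the integrations by parts on $\R^d$ without decay assumptions. This matters in the motivating case, where $\bm{s}$ grows linearly, e.g.\ $-\bm{x}/\sigma_\tau^2$. I will first try a localized estimate. Multiply by a Gaussian weight $\rho=e^{-\gamma(\tau)|\bm{x}|^2}$ with $\gamma$ decreasing in $\tau$, or by a cutoff $\chi_R$. The extra terms $\Omega^2\partial_\tau\rho$, $\nabla\rho\cdot\nabla\Omega\,\Omega$, and $v_k\partial_k\rho\,\Omega^2$ are then absorbed. The gradient cross term is absorbed by half of the dissipation. The weight term is absorbed by choosing $\gamma'$ sufficiently negative so that $-\gamma'|\bm{x}|^2$ dominates $|\bm{v}||\bm{x}|\gamma$ and $\gamma^2|\bm{x}|^2$; this works on a short time interval when $|\bm{v}|\lesssim 1+|\bm{x}|$. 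The short-time result is then iterated to cover $[0,T]$.

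As a fallback, I will invoke the maximum principle for weakly coupled linear parabolic systems with bounded zeroth-order coefficients. Applied to $|\Omega|^2$, it gives $\partial_\tau|\Omega|^2\le\Delta|\Omega|^2+2v_k\partial_k|\Omega|^2+C|\Omega|^2$. The Phragm\'en--Lindel\"of maximum principle for subsolutions of growth at most $e^{c|\bm{x}|^2}$ then forces $e^{-C\tau}|\Omega|^2\le 0$.
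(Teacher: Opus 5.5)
Your main line is the paper's proof. The paper uses exactly the unweighted energy $E=\tfrac12\int|\Omega|^2$, treats the diffusion, transport, and stretching terms as in your (a)--(c), and closes with Gr\"onwall from $E(0)=0$. Your constant $(2d+8)\|\nabla\bm{v}\|_\infty$ is the correct bookkeeping; the paper's $M$ loses a factor of $2$, which is harmless.

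Where you go beyond the paper is in noticing that this argument needs $\Omega(\cdot,\tau)\in L^2(\R^d)$ and vanishing boundary terms. The paper assumes this silently, and it even asserts that $B_k=2v_k$ is bounded, which the hypotheses do not give. Your maximum-principle fallback is a genuinely different and stronger route, and it needs none of your extra assumptions. Bounded $\nabla\bm{v}$ already gives $|\Omega_{ij}|\le 2\|\nabla\bm{v}\|_\infty$ and $|\bm{v}(\bm{x},\tau)|\le|\bm{v}(\bm{0},\tau)|+\|\nabla\bm{v}\|_\infty|\bm{x}|$. So $w=e^{-C\tau}|\Omega|^2$ is a bounded, nonnegative function satisfying $\partial_\tau w\le\Delta w+2\bm{v}\cdot\nabla w$, with at most linearly growing drift and $w(\cdot,0)=0$. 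Take the barrier $\epsilon e^{B\tau}(1+|\bm{x}|^2)$ with $B$ large; it is a strict supersolution precisely because the drift is at most linear. Comparing $w$ with it and letting $\epsilon\to0$ gives $w\equiv0$. This proves the theorem exactly as stated. The energy route, yours or the paper's, genuinely requires an integrability hypothesis that the statement lacks.

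One correction to your Gaussian-weight variant. Since $\partial_\tau\rho=-\gamma'|\bm{x}|^2\rho$, the weight contributes $-\tfrac12\int\gamma'|\bm{x}|^2|\Omega|^2\rho$ to $E'$. This is a favorable term only if $\gamma'>0$. With $\gamma$ decreasing, as you wrote, the term is positive and adds to the growth instead of absorbing the $\gamma^2|\bm{x}|^2$ and $\gamma|\bm{v}||\bm{x}|$ contributions. You need $\gamma$ \emph{increasing}, e.g.\ $\gamma'\ge c(\gamma^2+\gamma)$ with $c$ depending on the linear-growth constant of $\bm{v}$. This Riccati-type growth blows up in finite time, which is what makes the estimate short-time and forces the iteration. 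Any $\gamma(0)>0$ works here, because $\Omega$ is bounded.
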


\begin{proof}
Equation~\eqref{eq:vorticity-eqn} is a linear parabolic system in the unknowns $\{\Omega_{ij}\}$:
\begin{equation}\label{eq:linear-parabolic}
  \partial_\tau \Omega_{ij} = \Delta \Omega_{ij} + B_k(\bm{x},\tau)\,\partial_k \Omega_{ij} + C_{ij,mn}(\bm{x},\tau)\,\Omega_{mn},
\end{equation}
where $B_k = 2v_k$ and $C$ collects the zero-order terms from~\eqref{eq:vorticity-eqn}.
Both $B$ and $C$ are bounded on $[0,T]$ by assumption.

Define the energy $E(\tau) = \tfrac{1}{2}\int_{\R^d} |\Omega|^2\,d\bm{x} = \tfrac{1}{2}\int \Omega_{ij}\Omega_{ij}\,d\bm{x}$.
Differentiating under the integral and substituting~\eqref{eq:vorticity-eqn}:
\begin{align}
  \frac{dE}{d\tau}
  &= \int \Omega_{ij}\,\partial_\tau\Omega_{ij}\,d\bm{x}\notag\\
  &= \underbrace{\int \Omega_{ij}\,\Delta\Omega_{ij}\,d\bm{x}}_{= -\int |\nabla\Omega|^2\,d\bm{x} \leq 0}
    + 2\underbrace{\int \Omega_{ij}\,v_k\,\partial_k\Omega_{ij}\,d\bm{x}}_{= -\int (\partial_k v_k)|\Omega|^2\,d\bm{x}\,/\,2}
    + \text{zero-order terms}.\label{eq:energy-est}
\end{align}
The first integral is non-positive (integration by parts with vanishing boundary terms).
The second follows from integration by parts: $\int \Omega_{ij} v_k \partial_k \Omega_{ij} = -\frac{1}{2}\int (\partial_k v_k)|\Omega|^2$.
The zero-order terms satisfy $|\int \Omega_{ij}(\partial_i v_k)\Omega_{kj}| \leq \|\nabla \bm{v}\|_\infty \int |\Omega|^2 = 2\|\nabla \bm{v}\|_\infty E$, and similarly for the $(\partial_j v_k)\Omega_{ki}$ term.
Combining:
\begin{equation}\label{eq:gronwall-energy}
  \frac{dE}{d\tau} \leq M(\tau)\,E(\tau),
  \qquad M(\tau) = \|\nabla \cdot \bm{v}\|_\infty + 4\|\nabla \bm{v}\|_\infty.
\end{equation}
By the Gr\"onwall inequality~\citep{gronwall1919note}:
\[
  E(\tau) \leq E(0)\,\exp\!\left(\int_0^\tau M(\tau')\,d\tau'\right).
\]
Since $E(0) = 0$, we conclude $E(\tau) = 0$ for all $\tau \in [0,T]$.
As $|\Omega|^2 \geq 0$ with zero integral, $\Omega_{ij}(\bm{x},\tau) = 0$ almost everywhere, and by continuity (smoothness of the solution for $\tau > 0$, guaranteed by the heat kernel~\citep{evans2010partial}), everywhere.
\end{proof}

\begin{corollary}[Non-conservative scores are approximation artifacts]\label{cor:curl-artifact}
The true score $\bm{s} = \nabla\log p$ of a diffusion model is curl-free for all $\tau > 0$, and the vector Burgers dynamics~\eqref{eq:vector-burgers-score} preserves this property.
Any non-zero vorticity $\Omega_{ij}$ measured in a learned score network $\bm{s}_\theta$~\citep{vuong2025score, lai2023fp} is entirely attributable to the neural network approximation error.
\end{corollary}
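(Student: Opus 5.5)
The corollary has two claims, and I would prove them in that order.

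The first claim is the curl-freeness of the true score. It is essentially definitional. For $\tau>0$, the density $p_\tau = p_0 * G_\tau$ is strictly positive and $C^\infty$ (Section~\ref{sec:prelim:tau}). Hence $\log p_\tau$ is $C^2$, and by symmetry of mixed partials $\Omega_{ij} = \partial_i\partial_j\log p_\tau - \partial_j\partial_i\log p_\tau = 0$ pointwise. No dynamics is needed here.

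The second claim is that the dynamics preserves curl-freeness, and here I would invoke \Cref{thm:score-pde-d} and \Cref{thm:curl-preservation}. \Cref{thm:score-pde-d} shows that $\bm{s}$ solves the vector Burgers system~\eqref{eq:vector-burgers-score}. To apply \Cref{thm:curl-preservation}, I would restart time at any $\tau_0>0$ rather than at $\tau=0$, since $p_0$ may be singular. On an interval $[\tau_0,T]$, the field $\bm{s}(\cdot,\tau_0)$ is irrotational. I would then verify the hypothesis $\|\nabla\bm{s}\|_\infty<\infty$ on $[\tau_0,T]$.

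This verification is the one step needing care. For Gaussian mixtures it is explicit, since $\nabla\bm{s}$ is bounded by $O(\max_k|\bm{\mu}_k|^2/\sigma_\tau^4 + \sigma_\tau^{-2})$, as in \Cref{prop:regularity}. For general $p_0$, I would use the representation $\nabla\bm{s} = -\tfrac{1}{2\tau}I + \tfrac{1}{4\tau^2}\Cov(\bm{X}_0\mid \bm{X}_\tau=\bm{x})$. This gives boundedness whenever $p_0$ has bounded support. Otherwise I would state the result under that assumption, or note that the energy argument can be localized with a weight.

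There is also the $L^2$-integrability of $\Omega$ needed for $E(\tau)$. That requirement is vacuous here: by the first claim the vorticity is identically zero, so $E\equiv 0$. The consistency check then simply says that the zero vorticity is the unique solution of the linear system~\eqref{eq:vorticity-eqn} with zero data.

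The attribution statement follows by contraposition. Suppose a learned $\bm{s}_\theta$ has nonzero $\Omega_{ij}$ at some $(\bm{x},\tau)$. Then, by the first claim, $\bm{s}_\theta\neq\bm{s}$ there, and in fact $\Omega[\bm{s}_\theta] = \Omega[\bm{s}_\theta-\bm{s}]$ by linearity of $\Omega$. So the measured vorticity is exactly the vorticity of the approximation error $\bm{s}_\theta-\bm{s}$. This makes the phrase ``entirely attributable'' precise: the exact dynamics contributes nothing, regardless of training or discretization details.
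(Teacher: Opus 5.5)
Your proposal is correct and follows the paper's implicit argument: curl-freeness of $\nabla\log p_\tau$ by symmetry of mixed partials, preservation by \Cref{thm:curl-preservation}, and attribution of any measured vorticity to $\bm{s}_\theta-\bm{s}$. Your restart at $\tau_0>0$, the bounded-Jacobian check, and the identity $\Omega[\bm{s}_\theta]=\Omega[\bm{s}_\theta-\bm{s}]$ add rigor the paper leaves out.

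One small slip: nonzero vorticity at $(\bm{x},\tau)$ only means the Jacobian of $\bm{s}_\theta-\bm{s}$ is non-symmetric there, not that $\bm{s}_\theta(\bm{x},\tau)\neq\bm{s}(\bm{x},\tau)$. Rely on the linearity identity rather than the phrase ``$\bm{s}_\theta\neq\bm{s}$ there.''
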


This geometry is illustrated in \Cref{fig:2d-score-curl}: the two-dimensional score field has a sharp directional transition across the inter-mode boundary, yet its measured curl remains numerically zero.

\subsection{Shock surfaces in \texorpdfstring{$\R^d$}{Rd}}\label{sec:md:shocks}

In $d > 1$, the formal inviscid or low-noise Burgers description leads to \emph{shock surfaces}---codimension-$1$ manifolds across which the score becomes discontinuous in the limiting picture.

\begin{proposition}[Shock surfaces as Voronoi boundaries]\label{prop:voronoi}
For the equal-covariance GMM~\eqref{eq:gmm-def} with equal weights $w_k = 1/K$ in the limit $\sigma_\tau \to 0$, the limiting shock surfaces of the score are given by the faces of the Voronoi tessellation generated by the means $\{\bm{\mu}_k\}$:
\begin{equation}\label{eq:voronoi}
  \Gamma_{jk}
  = \bigl\{\bm{x} \in \R^d : |\bm{x} - \bm{\mu}_j| = |\bm{x} - \bm{\mu}_k|\bigr\}
  = \bigl\{\bm{x} : (\bm{\mu}_j - \bm{\mu}_k) \cdot \bm{x} = \tfrac{|\bm{\mu}_j|^2 - |\bm{\mu}_k|^2}{2}\bigr\}.
\end{equation}
For unequal weights, the boundaries are the \emph{weighted Voronoi} (power diagram) faces.
\end{proposition}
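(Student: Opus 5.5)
We would prove this by writing the noised mixture score as a softmax-weighted average of component scores and analysing the weights as $\sigma_\tau\to 0$. With $p_\tau(\bm{x})=\sum_k w_k\,\mathcal{N}(\bm{x};\bm{\mu}_k,\sigma_\tau^2\bm{I})$, direct differentiation gives
\[
  \bm{s}(\bm{x},\tau)=\sum_{k} \pi_k(\bm{x},\tau)\,\frac{\bm{\mu}_k-\bm{x}}{\sigma_\tau^2},
  \qquad
  \pi_k=\frac{w_k e^{-|\bm{x}-\bm{\mu}_k|^2/(2\sigma_\tau^2)}}{\sum_m w_m e^{-|\bm{x}-\bm{\mu}_m|^2/(2\sigma_\tau^2)}}.
\]
The log-ratio of two weights is
\[
  \phi_{jk}=\log\frac{w_j}{w_k}+\frac{|\bm{x}-\bm{\mu}_k|^2-|\bm{x}-\bm{\mu}_j|^2}{2\sigma_\tau^2}.
\]
The second term is affine in $\bm{x}$ because the quadratic terms cancel: $(\bm{\mu}_j-\bm{\mu}_k)\cdot\bm{x}/\sigma_\tau^2-(|\bm{\mu}_j|^2-|\bm{\mu}_k|^2)/(2\sigma_\tau^2)$. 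This is the same structure as $\phi=2ax/\sigma_\tau^2$ in \Cref{rem:local-universal}.

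The first step is the limit of the weights. Fix $\bm{x}$ in the interior of the Voronoi cell $V_j$ of $\bm{\mu}_j$, so that $|\bm{x}-\bm{\mu}_j|^2<|\bm{x}-\bm{\mu}_k|^2-\eta$ for all $k\ne j$ and some $\eta>0$. Then $\phi_{jk}\ge \eta/(2\sigma_\tau^2)+\log(w_j/w_k)\to+\infty$. Hence $\pi_j\to 1$ and $\pi_k=O(e^{-\eta/(2\sigma_\tau^2)})$, and this holds locally uniformly on compact subsets of the open cell. After multiplying by the natural scale $\sigma_\tau^2$, the rescaled score $\sigma_\tau^2\bm{s}$ therefore converges smoothly to $\bm{\mu}_j-\bm{x}$ on each open cell. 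The exponential suppression also controls the $\bm{x}$-derivatives of $\pi_k$, which grow only polynomially in $\sigma_\tau^{-2}$, so convergence holds in $C^1$ on compacts.

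The second step is across a face. At a point of $\Gamma_{jk}$ away from lower-dimensional faces, every other competitor $m$ has weight exponentially small relative to $\pi_j+\pi_k$. By \Cref{prop:binary-remainder}, the score equals the binary score up to exponentially small errors. By \Cref{thm:local-tanh} and \Cref{thm:local-speciation}, the binary score has a $\tanh$ layer in the normal direction $\hat{\bm{n}}\propto \bm{\mu}_j-\bm{\mu}_k$, with $\kappa=|\bm{\mu}_j-\bm{\mu}_k|/\sigma_\tau^2$. The layer width is $2/\kappa=O(\sigma_\tau^2)\to 0$, while the rescaled jump $\bm{\mu}_j-\bm{\mu}_k\neq 0$ stays fixed. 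So the limit of $\sigma_\tau^2\bm{s}$ is discontinuous exactly on the zero set of the affine part of $\phi_{jk}$. For equal weights, $\log(w_j/w_k)=0$, and that zero set is the hyperplane \eqref{eq:voronoi}, i.e. the Voronoi bisector restricted to where $j,k$ are the two nearest means. This identifies the limiting shock set with the union of Voronoi faces.

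For unequal weights, the statement requires care over the scaling. If the weights are held fixed as $\sigma_\tau\to0$, the term $\log(w_j/w_k)$ is $O(1)$ while the distance term is $O(\sigma_\tau^{-2})$. The faces then only shift by $O(\sigma_\tau^2)$ and still converge to ordinary Voronoi faces. The power-diagram statement arises at finite $\sigma_\tau$, where the boundary is $\{\bm{x}:|\bm{x}-\bm{\mu}_j|^2-2\sigma_\tau^2\log w_j=|\bm{x}-\bm{\mu}_k|^2-2\sigma_\tau^2\log w_k\}$. This is a power diagram with radii $r_k^2=2\sigma_\tau^2\log w_k$ up to a common constant. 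It stays a genuinely weighted diagram in the limit only if the log-weights scale like $\sigma_\tau^{-2}$. I would state it in that form.

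The main obstacle is making ``limiting shock surface'' precise: the score itself blows up like $\sigma_\tau^{-2}$, so we must fix the rescaled object and the mode of convergence. We must also handle lower-dimensional Voronoi junctions where three or more means are equidistant, where \Cref{prop:binary-remainder} fails. I would define the shock set as the set of discontinuities of $\lim\sigma_\tau^2\bm{s}$, prove uniform convergence away from an $\epsilon$-neighbourhood of the faces, and treat junctions as a measure-zero set inside the closure of the faces.
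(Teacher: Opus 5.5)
Your proposal is correct and follows the paper's route: responsibilities concentrate on the nearest mean, so the score is the single-component field $(\bm{\mu}_k-\bm{x})/\sigma_\tau^2$ on each cell and jumps across the bisectors, and you make this precise by rescaling by $\sigma_\tau^2$ and resolving each face with the exact binary $\tanh$ layer of width $2\sigma_\tau^2/|\bm{\mu}_j-\bm{\mu}_k|$. Your caveat on unequal weights is also right and sharper than the paper: the paper's $\sigma_\tau$-dependent $\argmax_m w_m\,\mathcal{N}(\bm{x};\bm{\mu}_m,\sigma_\tau^2\bm{I})$ gives a power diagram only at finite noise, with offsets $\sigma_\tau^2\log(w_j/w_k)$ that vanish in the limit, so with fixed weights the limiting faces are the ordinary Voronoi faces.
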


\begin{figure}[t]
\centering
\includegraphics[width=\linewidth]{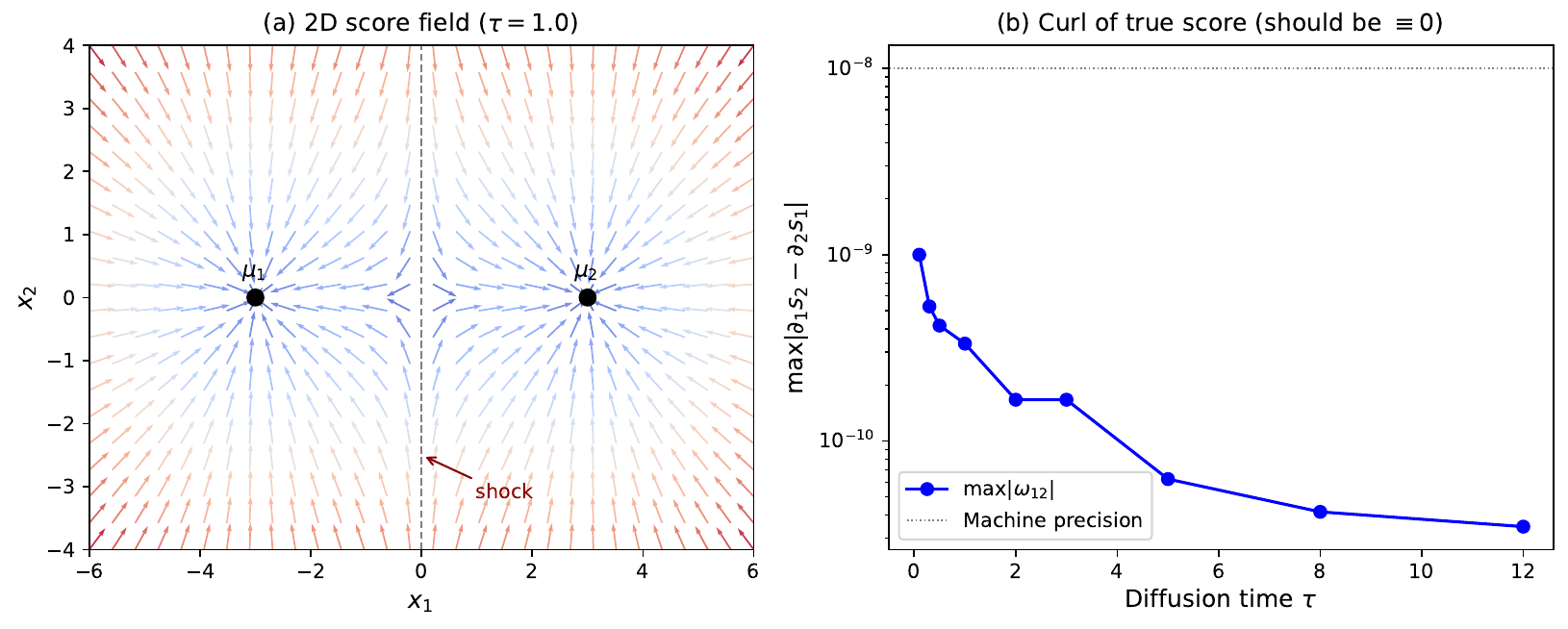}
\caption{Two-dimensional score geometry. Panel~(a) depicts the score field for a two-component Gaussian mixture in $\R^2$ at $\tau=1$; the dashed line marks the inter-mode boundary. Panel~(b) tracks the maximum curl magnitude $|\partial_1 s_2 - \partial_2 s_1|$ across random test points at several diffusion times. The values stay below $10^{-9}$ throughout.}
\label{fig:2d-score-curl}
\end{figure}

\begin{proof}
As $\sigma_\tau \to 0$, the posterior responsibility $r_k(\bm{x},\tau) \to \mathbf{1}[k = \argmax_m w_m\,\mathcal{N}(\bm{x};\bm{\mu}_m,\sigma_\tau^2\bm{I})]$, which for equal weights reduces to $k = \argmin_m |\bm{x} - \bm{\mu}_m|$.
On each Voronoi cell, $s(\bm{x},\tau) \approx -(\bm{x} - \bm{\mu}_k)/\sigma_\tau^2$---a smooth field pointing toward the nearest mean.
Across a Voronoi face $\Gamma_{jk}$, the score jumps discontinuously from the $\bm{\mu}_j$-directed field to the $\bm{\mu}_k$-directed field.
In this low-noise inviscid description, these discontinuities are the relevant shock surfaces of the vector Burgers equation.
\end{proof}

\subsection{A Gaussian-mixture specialization of the local criterion in \texorpdfstring{$\R^d$}{Rd}}\label{sec:md:speciation}

\begin{proposition}[Leading-order Gaussian-mixture specialization in $\R^d$]\label{thm:d-speciation}
For the equal-covariance GMM~\eqref{eq:gmm-def}, the exact local criterion of \Cref{thm:local-speciation} can be expanded explicitly at the weighted mean $\bar{\bm{x}} = \sum_k w_k \bm{\mu}_k$. In the high-noise limit $\sigma_\tau^2 \gg \lambda_1(\bm{W})$, the score Jacobian $\bm{J}(\bm{x},\tau) = \nabla \bm{s}(\bm{x},\tau)$ satisfies:
\begin{equation}\label{eq:jacobian-high-noise}
  \bm{J}(\bar{\bm{x}},\tau) \approx -\frac{\bm{I}}{\sigma_\tau^2} + \frac{\bm{W}}{\sigma_\tau^4} + O(\sigma_\tau^{-6}),
\end{equation}
where $\bm{W}$ is the between-class covariance~\eqref{eq:W-def}.
The eigenvalues of $\bm{J}$ are
\begin{equation}\label{eq:jacobian-eigs}
  \lambda_i^{(J)} = \frac{\lambda_i^{(W)} - \sigma_\tau^2}{\sigma_\tau^4} + O(\sigma_\tau^{-6}).
\end{equation}
The first speciation is predicted at leading order along the leading eigenvector $\bm{e}_1$ of $\bm{W}$ when $\lambda_1^{(J)} \approx 0$, at the critical time
\begin{equation}\label{eq:tau-star-d}
  \tau^\ast_{\mathrm{LO}} = \frac{\lambda_1(\bm{W}) - \sigma_0^2}{2}.
\end{equation}
This leading-order threshold coincides with the spectral criterion of \citet{biroli2024dynamical} and becomes exact when the posterior responsibilities remain equal at $\bar{\bm{x}}$ (see \Cref{sec:correction}).
For hierarchical data with $\lambda_1 > \lambda_2 > \cdots$, the leading-order cascade is $\tau_{i,\mathrm{LO}}^\ast = (\lambda_i(\bm{W}) - \sigma_0^2)/2$, matching the hierarchical phase transitions of \citet{sclocchi2024phase} at this order.
\end{proposition}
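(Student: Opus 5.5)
The plan is to compute the score Jacobian of the noised mixture exactly and then expand it at $\bar{\bm{x}}$. Write $r_k(\bm{x},\tau)$ for the posterior responsibilities $w_k\mathcal{N}(\bm{x};\bm{\mu}_k,\sigma_\tau^2\bm{I})/p_\tau(\bm{x})$. Differentiating \eqref{eq:gmm-noised} gives $\bm{s}(\bm{x},\tau) = -(\bm{x}-\sum_k r_k\bm{\mu}_k)/\sigma_\tau^2$. Differentiating once more, and using $\nabla r_k = r_k(\bm{\mu}_k - \sum_m r_m\bm{\mu}_m)/\sigma_\tau^2$, yields the exact identity
\[
  \bm{J}(\bm{x},\tau) = -\frac{\bm{I}}{\sigma_\tau^2} + \frac{\Cov_{r(\bm{x})}(\bm{\mu})}{\sigma_\tau^4},
\]
where $\Cov_{r}(\bm{\mu}) = \sum_k r_k \bm{\mu}_k\bm{\mu}_k^\top - (\sum_k r_k\bm{\mu}_k)(\sum_k r_k\bm{\mu}_k)^\top$ is the covariance of the means under the posterior weights. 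This is the $d$-dimensional analogue of the $\sech^2$ computation in \Cref{prop:sx-origin}. In the binary case it is exactly the decomposition of \Cref{thm:local-tanh,thm:local-speciation}: the $\Cov$ term is the $\kappa^2/4$ interfacial contribution along $\hat{\bm n}$.

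Next, evaluate the identity at $\bar{\bm{x}}$. Write $\bm{x}=\bar{\bm{x}}$ and note that $\log r_k(\bar{\bm{x}}) = \log w_k - |\bm{\nu}_k|^2/(2\sigma_\tau^2) + \text{const}$. Hence $r_k(\bar{\bm{x}}) = w_k\bigl(1 + O(\max_m|\bm{\nu}_m|^2/\sigma_\tau^2)\bigr)$ after normalization. Since $|\bm{\nu}_m|^2 \lesssim \lambda_1(\bm{W})/w_m$, this correction is small in the regime $\sigma_\tau^2\gg\lambda_1(\bm{W})$. Substituting $r_k = w_k$ makes $\sum_k r_k\bm{\mu}_k = \bar{\bm{x}}$ and $\Cov_r(\bm{\mu}) = \bm{W}$, which gives \eqref{eq:jacobian-high-noise}. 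The perturbation of $\Cov_r$ is $O(\sigma_\tau^{-2})$ relative to $\bm{W}$, so after division by $\sigma_\tau^4$ it contributes $O(\sigma_\tau^{-6})$.

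The main obstacle is stating this remainder honestly. Its constant depends on the fourth moments of the $\bm{\nu}_k$ and on the weights, not just on $\bm{W}$. I would write it as $\sigma_\tau^{-6}\sum_k w_k|\bm\nu_k|^4$ times an absolute constant, bounded using a first-order expansion of the softmax. I would also note that it vanishes identically when the $|\bm{\nu}_k|$ are equal, for example in the symmetric binary case. That is the ``equal responsibilities'' exactness clause, and I would defer the details to \Cref{sec:correction}.

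The remaining steps are linear algebra. $\bm{J}(\bar{\bm{x}})$ is, up to $O(\sigma_\tau^{-6})$, a symmetric matrix diagonal in the eigenbasis $\{\bm{e}_i\}$ of $\bm{W}$. Weyl's inequality then gives \eqref{eq:jacobian-eigs} with the same error order. The top eigenvalue $\lambda_1^{(J)}$ is the first to cross zero as $\tau$ decreases. Setting $\sigma_\tau^2=\sigma_0^2+2\tau = \lambda_1(\bm{W})$ gives \eqref{eq:tau-star-d}, and the direction of instability is $\bm{e}_1$. This is the same computation as \eqref{eq:biroli-criterion}, so the threshold matches the spectral criterion of \citet{biroli2024dynamical}.

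Finally, the cascade follows by repeating the argument with $\lambda_i$ in place of $\lambda_1$: each eigen-direction has $\lambda_i^{(J)}$ changing sign at $\sigma_\tau^2=\lambda_i(\bm{W})$. I would flag that this is only a leading-order statement. Near later thresholds $\sigma_\tau^2$ is no longer $\gg\lambda_1$, so the expansion at $\bar{\bm{x}}$ degrades, and this is why the statement says ``at this order.''
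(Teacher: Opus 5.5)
Your proposal is correct and follows the paper's proof essentially step for step: the exact identity $\bm{J} = -\bm{I}/\sigma_\tau^2 + \bm{C}/\sigma_\tau^4$ with $\bm{C}$ the posterior covariance of the means, then the high-noise substitution $r_k(\bar{\bm{x}}) \to w_k$ giving $\bm{C} \to \bm{W}$, then the threshold read off by setting the leading-order top eigenvalue to zero; your Weyl-inequality step and explicit fourth-moment remainder just make precise what the paper treats as immediate. Two small cautions: the extrapolation outside $\sigma_\tau^2 \gg \lambda_1(\bm{W})$ already occurs at the first threshold $\sigma_\tau^2 = \lambda_1(\bm{W})$, not only at the later ones (which is why the corrections of \Cref{sec:correction} matter there too), and in the binary case with $w_1 \neq w_2$ the point $\bar{\bm{x}}$ does not lie on $\Gamma_\tau$, so the covariance term there equals $\tfrac14\sech^2(\phi/2)\kappa^2$ along $\hat{\bm{n}}$ rather than $\kappa^2/4$.
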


\begin{proof}
The score of the GMM~\eqref{eq:gmm-noised} at $\bm{x}$ can be written as
$\bm{s}(\bm{x},\tau) = -\bm{x}/\sigma_\tau^2 + \sigma_\tau^{-2}\sum_k r_k(\bm{x},\tau)\,\bm{\mu}_k$,
where $r_k = w_k\mathcal{N}(\bm{x};\bm{\mu}_k,\sigma_\tau^2\bm{I})/\sum_m w_m\mathcal{N}(\bm{x};\bm{\mu}_m,\sigma_\tau^2\bm{I})$ are the posterior responsibilities.
Differentiating $r_k$ with respect to $x_j$ and evaluating at $\bar{\bm{x}}$:
\[
  \partial_j r_k\big|_{\bar{\bm{x}}} = \frac{r_k}{\sigma_\tau^2}\bigl[\mu_{k,j} - \tilde{\mu}_j\bigr],
\]
where $\tilde{\bm{\mu}} = \sum_m r_m \bm{\mu}_m$ is the posterior mean.
The Jacobian is then $J_{ij} = -\delta_{ij}/\sigma_\tau^2 + C_{ij}/\sigma_\tau^4$, where $C_{ij} = \sum_k r_k \mu_{k,i}\mu_{k,j} - \tilde{\mu}_i\tilde{\mu}_j$ is the posterior covariance of the means.

In the high-noise limit, $r_k \to w_k$, $\tilde{\bm{\mu}} \to \bar{\bm{x}}$, and $C_{ij} \to W_{ij}$, giving~\eqref{eq:jacobian-high-noise}.
The eigenvalues follow immediately.
Setting the leading-order approximation $\lambda_1^{(J)} \approx 0$ gives $\sigma_\tau^2 = \lambda_1(\bm{W})$, i.e., $\tau^\ast_{\mathrm{LO}} = (\lambda_1(\bm{W}) - \sigma_0^2)/2$.

The connection to \citet{biroli2024dynamical} follows because their speciation criterion is $\lambda_1(\bm{W})/\sigma_\tau^2 = 1$~\citep[Eq.~(7)]{biroli2024dynamical}, which is equivalent at this order.
For hierarchical speciation, each eigenvalue $\lambda_i$ crossing $\sigma_\tau^2$ triggers a new leading-order bifurcation along $\bm{e}_i$, matching the cascade described by \citet{sclocchi2024phase}.
\end{proof}

\begin{remark}[Matrix Riccati structure]\label{rem:riccati}
Along the inviscid vector Burgers characteristics through $\bar{\bm{x}}$ (where $\bm{s} \approx \bm{0}$), the Jacobian $\bm{J}$ satisfies the matrix Riccati equation $d\bm{J}/d\sigma = -2\bm{J}^2$ (with $\sigma = \tau_T - \tau$).
For a symmetric matrix $\bm{J}$ with eigenvalues $\lambda_i(0) < 0$ (unimodal regime), each eigenvalue evolves as $\lambda_i(\sigma) = 1/(2\sigma + 1/\lambda_i(0))$, which diverges at $\sigma_i^\ast = -1/(2\lambda_i(0))$.
The first divergence determines the corresponding leading-order threshold, yielding the same $\tau^\ast_{\mathrm{LO}}$ as above.
\end{remark}

\section{The VP-SDE via Coordinate Reduction}\label{sec:vp}

The VP-SDE~\eqref{eq:vp-sde} introduces a mean-reverting drift $-\tfrac{1}{2}\beta(t)\bm{x}$ in addition to diffusion, leading to a \emph{forced} Burgers equation for the score.
An exact coordinate transformation reduces the VP analysis to the VE case studied in the preceding sections, yielding closed-form speciation times and interfacial widths.

\subsection{The VP score PDE}\label{sec:vp:pde}

For reference, we record the VP score PDE in one dimension.

\begin{theorem}[VP score PDE]\label{thm:vp-score-pde}
Under the VP forward process~\eqref{eq:vp-sde} in $d=1$, the score $s(x,t) = \partial_x \log p(x,t)$ satisfies
\begin{equation}\label{eq:vp-score-pde}
  \frac{\partial s}{\partial t}
  = \frac{\beta(t)}{2}\!\left[
      \frac{\partial^2 s}{\partial x^2}
    + 2\,s\,\frac{\partial s}{\partial x}
    + x\,\frac{\partial s}{\partial x}
    + s
  \right].
\end{equation}
\end{theorem}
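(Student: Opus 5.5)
The plan is to mirror the proof of \Cref{thm:score-pde}: write the one-dimensional VP Fokker--Planck equation~\eqref{eq:vp-fpe} in terms of $\ell=\log p$, then differentiate in $x$. From~\eqref{eq:vp-fpe} with $d=1$,
\[
  p_t = \tfrac{\beta}{2}\bigl[(xp)_x + p_{xx}\bigr] = \tfrac{\beta}{2}\bigl[p + x p_x + p_{xx}\bigr].
\]
Dividing by $p>0$ and using~\eqref{eq:pxx} ($p_{xx}/p = s_x + s^2$) together with $p_x/p = s$, I will obtain the log-density equation
\[
  \partial_t \log p = \tfrac{\beta(t)}{2}\bigl[\,1 + x\,s + s_x + s^2\,\bigr].
\]
This is the VP analogue of the identity $\partial_\tau\log p = \Delta\log p + |\nabla\log p|^2$ used in \Cref{prop:local-phi-pde}.

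Next I will apply $\partial_x$ to both sides and commute $\partial_x$ with $\partial_t$, which is legitimate for a smooth positive solution. The left side becomes $s_t$. On the right, the constant $1$ drops out. The remaining terms give $\partial_x(xs)=s+xs_x$, $\partial_x s_x=s_{xx}$, and $\partial_x s^2 = 2ss_x$. Collecting these yields exactly
\[
  s_t = \tfrac{\beta}{2}\bigl[s_{xx}+2ss_x+xs_x+s\bigr],
\]
which is~\eqref{eq:vp-score-pde}.

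As a cross-check I will also verify the result via the quotient route of~\eqref{eq:s-tau-quotient}: write $s_t = (p_t)_x/p - s\,p_t/p$ and substitute~\eqref{eq:pxx} and~\eqref{eq:pxxx}. The extra drift contributions $(p+xp_x)_x/p - s(1+xs) = 2s + xs_x + xs^2 - s - xs^2$ collapse to $s + xs_x$, in agreement with the first computation. Setting $\beta$-terms to the VE form, i.e.\ dropping the $x s_x + s$ forcing, recovers~\eqref{eq:score-pde-physical} with $\nu=\beta/2$, which serves as a consistency check.

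There is no real obstacle. The only points needing care are bookkeeping the factor $\beta/2$ and the sign of the drift. The drift $f=-\tfrac{1}{2}\beta x$ enters~\eqref{eq:fpe} as $-\partial_x(fp)=+\tfrac{\beta}{2}(xp)_x$, and this is what produces the $+xs_x+s$ forcing terms rather than their negatives.
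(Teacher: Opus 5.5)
Your proposal is correct and is essentially the paper's own proof. The paper writes $\partial_t p = \tfrac{\beta}{2}A\,p$ with $A = 1 + xs + s_x + s^2$ and applies the quotient formula $s_t = \partial_t(p_x)/p - s\,\partial_t p/p$ to obtain $s_t = \tfrac{\beta}{2}A_x$, which is your identity $\partial_t \log p = \tfrac{\beta}{2}A$ differentiated in $x$; your drift-sign bookkeeping and your cross-check arithmetic are both right.
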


\begin{proof}
From the VP Fokker--Planck equation~\eqref{eq:vp-fpe} with $\nu = \beta(t)/2$:
$\partial_t p = \nu[p + x\,\partial_x p + \partial_x^2 p] = \nu[1 + xs + s_x + s^2]\,p,$
where we used $\partial_x p = sp$ and $\partial_x^2 p = (s_x + s^2)p$.
Define $A = 1 + xs + s_x + s^2$.
Then $\partial_t(\partial_x p) = \nu\,\partial_x(Ap) = \nu(A_x + As)\,p$, where
$A_x = s + xs_x + s_{xx} + 2ss_x$.
Hence:
\begin{align*}
  \partial_t s
  &= \frac{\partial_t(\partial_x p)}{p} - s\,\frac{\partial_t p}{p}
   = \nu\bigl[A_x + As\bigr] - \nu\,s\,A\\
  &= \nu\,A_x
   = \nu\bigl[s + xs_x + s_{xx} + 2ss_x\bigr]
   = \frac{\beta(t)}{2}\bigl[s_{xx} + 2ss_x + xs_x + s\bigr]. \qedhere
\end{align*}
\end{proof}

\begin{remark}[Structure]\label{rem:vp-structure}
Equation~\eqref{eq:vp-score-pde} decomposes as
\begin{equation}\label{eq:vp-decomposition}
  \partial_t s
  = \underbrace{\frac{\beta}{2}\bigl(s_{xx} + 2\,s\,s_x\bigr)}_{\text{Burgers (VE)}}
  + \underbrace{\frac{\beta}{2}\bigl(x\,s_x + s\bigr)}_{\text{OU forcing}}
  = \frac{\beta}{2}\,\frac{\partial}{\partial x}\bigl[s_x + s^2 + x\,s\bigr],
\end{equation}
where the OU forcing $xs_x + s = \partial_x(xs)$ acts as a source term.
The Cole--Hopf variable $u = -2s$ satisfies a forced Burgers equation with linear advection and growth~\citep{whitham1974linear}.
Rather than analyzing this forced equation directly, we reduce it to the pure VE case via a coordinate transformation.
\end{remark}

\subsection{The rescaling transformation}\label{sec:vp:transform}

\begin{definition}[Effective diffusion time]\label{def:tau-eff}
For the VP-SDE, recall the signal attenuation $\alpha(t)$ from \Cref{sec:prelim}. Define the \emph{effective VE diffusion time}:
\begin{equation}\label{eq:tau-eff}
  \tau_{\mathrm{eff}}(t)
  = \frac{1 - \alpha(t)^2}{2\,\alpha(t)^2}.
\end{equation}
\end{definition}

\begin{lemma}[Density under rescaling]\label{lem:rescaling}
Define the rescaled variable $Z_t = X_t / \alpha(t)$.
Then the density $q_t(z)$ of $Z_t$ satisfies $q_t = p_0 * G_{\tau_{\mathrm{eff}}(t)}$, i.e., $q_t$ solves the VE heat equation at effective time $\tau_{\mathrm{eff}}(t)$.
\end{lemma}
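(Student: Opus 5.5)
The plan is to use the explicit solution of the OU forward process rather than any PDE manipulation. Conditioned on $X_0$, we have $X_t = \alpha(t)X_0 + \sqrt{1-\alpha(t)^2}\,\xi$ with $\xi\sim\mathcal N(0,1)$ independent of $X_0$, as recorded in \Cref{sec:prelim:forward}. Dividing by $\alpha(t)>0$ gives
\[
Z_t = X_0 + \frac{\sqrt{1-\alpha(t)^2}}{\alpha(t)}\,\xi .
\]
So $Z_t$ is the data point plus an independent centered Gaussian with variance $(1-\alpha^2)/\alpha^2 = 2\tau_{\mathrm{eff}}(t)$. The law of a sum of independent variables is the convolution of their laws. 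The density of $\mathcal N(0,2\tau)$ is exactly the heat kernel $G_\tau$ of \Cref{sec:prelim:tau}, since $(4\pi\tau)^{-1/2}e^{-z^2/(4\tau)}$ is the Gaussian density with variance $2\tau$. Hence $q_t = p_0 * G_{\tau_{\mathrm{eff}}(t)}$. The same argument works in $\R^d$ with $\xi\sim\mathcal N(0,\bm I)$ and the product kernel.

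Next I would check the claim that $q_t$ solves the heat equation at effective time. The family $\tau\mapsto p_0*G_\tau$ is the heat flow from \eqref{eq:heat}, and $q_t$ is that family evaluated at $\tau=\tau_{\mathrm{eff}}(t)$. By the chain rule, $\partial_t q_t = \dot\tau_{\mathrm{eff}}(t)\,\Delta q_t$. This is a VE heat equation with time-dependent diffusivity $\dot\tau_{\mathrm{eff}}(t)$. A direct computation gives
\[
\dot\tau_{\mathrm{eff}} = -\dot\alpha/\alpha^3 = \beta(t)/(2\alpha^2) > 0 ,
\]
using $\dot\alpha = -\tfrac12\beta\alpha$. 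So $\tau_{\mathrm{eff}}$ is a legitimate increasing reparametrization with $\tau_{\mathrm{eff}}(0)=0$.

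As a cross-check, and as an alternative route, I would verify the result from the PDE side. Write $p(x,t) = \alpha^{-d} q(x/\alpha, t)$ and substitute into \eqref{eq:vp-fpe}. The drift term $\tfrac{\beta}{2}\nabla\cdot(xp)$ should cancel against the $\dot\alpha$-terms produced by differentiating the rescaling. What remains is $\partial_t q = \tfrac{\beta}{2\alpha^2}\Delta q$, matching the previous step.

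The only mildly delicate point is the convention bookkeeping: making sure the variance $2\tau$ in the paper's normalization of $G_\tau$ matches $(1-\alpha^2)/\alpha^2$. This is what fixes the factor $1/2$ in \eqref{eq:tau-eff}. Beyond that, and the change-of-variables Jacobian $\alpha^{-d}$ in the PDE check, I expect no real obstacle.
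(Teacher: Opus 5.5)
Your proposal is correct and follows the paper's proof. Both condition on $X_0$, note that $Z_t \mid X_0 \sim \mathcal{N}\bigl(X_0, (1-\alpha^2)/\alpha^2\,\bm{I}\bigr)$ has variance $2\tau_{\mathrm{eff}}(t)$, and identify the marginal as $p_0 * G_{\tau_{\mathrm{eff}}(t)}$. Your chain-rule computation $\dot\tau_{\mathrm{eff}} = \beta/(2\alpha^2)$ and the direct Fokker--Planck cross-check are correct extras the paper omits, and they make precise what ``solves the VE heat equation at effective time'' means.
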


\begin{proof}
The VP conditional is $X_t \mid X_0 \sim \mathcal{N}(\alpha_t X_0,\, (1-\alpha_t^2)\bm{I})$~\citep{song2021score}.
Thus $Z_t \mid X_0 \sim \mathcal{N}(X_0,\, (1-\alpha_t^2)/\alpha_t^2\,\bm{I})$.
The marginal density of $Z_t$ is
$q_t(z) = \int p_0(y)\,\mathcal{N}(z;\,y,\,(1-\alpha_t^2)/\alpha_t^2\,\bm{I})\,dy = (p_0 * G_{\tau_{\mathrm{eff}}})(z)$,
where the Gaussian kernel has variance $(1-\alpha_t^2)/\alpha_t^2 = 2\tau_{\mathrm{eff}}(t)$.
\end{proof}

\begin{theorem}[VP--VE score equivalence]\label{thm:vp-ve}
The VP and VE scores are related by
\begin{equation}\label{eq:vp-ve-score}
  s_{\mathrm{VP}}(x,t) = \frac{1}{\alpha(t)}\,s_{\mathrm{VE}}\!\left(\frac{x}{\alpha(t)},\, \tau_{\mathrm{eff}}(t)\right),
\end{equation}
where $s_{\mathrm{VE}}(z,\tau) = \partial_z \log(p_0 * G_\tau)(z)$ is the VE score satisfying the pure Burgers equation~\eqref{eq:score-pde}.
\end{theorem}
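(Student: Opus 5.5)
The plan is to obtain \eqref{eq:vp-ve-score} from the change-of-variables formula for densities, using \Cref{lem:rescaling} to identify the rescaled density, and then differentiate the logarithm. Since $Z_t = X_t/\alpha(t)$ is a deterministic linear rescaling at each fixed $t$, the densities satisfy
\[
  p(x,t) = \alpha(t)^{-d}\, q_t\!\left(\frac{x}{\alpha(t)}\right),
\]
with $d=1$ in the scalar setting of the statement. The Jacobian factor $\alpha(t)^{-d}$ is independent of $x$.

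By \Cref{lem:rescaling}, $q_t = p_0 * G_{\tau_{\mathrm{eff}}(t)}$. Hence
\[
  \log p(x,t) = -d\log\alpha(t) + \log\bigl(p_0 * G_{\tau_{\mathrm{eff}}(t)}\bigr)\!\left(\frac{x}{\alpha(t)}\right).
\]
Differentiating in $x$ kills the constant term. The chain rule then contributes the factor $1/\alpha(t)$, which yields $s_{\mathrm{VP}}(x,t) = \alpha(t)^{-1} s_{\mathrm{VE}}(x/\alpha(t),\tau_{\mathrm{eff}}(t))$. The same computation works componentwise in $\R^d$, giving the vector version. For $t>0$ we have $\alpha(t)<1$ and $\tau_{\mathrm{eff}}(t)>0$, so $q_t$ is strictly positive and smooth (see \Cref{sec:prelim:tau}) and the logarithm is legitimate. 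At $t=0$ the identity holds trivially whenever $p_0$ is smooth and positive.

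The second claim is that $s_{\mathrm{VE}}$ satisfies the pure Burgers equation \eqref{eq:score-pde}. This is immediate from \Cref{thm:score-pde}, because $p_0*G_\tau$ is a positive smooth solution of \eqref{eq:heat} for $\tau>0$.

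As a consistency check, I would verify that substituting \eqref{eq:vp-ve-score} into the VP score PDE \eqref{eq:vp-score-pde} reproduces it. This requires the two derivatives of the schedule. From $\dot\alpha = -\tfrac12\beta\alpha$ one gets $\dot\tau_{\mathrm{eff}} = \beta/(2\alpha^2)$. Combined with the chain-rule factor $1/\alpha$ and the $\partial_t(x/\alpha) = \tfrac12\beta x/\alpha$ term, these generate exactly the OU forcing $x s_x + s$. This check is not needed for the proof, but it confirms the normalization.

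The main obstacle is only bookkeeping. One must confirm that the variance $(1-\alpha^2)/\alpha^2$ matches the heat-kernel convention $2\tau$ for $G_\tau$, so that $\tau_{\mathrm{eff}}$ has the stated factor of $2$. One must also track the single factor of $1/\alpha$ from differentiating the composed argument.
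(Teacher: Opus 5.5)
Your proposal is correct and follows the paper's proof essentially verbatim: you write $p_t(x)=\alpha(t)^{-d}q_t(x/\alpha(t))$, identify $q_t$ with $p_0*G_{\tau_{\mathrm{eff}}(t)}$ via \Cref{lem:rescaling}, and pick up the single chain-rule factor $1/\alpha(t)$ when differentiating the logarithm. The consistency check against \eqref{eq:vp-score-pde} is an extra the paper does not carry out, and your derivative $\dot\tau_{\mathrm{eff}}=\beta/(2\alpha^2)$ used in it is correct.
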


\begin{proof}
By the change-of-variables formula, $p_t(x) = \alpha_t^{-1}\,q_t(x/\alpha_t)$ (in $d = 1$; in $d$ dimensions, $\alpha_t^{-d}$).
Therefore:
\[
  s_{\mathrm{VP}}(x,t)
  = \partial_x \log p_t(x)
  = \partial_x \log q_t(x/\alpha_t)
  = \frac{1}{\alpha_t}\,(\partial_z \log q_t)\big|_{z=x/\alpha_t}
  = \frac{1}{\alpha_t}\,s_{\mathrm{VE}}(x/\alpha_t, \tau_{\mathrm{eff}}(t)),
\]
where we used \Cref{lem:rescaling} to identify $q_t$ with the VE density at time $\tau_{\mathrm{eff}}$.
\end{proof}

\subsection{VP speciation time}\label{sec:vp:speciation}

\begin{corollary}[VP speciation time]\label{cor:vp-speciation}
For the symmetric binary GMM~\eqref{eq:binary-gmm} under the VP-SDE with constant $\beta$, the speciation time satisfies
\begin{equation}\label{eq:vp-speciation}
  \tau_{\mathrm{eff}}(t^\ast_{\mathrm{VP}}) = \tau^\ast_{\mathrm{VE}} = \frac{a^2 - \sigma_0^2}{2}.
\end{equation}
Solving for $t^\ast_{\mathrm{VP}}$:
\begin{equation}\label{eq:vp-speciation-explicit}
  t^\ast_{\mathrm{VP}}
  = \frac{1}{\beta}\,\ln\!\bigl(a^2 - \sigma_0^2 + 1\bigr).
\end{equation}
\end{corollary}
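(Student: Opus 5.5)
The plan is to transport the VE speciation criterion through the VP--VE score equivalence of \Cref{thm:vp-ve} and then invert $\tau_{\mathrm{eff}}$ explicitly for a constant schedule.

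\textbf{Step 1: reduce the VP criterion to the VE one.} Speciation for the VP process is defined as before: the midpoint derivative $\partial_x s_{\mathrm{VP}}(0,t)$ changes sign. Differentiating \eqref{eq:vp-ve-score} in $x$ gives
\[
  \partial_x s_{\mathrm{VP}}(x,t) = \frac{1}{\alpha(t)^2}\,\partial_z s_{\mathrm{VE}}\!\left(\frac{x}{\alpha(t)},\tau_{\mathrm{eff}}(t)\right).
\]
Since $\alpha(t)^2>0$, the sign of $\partial_x s_{\mathrm{VP}}(0,t)$ equals the sign of $s_x^{\mathrm{VE}}(0,\tau_{\mathrm{eff}}(t))$.

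\textbf{Step 2: locate the VE zero.} By \Cref{prop:sx-origin}, $s_x^{\mathrm{VE}}(0,\tau_{\mathrm{eff}}(t))$ vanishes exactly when $\sigma_0^2+2\tau_{\mathrm{eff}}(t)=a^2$. The VP criterion therefore becomes $\tau_{\mathrm{eff}}(t^\ast_{\mathrm{VP}})=\tau^\ast_{\mathrm{VE}}$, which is \eqref{eq:vp-speciation}.

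Uniqueness of the crossing needs one more fact: $\tau_{\mathrm{eff}}$ is strictly increasing in $t$. This holds because $\alpha(t)$ is strictly decreasing when $\beta>0$. Strict monotonicity makes $t^\ast_{\mathrm{VP}}$ well defined, and the sign pattern (unimodal for $t>t^\ast$, bimodal for $t<t^\ast$) carries over from \Cref{prop:sx-origin}.

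\textbf{Step 3: invert for constant $\beta$.} With constant $\beta$, $\alpha(t)^2=e^{-\beta t}$, so
\[
  \tau_{\mathrm{eff}}(t)=\tfrac{1}{2}\bigl(e^{\beta t}-1\bigr).
\]
Setting this equal to $(a^2-\sigma_0^2)/2$ gives $e^{\beta t}=a^2-\sigma_0^2+1$, hence $t^\ast_{\mathrm{VP}}=\beta^{-1}\ln(a^2-\sigma_0^2+1)$. This is positive under the standing assumption $a>\sigma_0$.

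\textbf{Main obstacle.} The algebra is routine; the delicate step is justifying that the rescaled initial datum is the same $p_0$. \Cref{lem:rescaling} gives $q_t=p_0*G_{\tau_{\mathrm{eff}}}$, where $p_0$ is the original mixture with variance $\sigma_0^2$. Consequently $\sigma_{\tau_{\mathrm{eff}}}^2=\sigma_0^2+2\tau_{\mathrm{eff}}$ and the VE formula applies verbatim. I would state this explicitly, together with the fact that the sign criterion is invariant under the spatial rescaling $x\mapsto x/\alpha$, which fixes the midpoint $0$.
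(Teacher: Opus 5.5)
Your proposal is correct and follows the same route as the paper: you transport the VE threshold $\sigma_{\tau}^2 = a^2$ through \Cref{thm:vp-ve} to get $\tau_{\mathrm{eff}}(t^\ast_{\mathrm{VP}}) = \tau^\ast_{\mathrm{VE}}$, then invert using $\alpha(t)^2 = e^{-\beta t}$. Your chain-rule identity $\partial_x s_{\mathrm{VP}} = \alpha^{-2}\,\partial_z s_{\mathrm{VE}}$ and the strict monotonicity of $\tau_{\mathrm{eff}}$ spell out the sign equivalence and uniqueness that the paper's one-line reduction leaves implicit.
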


\begin{proof}
From \Cref{thm:vp-ve}, the VP speciation occurs when the VE score (in $z$-coordinates) reaches the same speciation threshold, i.e., at VE diffusion time $\tau^\ast_{\mathrm{VE}}$.
Setting $\tau_{\mathrm{eff}}(t) = \tau^\ast_{\mathrm{VE}}$:
\[
  \frac{1 - \alpha^2}{2\alpha^2} = \frac{a^2 - \sigma_0^2}{2}
  \;\;\Longrightarrow\;\;
  1 - \alpha^2 = \alpha^2(a^2 - \sigma_0^2)
  \;\;\Longrightarrow\;\;
  \alpha^2 = \frac{1}{a^2 - \sigma_0^2 + 1}.
\]
For constant $\beta$: $\alpha(t) = e^{-\beta t/2}$, so $e^{-\beta t} = 1/(a^2 - \sigma_0^2 + 1)$, giving~\eqref{eq:vp-speciation-explicit}.
\end{proof}

\subsection{VP interfacial width}\label{sec:vp:profile}

\begin{corollary}[VP interfacial width]\label{cor:vp-shock}
The background-subtracted VP score layer at $x = 0$ for the symmetric binary mixture has width (in $x$-space):
\begin{equation}\label{eq:vp-shock-width}
  \delta_{\mathrm{VP}}(t)
  = \alpha(t) \cdot \frac{\sigma_{\tau_{\mathrm{eff}}}^2}{a}
  = \frac{1 - \alpha(t)^2(1 - \sigma_0^2)}{a\,\alpha(t)}.
\end{equation}
\end{corollary}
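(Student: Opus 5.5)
The plan is to use \Cref{thm:vp-ve}, which expresses the VP score as a rescaled VE score, and then apply the symmetric Gaussian formula of \Cref{prop:shock-profile} in the rescaled $z$-coordinates.

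First, recall from \Cref{lem:rescaling} that $Z_t = X_t/\alpha(t)$ has density $q_t = p_0 * G_{\tau_{\mathrm{eff}}(t)}$. For the binary mixture~\eqref{eq:binary-gmm}, this is the symmetric mixture~\eqref{eq:binary-noised} at VE time $\tau_{\mathrm{eff}}(t)$. Its component variance is $\sigma_{\tau_{\mathrm{eff}}}^2 = \sigma_0^2 + 2\tau_{\mathrm{eff}}(t)$. \Cref{prop:exact-score} then gives the closed form
\[
  s_{\mathrm{VE}}(z,\tau_{\mathrm{eff}}) = -\frac{z}{\sigma_{\tau_{\mathrm{eff}}}^2} + \frac{a}{\sigma_{\tau_{\mathrm{eff}}}^2}\tanh\!\left(\frac{az}{\sigma_{\tau_{\mathrm{eff}}}^2}\right).
\]

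Second, substitute this into \eqref{eq:vp-ve-score} with $z = x/\alpha$. This gives
\[
  s_{\mathrm{VP}}(x,t) = -\frac{x}{\alpha^2\sigma_{\tau_{\mathrm{eff}}}^2} + \frac{a}{\alpha\,\sigma_{\tau_{\mathrm{eff}}}^2}\tanh\!\left(\frac{ax}{\alpha\,\sigma_{\tau_{\mathrm{eff}}}^2}\right).
\]
Subtracting the linear background term leaves a pure $\tanh$ layer centred at $x=0$, exactly as in \Cref{prop:shock-profile}. Its width is read off the argument of the $\tanh$, using the same convention as \eqref{eq:shock-width}: the width is the inverse of the coefficient of $x$. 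This yields $\delta_{\mathrm{VP}} = \alpha\,\sigma_{\tau_{\mathrm{eff}}}^2/a$. Equivalently, the map $x = \alpha z$ stretches all $z$-lengths, including the VE width $\sigma_{\tau_{\mathrm{eff}}}^2/a$, by the factor $\alpha$.

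Third, simplify using \eqref{eq:tau-eff}. Since $2\tau_{\mathrm{eff}} = (1-\alpha^2)/\alpha^2$,
\[
  \sigma_{\tau_{\mathrm{eff}}}^2 = \sigma_0^2 + \frac{1-\alpha^2}{\alpha^2} = \frac{1 - \alpha^2(1-\sigma_0^2)}{\alpha^2}.
\]
Multiplying by $\alpha/a$ gives the stated formula~\eqref{eq:vp-shock-width}.

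None of these steps is difficult. The only point that needs care is the width convention in the second step. The prefactor $1/\alpha$ in \eqref{eq:vp-ve-score} rescales the amplitude of the score but not its width, and the width must be measured in $x$ rather than $z$. I would check this by also writing the VP background-subtracted profile in the classical Burgers shock form~\eqref{eq:classical-shock} and confirming that the same $\tanh$ argument is identified.
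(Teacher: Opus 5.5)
Your proposal is correct and follows essentially the same route as the paper. Both arguments invoke \Cref{thm:vp-ve}, observe that the VE layer width $\sigma_{\tau_{\mathrm{eff}}}^2/a$ in $z$-space maps to $\delta_{\mathrm{VP}}=\alpha\,\sigma_{\tau_{\mathrm{eff}}}^2/a$ under $x=\alpha z$, and then simplify $\sigma_{\tau_{\mathrm{eff}}}^2=(1-\alpha^2(1-\sigma_0^2))/\alpha^2$; your explicit substitution of the $\tanh$ formula simply makes the paper's mapping-back step concrete.
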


\begin{proof}
By \Cref{thm:vp-ve}, the VP score at $x$ is the VE score at $z = x/\alpha$ rescaled by $1/\alpha$.
The VE interfacial layer has width $\delta_{\mathrm{VE}} = \sigma_{\tau_{\mathrm{eff}}}^2/a$ in $z$-space (\Cref{prop:shock-profile}).
Mapping back to $x$-space: $\delta_{\mathrm{VP}} = \alpha\,\delta_{\mathrm{VE}}$.
Computing $\sigma_{\tau_{\mathrm{eff}}}^2 = \sigma_0^2 + 2\tau_{\mathrm{eff}} = \sigma_0^2 + (1-\alpha^2)/\alpha^2 = (1 - \alpha^2(1-\sigma_0^2))/\alpha^2$ gives~\eqref{eq:vp-shock-width}.
\end{proof}

\subsection{Summary: VP reduces to VE}\label{sec:vp:summary}

The key message of this section is that, for the results studied here, no separate analysis of the forced Burgers equation~\eqref{eq:vp-score-pde} is needed.
The rescaling $Z = X/\alpha(t)$ absorbs the OU drift entirely, reducing the VP score to a rescaled VE score.
Under this transformation, the VE Burgers correspondence (\Cref{thm:score-burgers}), the background-subtracted interfacial profile (\Cref{prop:shock-profile}), the speciation criterion (\Cref{thm:shock-speciation}), the error amplification (\Cref{thm:amplification}), and the curl preservation (\Cref{thm:curl-preservation}) translate directly to the VP setting.
\Cref{fig:vp-ve} makes this equivalence concrete: the transformed and direct VP scores overlap to machine precision, and the effective-time map sends the VP critical time exactly to the VE speciation time.

This unification has a practical consequence: noise schedule optimization for VP models~\citep{kingma2021variational, karras2022elucidating} can be analyzed entirely in the VE Burgers framework by working in the effective time~\eqref{eq:tau-eff}, reducing the design problem to choosing $\tau_{\mathrm{eff}}(t)$ to optimally traverse the interfacial layer.

\begin{figure}[t]
\centering
\includegraphics[width=\linewidth]{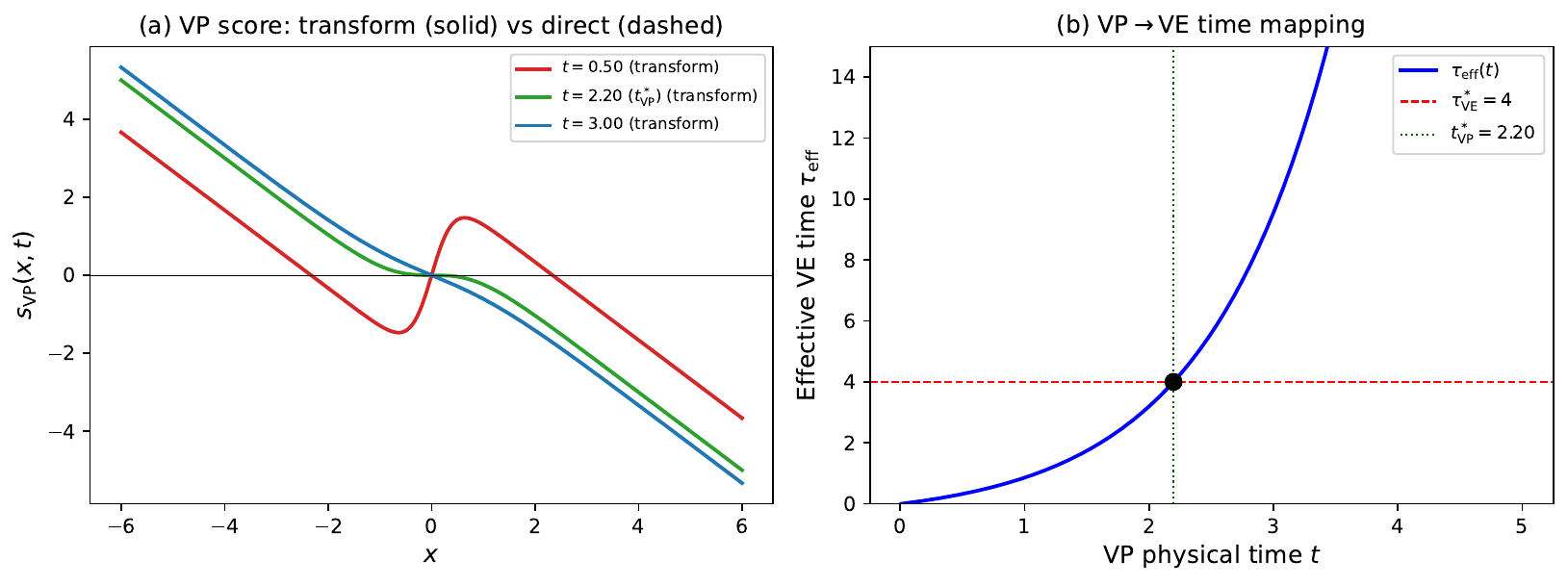}
\caption{VP--VE equivalence under the rescaling transformation. Panel~(a) overlays the VP score from the exact transformation $s_{\mathrm{VP}}(x,t)=\alpha(t)^{-1}s_{\mathrm{VE}}(x/\alpha(t),\tau_{\mathrm{eff}}(t))$ with the score computed directly from the VP marginal; on the plotted times the two are visually indistinguishable. Panel~(b) plots the effective-time map $\tau_{\mathrm{eff}}(t)$ together with the VP speciation time $t_{\mathrm{VP}}^\ast = \ln 9 \approx 2.20$, which lands exactly at the VE critical time $\tau_{\mathrm{VE}}^\ast = 4.0$.}
\label{fig:vp-ve}
\end{figure}

\section{Correction Terms for Asymmetric Mixtures}\label{sec:correction}

The leading-order speciation formula $\tau^\ast_{\mathrm{LO}} = (\lambda_1(\bm{W}) - \sigma_0^2)/2$ of \Cref{thm:d-speciation} becomes exact for symmetric arrangements (equal-weight binary mixtures, regular simplices) but admits corrections for general $K$-component mixtures.
Here we derive these corrections by expanding the posterior responsibilities in powers of $1/\sigma_\tau^2$ and tracing their effect on the score Jacobian.

\subsection{Posterior responsibilities at the weighted mean}\label{sec:corr:responsibilities}

\begin{definition}[Posterior responsibility]\label{def:responsibility}
For the GMM~\eqref{eq:gmm-noised}, the posterior responsibility of component $k$ at point $\bm{x}$ is
\begin{equation}\label{eq:rk-def}
  r_k(\bm{x},\tau) = \frac{w_k\,\mathcal{N}(\bm{x};\,\bm{\mu}_k,\,\sigma_\tau^2\bm{I})}{\sum_{m=1}^K w_m\,\mathcal{N}(\bm{x};\,\bm{\mu}_m,\,\sigma_\tau^2\bm{I})}.
\end{equation}
\end{definition}

At the weighted mean $\bar{\bm{x}} = \sum_k w_k \bm{\mu}_k$, define the squared distances $d_k^2 = |\bar{\bm{x}} - \bm{\mu}_k|^2 = |\bm{\nu}_k|^2$ and the dimensionless parameters $\eta_k = d_k^2/(2\sigma_\tau^2)$.

\begin{proposition}[Responsibility expansion]\label{prop:rk-expansion}
For large $\sigma_\tau^2$ (i.e., $\eta_k \ll 1$), the responsibilities at $\bar{\bm{x}}$ admit the expansion
\begin{equation}\label{eq:rk-expansion}
  r_k(\bar{\bm{x}},\tau)
  = w_k\!\left[1 + (\langle\eta\rangle - \eta_k) + \frac{(\eta_k - \langle\eta\rangle)^2 - \Var_w(\eta)}{2} + O(\eta^3)\right],
\end{equation}
where $\langle\eta\rangle = \sum_m w_m \eta_m$ and $\Var_w(\eta) = \langle\eta^2\rangle - \langle\eta\rangle^2$.
\end{proposition}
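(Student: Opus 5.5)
The plan is to write the responsibilities at $\bar{\bm{x}}$ as a softmax and Taylor-expand in the small parameters $\eta_k$. Since the Gaussian normalizing constants $(2\pi\sigma_\tau^2)^{-d/2}$ are common to all components, they cancel in \eqref{eq:rk-def}. This leaves
\[
  r_k(\bar{\bm{x}},\tau) = \frac{w_k e^{-\eta_k}}{\sum_m w_m e^{-\eta_m}},
\]
because $|\bar{\bm{x}}-\bm{\mu}_k|^2/(2\sigma_\tau^2) = \eta_k$ exactly. So the statement is purely about expanding a weighted softmax of $-\eta$ to second order, with no further dependence on geometry.

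First, I factor out the weighted mean. Set $\delta_k = \eta_k - \langle\eta\rangle$, and multiply numerator and denominator by $e^{\langle\eta\rangle}$, which gives
\[
  r_k = \frac{w_k e^{-\delta_k}}{\langle e^{-\delta}\rangle_w}.
\]
This centering is convenient because $\langle \delta\rangle_w = 0$ and $\langle\delta^2\rangle_w = \Var_w(\eta)$.

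Second, I expand the numerator and the denominator separately. The numerator is $e^{-\delta_k} = 1 - \delta_k + \tfrac12\delta_k^2 + O(\eta^3)$. The denominator is $\langle e^{-\delta}\rangle_w = 1 + \tfrac12\Var_w(\eta) + O(\eta^3)$, since the linear term vanishes.

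Third, I invert the denominator as $1 - \tfrac12\Var_w(\eta) + O(\eta^3)$ and multiply it into the numerator, keeping terms through second order. This yields
\[
  r_k = w_k\bigl[1 - \delta_k + \tfrac12\delta_k^2 - \tfrac12\Var_w(\eta) + O(\eta^3)\bigr],
\]
and since $-\delta_k = \langle\eta\rangle - \eta_k$, this is exactly \eqref{eq:rk-expansion}. As a sanity check, the bracketed corrections have zero $w$-average at each order, consistent with $\sum_k r_k = 1$.

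The only point needing care is the error bookkeeping. "$O(\eta^3)$" should mean $O(\max_m \eta_m^3)$ uniformly in $k$. This holds because $|\delta_k| \le 2\max_m \eta_m$, the $w_k$ are fixed, and every remainder is a cube of these bounded-small quantities (Taylor remainder for $e^{-\delta}$, and for $1/(1+x)$ with $x = O(\eta^2)$). I expect no real obstacle beyond confirming that no cross term such as $\delta_k\cdot\Var_w(\eta)$ is mistakenly retained: that term is third order and is dropped.
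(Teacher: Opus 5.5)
Your proof is correct and takes the same route as the paper: write $r_k(\bar{\bm{x}},\tau) = w_k e^{-\eta_k}/\sum_m w_m e^{-\eta_m}$, Taylor-expand the numerator and denominator to second order, and invert the denominator. The only difference is cosmetic: you center at $\langle\eta\rangle$ first, so the denominator has no linear term, whereas the paper expands without centering, uses $(1-\epsilon)^{-1} = 1+\epsilon+\epsilon^2+\cdots$, and then regroups the result into $(\eta_k-\langle\eta\rangle)^2 - \Var_w(\eta)$.
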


\begin{proof}
Write $r_k = w_k e^{-\eta_k}/\sum_m w_m e^{-\eta_m}$.
Expanding $e^{-\eta_k} = 1 - \eta_k + \eta_k^2/2 + O(\eta^3)$:
\[
  \sum_m w_m e^{-\eta_m} = 1 - \langle\eta\rangle + \langle\eta^2\rangle/2 + O(\eta^3).
\]
Dividing and expanding $(1-\epsilon)^{-1} = 1 + \epsilon + \epsilon^2 + \cdots$ with $\epsilon = \langle\eta\rangle - \langle\eta^2\rangle/2 + \cdots$:
\begin{align*}
  r_k &= w_k(1 - \eta_k + \eta_k^2/2)(1 + \langle\eta\rangle - \langle\eta^2\rangle/2 + \langle\eta\rangle^2 + \cdots)\\
      &= w_k\bigl[1 + (\langle\eta\rangle - \eta_k) + \tfrac{1}{2}\bigl((\eta_k - \langle\eta\rangle)^2 - \Var_w(\eta)\bigr) + O(\eta^3)\bigr].
\end{align*}
One verifies $\sum_k r_k = 1$ at each order: order~0 gives $\sum w_k = 1$; order~1 gives $\langle\langle\eta\rangle - \eta\rangle = 0$; order~2 gives $\langle(\eta - \langle\eta\rangle)^2 - \Var(\eta)\rangle/2 = 0$.
\end{proof}

\begin{corollary}[Exactness condition]\label{cor:exactness}
The responsibilities satisfy $r_k = w_k$ exactly if and only if all $\eta_k$ are equal, i.e., all component means are equidistant from $\bar{\bm{x}}$.
This holds for: (a)~$K = 2$ with $w_1 = w_2 = 1/2$; (b)~any~$K$ with equal weights and means forming a regular simplex centered at $\bar{\bm{x}}$.
\end{corollary}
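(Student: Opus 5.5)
The proof works directly from the closed form $r_k = w_k e^{-\eta_k}/\sum_m w_m e^{-\eta_m}$ used in the proof of \Cref{prop:rk-expansion}. This form is exact, not merely asymptotic. At $\bm{x}=\bar{\bm{x}}$ the common factor $(2\pi\sigma_\tau^2)^{-d/2}$ cancels, and $|\bar{\bm{x}}-\bm{\mu}_k|^2/(2\sigma_\tau^2)=\eta_k$. So the statement reduces to an elementary claim about the map $\eta\mapsto(w_k e^{-\eta_k}/Z)_k$ with $Z=\sum_m w_m e^{-\eta_m}$.

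For the ``if'' direction, suppose $\eta_k=\eta$ for all $k$. Then $Z=e^{-\eta}\sum_m w_m=e^{-\eta}$, so $r_k=w_k$ exactly at every $\tau$.

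For the ``only if'' direction, suppose $r_k=w_k$ for all $k$. Since every $w_k>0$, dividing gives $e^{-\eta_k}=Z$ for each $k$, independent of $k$. Because $\eta\mapsto e^{-\eta}$ is injective, all $\eta_k$ are equal. Since $\eta_k=d_k^2/(2\sigma_\tau^2)$ with a common positive denominator, this is equivalent to $d_1=\cdots=d_K$, i.e.\ all means are equidistant from $\bar{\bm{x}}$. This also shows the condition is independent of $\tau$. One could phrase the argument through the expansion \eqref{eq:rk-expansion}, whose first-order term $\langle\eta\rangle-\eta_k$ vanishes for all $k$ exactly when the $\eta_k$ coincide, but the direct argument avoids any truncation issue and is the one I would use.

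For the examples, case (a): with $K=2$ and $w_1=w_2=\tfrac12$ we get $\bar{\bm{x}}=(\bm{\mu}_1+\bm{\mu}_2)/2$, so $\bm{\nu}_1=-\bm{\nu}_2$ and $d_1=d_2$. Case (b): with equal weights, $\bar{\bm{x}}$ is the centroid of the means. For a regular simplex the centroid is equidistant from all vertices, by symmetry under the simplex's isometry group, which acts transitively on vertices and fixes the centroid. Alternatively, $|\bm{\nu}_k|^2$ is the same for all $k$ because all pairwise distances are equal and $\sum_k \bm{\nu}_k=0$: expanding $0=\langle \bm{\nu}_k,\sum_j \bm{\nu}_j\rangle$ together with $|\bm{\nu}_j-\bm{\nu}_k|^2=\ell^2$ yields $|\bm{\nu}_k|^2$ as a $k$-independent expression in $\ell$ and $\sum_j|\bm{\nu}_j|^2$.

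There is no real obstacle here. The only point needing care is to state that the equivalence is meant exactly, using the closed form of $r_k$ rather than the series \eqref{eq:rk-expansion}, and that strict positivity of the weights is what allows the division in the ``only if'' step.
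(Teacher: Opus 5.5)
Your proposal is correct and matches the paper's approach: the paper gives no separate proof and treats the corollary as immediate from the exact formula $r_k = w_k e^{-\eta_k}/\sum_m w_m e^{-\eta_m}$ in the proof of \Cref{prop:rk-expansion}, which is the formula you work from. Your treatment of both directions, the use of $w_k>0$, and the checks of cases (a) and (b) supply details the paper leaves to the reader, and arguing from the closed form rather than the truncated series makes the equivalence exact and independent of $\tau$.
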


\subsection{The corrected Jacobian}\label{sec:corr:jacobian}

The exact score Jacobian at $\bar{\bm{x}}$ is $J_{ij} = -\delta_{ij}/\sigma_\tau^2 + C_{ij}/\sigma_\tau^4$, where $\bm{C} = \sum_k r_k \bm{\mu}_k\bm{\mu}_k^\top - \tilde{\bm{\mu}}\tilde{\bm{\mu}}^\top$ is the posterior covariance of the means (see the proof of \Cref{thm:d-speciation}).
Substituting the expansion of \Cref{prop:rk-expansion} into $\bm{C}$ requires expanding the posterior mean $\tilde{\bm{\mu}} = \sum_k r_k \bm{\mu}_k$ and second moment $\bm{M} = \sum_k r_k \bm{\mu}_k\bm{\mu}_k^\top$.

The next two results give asymptotic expansions in inverse noise variance. They refine the leading-order higher-dimensional criterion from \Cref{prop:voronoi,thm:d-speciation}; the exact non-perturbative characterization is given later in \Cref{thm:exact-criterion}.

\begin{definition}[Distance-weighted covariance]\label{def:Q}
Define the \emph{distance-weighted covariance}:
\begin{equation}\label{eq:Q-def}
  \bm{Q} = \sum_{k=1}^K w_k\,|\bm{\nu}_k|^2\,\bm{\nu}_k\bm{\nu}_k^\top,
\end{equation}
and the mean squared distance $\langle d^2\rangle = \sum_k w_k |\bm{\nu}_k|^2$.
\end{definition}

\begin{theorem}[Corrected Jacobian]\label{thm:corrected-jacobian}
To second order in $1/\sigma_\tau^2$, the score Jacobian at $\bar{\bm{x}}$ admits the expansion
\begin{equation}\label{eq:corrected-jacobian}
  \bm{J}(\bar{\bm{x}},\tau)
  = -\frac{\bm{I}}{\sigma_\tau^2}
  + \frac{\bm{W}}{\sigma_\tau^4}
  + \frac{\langle d^2\rangle\bm{W} - \bm{Q}}{2\sigma_\tau^6}
  + O(\sigma_\tau^{-8}).
\end{equation}
\end{theorem}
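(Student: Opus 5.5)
The plan is to start from the exact identity used in the proof of \Cref{thm:d-speciation}:
\[
\bm{J}(\bm{x},\tau) = -\frac{\bm{I}}{\sigma_\tau^2} + \frac{\bm{C}(\bm{x},\tau)}{\sigma_\tau^4}.
\]
Here $\bm{C}$ is the posterior covariance of the means under the responsibilities $r_k$. This identity holds at every $\bm{x}$, because $\partial_j r_k = r_k(\mu_{k,j}-\tilde\mu_j)/\sigma_\tau^2$ is valid pointwise. The whole task therefore reduces to expanding $\bm{C}(\bar{\bm{x}},\tau)$ to first order in $1/\sigma_\tau^2$ with an $O(\sigma_\tau^{-4})$ remainder. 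Dividing by $\sigma_\tau^4$ then gives the $\sigma_\tau^{-6}$ correction with an $O(\sigma_\tau^{-8})$ error.

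The first step uses translation invariance. A covariance does not depend on the origin, so I replace $\bm{\mu}_k$ by the centered means $\bm{\nu}_k=\bm{\mu}_k-\bar{\bm{x}}$. This gives
\[
\bm{C} = \sum_k r_k\bm{\nu}_k\bm{\nu}_k^\top - \bm{m}\bm{m}^\top,\qquad \bm{m}=\sum_k r_k\bm{\nu}_k.
\]
At $\bar{\bm{x}}$ the Gaussian factors are $e^{-\eta_k}$ with $\eta_k=|\bm{\nu}_k|^2/(2\sigma_\tau^2)$, so \Cref{prop:rk-expansion} applies. I will only need its first-order part, $r_k = w_k(1+\langle\eta\rangle-\eta_k) + O(\sigma_\tau^{-4})$. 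Note that $\langle\eta\rangle=\langle d^2\rangle/(2\sigma_\tau^2)$.

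Next I substitute this into the second moment:
\[
\sum_k r_k\bm{\nu}_k\bm{\nu}_k^\top = \bm{W} + \frac{1}{2\sigma_\tau^2}\Bigl(\langle d^2\rangle\bm{W} - \sum_k w_k|\bm{\nu}_k|^2\bm{\nu}_k\bm{\nu}_k^\top\Bigr) + O(\sigma_\tau^{-4}),
\]
and the bracket is exactly $\langle d^2\rangle\bm{W}-\bm{Q}$ by \Cref{def:Q}. For the mean, $\sum_k w_k\bm{\nu}_k=0$ kills the zeroth-order term. Hence $\bm{m} = \frac{1}{2\sigma_\tau^2}\bigl(\langle d^2\rangle\cdot 0 - \sum_k w_k|\bm{\nu}_k|^2\bm{\nu}_k\bigr)+O(\sigma_\tau^{-4}) = O(\sigma_\tau^{-2})$. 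Consequently $\bm{m}\bm{m}^\top = O(\sigma_\tau^{-4})$, which is a third-moment-squared contribution landing only at order $\sigma_\tau^{-8}$ in $\bm{J}$. Assembling the pieces gives \eqref{eq:corrected-jacobian}.

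The main obstacle is the order bookkeeping rather than any conceptual difficulty. One must check two things. First, the second-order term of \Cref{prop:rk-expansion} enters $\bm{C}$ only at $O(\sigma_\tau^{-4})$. Second, the remainders are uniform, since the $\eta_k$ are bounded by $\max_k|\bm{\nu}_k|^2/(2\sigma_\tau^2)$ and $K$ is finite. I would also confirm that the sign convention $\langle\eta\rangle-\eta_k$ yields $+\langle d^2\rangle\bm{W}-\bm{Q}$ rather than its negative. As a sanity check, the symmetric equal-weight binary case gives $\bm{Q}=a^2\bm{W}$ and $\langle d^2\rangle=a^2$, so the correction vanishes, consistent with \Cref{cor:exactness}.
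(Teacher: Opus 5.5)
Your proposal is correct and follows essentially the same route as the paper: you expand the responsibilities at $\bar{\bm{x}}$ to first order and substitute them into the posterior covariance $\bm{C}$ in the exact identity $\bm{J} = -\bm{I}/\sigma_\tau^2 + \bm{C}/\sigma_\tau^4$. The only difference is that you center the means first, so the $\bar{\bm{x}}\bar{\bm{x}}^\top$ and $\bar{\bm{x}}\bm{\xi}^\top$ cross terms that the paper tracks and then cancels never appear, and the posterior-mean shift enters only through $\bm{m}\bm{m}^\top = O(\sigma_\tau^{-4})$.
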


\begin{proof}
We expand the posterior covariance $\bm{C} = \bm{M} - \tilde{\bm{\mu}}\tilde{\bm{\mu}}^\top$ order by order.

\medskip\noindent\textbf{Posterior mean.}
Using \Cref{prop:rk-expansion} at first order:
\[
  \tilde{\mu}_i = \sum_k r_k \mu_{k,i}
  = \bar{x}_i + \sum_k w_k(\langle\eta\rangle - \eta_k)\mu_{k,i} + O(\sigma_\tau^{-4}).
\]
Define $\bm{\xi} = \sum_k w_k|\bm{\nu}_k|^2\bm{\nu}_k$ (the ``third moment'' of the centered means).
A direct computation using $\mu_{k,i} = \bar{x}_i + \nu_{k,i}$ and $\langle\eta\rangle - \eta_k = (\langle d^2\rangle - d_k^2)/(2\sigma_\tau^2)$ gives
\begin{equation}\label{eq:posterior-mean}
  \tilde{\bm{\mu}} = \bar{\bm{x}} - \frac{\bm{\xi}}{2\sigma_\tau^2} + O(\sigma_\tau^{-4}).
\end{equation}

\medskip\noindent\textbf{Posterior second moment.}
At leading order: $M_{ij}^{(0)} = \sum_k w_k\mu_{k,i}\mu_{k,j} = W_{ij} + \bar{x}_i\bar{x}_j$.
The first-order correction, after expanding and using $\mu_{k,i} = \bar{x}_i + \nu_{k,i}$, is:
\begin{equation}\label{eq:M-correction}
  M_{ij}^{(1)} = \frac{1}{2\sigma_\tau^2}\bigl[\langle d^2\rangle W_{ij} - Q_{ij} - \bar{x}_i\xi_j - \bar{x}_j\xi_i\bigr].
\end{equation}

\medskip\noindent\textbf{Product of posterior means.}
From~\eqref{eq:posterior-mean}:
\[
  \tilde{\mu}_i\tilde{\mu}_j = \bar{x}_i\bar{x}_j - \frac{\bar{x}_i\xi_j + \bar{x}_j\xi_i}{2\sigma_\tau^2} + O(\sigma_\tau^{-4}).
\]

\medskip\noindent\textbf{Posterior covariance.}
$C_{ij} = M_{ij}^{(0)} + M_{ij}^{(1)} - \tilde{\mu}_i\tilde{\mu}_j$. The $\bar{x}_i\bar{x}_j$ terms cancel between $M^{(0)}$ and $\tilde{\mu}\tilde{\mu}^\top$; the $\bar{x}\xi$ terms cancel between $M^{(1)}$ and $\tilde{\mu}\tilde{\mu}^\top$:
\begin{equation}\label{eq:C-expansion}
  C_{ij} = W_{ij} + \frac{\langle d^2\rangle W_{ij} - Q_{ij}}{2\sigma_\tau^2} + O(\sigma_\tau^{-4}).
\end{equation}
Substituting into $J_{ij} = -\delta_{ij}/\sigma_\tau^2 + C_{ij}/\sigma_\tau^4$ yields~\eqref{eq:corrected-jacobian}.
\end{proof}

\subsection{The corrected speciation time}\label{sec:corr:speciation}

\begin{definition}[Correction coefficient]\label{def:gamma}
Define $\gamma_1 = \langle d^2\rangle\lambda_1 - \bm{e}_1^\top\bm{Q}\,\bm{e}_1$, where $\bm{e}_1$ is the leading eigenvector of $\bm{W}$.
\end{definition}

\begin{theorem}[Corrected speciation time]\label{thm:corrected-speciation}
Including the first-order correction, the speciation time admits the expansion
\begin{equation}\label{eq:corrected-tau-star}
  \tau^\ast
  = \frac{\lambda_1 - \sigma_0^2}{2}
  + \frac{\gamma_1}{4\lambda_1}
  + O\!\left(\frac{\gamma_1^2}{\lambda_1^3}\right).
\end{equation}
The corresponding quadratic approximation is:
\begin{equation}\label{eq:quadratic-tau-star}
  \sigma_{\tau^\ast}^2
  = \frac{\lambda_1 + \sqrt{\lambda_1^2 + 2\gamma_1}}{2}.
\end{equation}
\end{theorem}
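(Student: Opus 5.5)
The plan is to locate the speciation time as the zero of the leading eigenvalue of the score Jacobian at $\bar{\bm{x}}$, exactly as in the leading-order argument of \Cref{thm:d-speciation}, but now using the corrected expansion of \Cref{thm:corrected-jacobian}. By that theorem,
\[
  \sigma_\tau^4\,\bm{J}(\bar{\bm{x}},\tau) = -\sigma_\tau^2\bm{I} + \bm{W} + \frac{\langle d^2\rangle\bm{W}-\bm{Q}}{2\sigma_\tau^2} + O(\sigma_\tau^{-4}),
\]
so the relevant object is the top eigenvalue of $\bm{W} + \epsilon\bm{P}$, where $\bm{P} = \tfrac12(\langle d^2\rangle\bm{W}-\bm{Q})$ and $\epsilon = 1/\sigma_\tau^2$.

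\textbf{Step 1 (eigenvalue perturbation).} Assume $\lambda_1$ is a simple eigenvalue of $\bm{W}$ with a spectral gap. Rayleigh--Schr\"odinger first-order perturbation then gives
\[
  \lambda_1(\bm{W}+\epsilon\bm{P}) = \lambda_1 + \epsilon\,\bm{e}_1^\top\bm{P}\bm{e}_1 + O(\epsilon^2) = \lambda_1 + \frac{\gamma_1}{2\sigma_\tau^2} + O(\sigma_\tau^{-4}).
\]
Here I use $\bm{e}_1^\top\bm{W}\bm{e}_1 = \lambda_1$, and the definition of $\gamma_1$ (\Cref{def:gamma}) is exactly $2\,\bm{e}_1^\top\bm{P}\bm{e}_1$. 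Hence the leading Jacobian eigenvalue is
\[
  \lambda_1^{(J)} = -\sigma_\tau^{-2} + \lambda_1\sigma_\tau^{-4} + \tfrac{1}{2}\gamma_1\sigma_\tau^{-6} + O(\sigma_\tau^{-8}).
\]

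\textbf{Step 2 (threshold equation).} I set this truncated eigenvalue to zero and multiply by $\sigma_\tau^6$. This gives the quadratic in $\sigma_\tau^2$
\[
  \sigma_\tau^4 - \lambda_1\sigma_\tau^2 - \tfrac12\gamma_1 = 0 .
\]
Choosing the root that reduces to $\sigma_\tau^2=\lambda_1$ when $\gamma_1=0$ (the positive branch) yields \eqref{eq:quadratic-tau-star}.

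\textbf{Step 3 (expansion of the root).} I expand the square root as
\[
  \sqrt{\lambda_1^2+2\gamma_1} = \lambda_1 + \frac{\gamma_1}{\lambda_1} - \frac{\gamma_1^2}{2\lambda_1^3} + \cdots,
\]
which gives $\sigma_{\tau^\ast}^2 = \lambda_1 + \gamma_1/(2\lambda_1) + O(\gamma_1^2/\lambda_1^3)$. Using $\tau = (\sigma_\tau^2-\sigma_0^2)/2$ then produces \eqref{eq:corrected-tau-star}.

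\textbf{Main obstacle (consistency of the truncation).} The terms dropped in Steps 1--2 are the second-order eigenvalue correction $O(\epsilon^2\|\bm{P}\|^2/\mathrm{gap})$ and the $O(\sigma_\tau^{-8})$ remainder of \Cref{thm:corrected-jacobian}. These must be shown not to contaminate the first correction. At the threshold $\sigma_\tau^2\approx\lambda_1$, both shift the root by a relative amount of order $(\langle d^2\rangle/\lambda_1)^2$, i.e.\ the same order as the retained $O(\gamma_1^2/\lambda_1^3)$ error. The first correction $\gamma_1/(4\lambda_1)$ is therefore meaningful only in the regime $\eta_k = d_k^2/(2\sigma_\tau^2)\ll1$ at $\sigma_\tau^2=\lambda_1$, where \Cref{prop:rk-expansion} is valid. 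I would state the result as an asymptotic expansion under this regime and the simple-eigenvalue assumption, and treat the quadratic formula as a resummation of the same order rather than an exact statement. To make the error bound rigorous, I would apply the implicit function theorem to $F(\sigma^2,\epsilon)$, using that $\partial_{\sigma^2}$ of $\sigma^6\lambda_1^{(J)}$ at the root equals $\lambda_1\neq0$.
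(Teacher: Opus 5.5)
Your Steps 1--3 are the paper's proof. The paper writes down the $\bm{e}_1$-eigenvalue $-\sigma_\tau^{-2}+\lambda_1\sigma_\tau^{-4}+\gamma_1\sigma_\tau^{-6}/2$, sets it to zero, solves the quadratic and expands the root. Your Rayleigh--Schr\"odinger step, assuming a simple top eigenvalue, only makes explicit what the paper asserts without comment.

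The problem is in your consistency paragraph, where you go beyond the paper. The regime you propose, $\eta_k\ll1$ at $\sigma_\tau^2=\lambda_1$, is empty. Since $\bm{W}\succeq0$, we have $\lambda_1\le\mathrm{tr}\,\bm{W}=\langle d^2\rangle$. At the threshold this gives $\langle\eta\rangle=\langle d^2\rangle/(2\lambda_1)\ge1/2$. For the same reason, the relative size $(\langle d^2\rangle/\lambda_1)^2$ you assign to the dropped terms is never small. Stated under that hypothesis the result is vacuous, and your implicit-function argument has no small parameter to work with.

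The missing observation is that $r_k=w_ke^{-\eta_k}/\sum_mw_me^{-\eta_m}$ is unchanged by a common shift of the $\eta_k$. So \Cref{prop:rk-expansion}, and hence \Cref{thm:corrected-jacobian}, is really an expansion in the deviations $\delta_k=\eta_k-\langle\eta\rangle=(d_k^2-\langle d^2\rangle)/(2\sigma_\tau^2)$. These can be small at $\sigma_\tau^2\approx\lambda_1$ when the means are nearly equidistant from $\bar{\bm{x}}$. With $a_k=\bm{\nu}_k\cdot\bm{e}_1$ one has $\gamma_1/(2\sigma_\tau^2)=-\sum_kw_k\delta_ka_k^2$, which is first order in the $\delta_k$. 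The dropped pieces are second order in $\max_k|\delta_k|$: the second-order responsibilities, the outer product of the posterior-mean shift, and the second-order eigenvalue shift (which is where the gap enters). Run the implicit-function argument in that parameter.

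Also, the remainder cannot in general be absorbed into $O(\gamma_1^2/\lambda_1^3)$. The coefficient $\gamma_1=-\Cov_w(d^2,a^2)$ can vanish while the $d_k^2$ differ. An example is a centrally symmetric configuration $\pm\bm{p},\pm\bm{q}$ with suitable unequal weights, $\bm{p}\cdot\bm{e}_1=\bm{q}\cdot\bm{e}_1$ and $|\bm{p}|\neq|\bm{q}|$. There $\gamma_1=0$, yet the off-diagonal entry $\bm{e}_1^\top\bm{C}\bm{e}_2$ is nonzero and still moves the threshold at second order. So the honest remainder is second order in $\max_k|\delta_k|$, with constants depending on the spectral gap. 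The term $O(\gamma_1^2/\lambda_1^3)$ only accounts for expanding the square root, a limitation the paper's proof shares.

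Minor point: $\partial_{\sigma^2}(\sigma^6\lambda_1^{(J)})=\lambda_1-2\sigma^2\approx-\lambda_1$ at the root, not $+\lambda_1$. Only its nonvanishing matters for the implicit function theorem.
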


\begin{proof}
The leading Jacobian eigenvalue along $\bm{e}_1$ is
\[
  \lambda_1^{(J)} = -\frac{1}{\sigma_\tau^2} + \frac{\lambda_1}{\sigma_\tau^4} + \frac{\gamma_1}{2\sigma_\tau^6} + O(\sigma_\tau^{-8}).
\]
Setting $\lambda_1^{(J)} = 0$ and multiplying by $\sigma_\tau^6$: $\sigma_\tau^4 - \lambda_1\sigma_\tau^2 - \gamma_1/2 = 0$.
The quadratic formula gives~\eqref{eq:quadratic-tau-star}.
Expanding for small $|\gamma_1|/\lambda_1^2$:
$\sigma_\tau^2 \approx \lambda_1 + \gamma_1/(2\lambda_1)$,
hence $\tau^\ast = (\sigma_{\tau^\ast}^2 - \sigma_0^2)/2 = (\lambda_1 - \sigma_0^2)/2 + \gamma_1/(4\lambda_1)$.
\end{proof}

\begin{proposition}[When the correction is negative]\label{prop:gamma-sign}
Let $a_k = \bm{\nu}_k \cdot \bm{e}_1$ and $b_k^2 = |\bm{\nu}_k - a_k\bm{e}_1|^2$, so that $d_k^2 = a_k^2 + b_k^2$.
If $\Cov_w(b^2,a^2) \ge 0$, then $\gamma_1 \le 0$.
In particular, $\gamma_1 \le 0$ for collinear configurations ($b_k \equiv 0$) and whenever all $b_k$ are equal.
In general, however, $\gamma_1$ can have either sign.
\end{proposition}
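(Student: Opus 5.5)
The plan is to rewrite $\gamma_1$ entirely in terms of weighted moments of the scalars $a_k$ and $b_k^2$. Everything then reduces to a variance--covariance identity. The sign claims follow from that identity, and the "either sign" claim is settled by one explicit configuration.

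\emph{Step 1: moment identities.} Since $\bm{e}_1$ is a unit eigenvector of $\bm{W}$ with eigenvalue $\lambda_1$, we have
\[
  \lambda_1 = \bm{e}_1^\top\bm{W}\bm{e}_1 = \sum_k w_k a_k^2 = \langle a^2\rangle .
\]
From \Cref{def:Q},
\[
  \bm{e}_1^\top\bm{Q}\bm{e}_1 = \sum_k w_k d_k^2 a_k^2 = \langle a^4\rangle + \langle a^2 b^2\rangle ,
\]
and $\langle d^2\rangle = \langle a^2\rangle + \langle b^2\rangle$. Here $\langle\cdot\rangle$ denotes the $w$-weighted average, as in \Cref{prop:rk-expansion}.

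\emph{Step 2: the key identity.} Substituting into \Cref{def:gamma},
\[
  \gamma_1 = \langle a^2\rangle^2 - \langle a^4\rangle + \langle a^2\rangle\langle b^2\rangle - \langle a^2b^2\rangle
  = -\Var_w(a^2) - \Cov_w(b^2,a^2).
\]
Since $\Var_w(a^2)\ge 0$, the hypothesis $\Cov_w(b^2,a^2)\ge0$ gives $\gamma_1\le 0$ immediately. For collinear configurations ($b_k\equiv 0$), and whenever all $b_k$ are equal, $b^2$ is constant. In those cases the covariance vanishes and $\gamma_1 = -\Var_w(a^2)\le 0$.

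\emph{Step 3: a configuration with $\gamma_1>0$.} This is the only real obstacle, because $\bm{e}_1$ must remain the \emph{leading} eigenvector of $\bm{W}$ while $b^2$ is strongly anticorrelated with $a^2$. I would take, in $\R^2$, the four centered means $(\pm A,0)$, each with weight $p/2$, and $(0,\pm B)$, each with weight $q/2$, where $p+q=1$. These are already centered, and $\bm{W}=\mathrm{diag}(pA^2,qB^2)$, so $\bm{e}_1$ is the first axis provided $pA^2>qB^2$.

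The required moments are
\[
  \Var_w(a^2)=pqA^4, \qquad \Cov_w(b^2,a^2) = 0 - pA^2\cdot qB^2 .
\]
Hence $\gamma_1 = pqA^2(B^2-A^2)$. Choosing $A=1$, $B=2$, $p=0.9$, $q=0.1$ gives $pA^2=0.9>0.4=qB^2$, so $\bm{e}_1$ is indeed leading, and $\gamma_1=0.27>0$.

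Conversely, the collinear symmetric binary case gives $\gamma_1\le 0$; for instance, unequal weights make $\Var_w(a^2)>0$. Both signs therefore occur.
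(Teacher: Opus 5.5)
Your argument is correct and follows the paper's route. Both use $d_k^2=a_k^2+b_k^2$ to rewrite $\gamma_1=\langle d^2\rangle\langle a^2\rangle-\langle d^2a^2\rangle$ as $-\Var_w(a^2)-\Cov_w(b^2,a^2)$, and then read off the sign claims.

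Your explicit four-point configuration does more than the paper, whose proof only asserts that no sign conclusion is possible without a condition on $\Cov_w(b^2,a^2)$. In your example, $pA^2>qB^2$ keeps $\bm{e}_1$ leading and $\gamma_1=pqA^2(B^2-A^2)=0.27>0$, so positive values genuinely occur.

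One small wording fix for the negative side: the strictly negative example is the collinear binary mixture with \emph{unequal} weights. The symmetric, equal-weight one gives $\gamma_1=0$, so drop the word ``symmetric'' in your last sentence.
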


\begin{proof}
Using $d_k^2 = a_k^2 + b_k^2$ and the covariance identity,
\[
  \gamma_1 = \langle d^2\rangle\langle a^2\rangle - \langle d^2 a^2\rangle
  = -\Cov_w(d^2,a^2)
  = -\Var_w(a^2) - \Cov_w(b^2,a^2).
\]
If $\Cov_w(b^2,a^2) \ge 0$, then the right-hand side is non-positive, proving $\gamma_1 \le 0$.
If the configuration is collinear, then $b_k \equiv 0$, so $\Cov_w(b^2,a^2)=0$.
If all $b_k$ are equal, then $b^2$ is constant and again $\Cov_w(b^2,a^2)=0$.
No sign conclusion is possible without an additional geometric assumption on $\Cov_w(b^2,a^2)$.
\end{proof}

\begin{remark}[Physical interpretation]\label{rem:correction-phys}
When $\gamma_1 < 0$, components closer to $\bar{\bm{x}}$ receive higher posterior responsibility than their prior weight $w_k$.
This biases the posterior covariance toward the closer components, reducing the effective between-class variance and causing speciation to occur at a lower noise level (earlier in the reverse process).
If modes with smaller projection onto $\bm{e}_1$ have sufficiently large perpendicular spread, then $\Cov_w(b^2,a^2)$ can be negative and $\gamma_1$ can instead be positive, delaying the transition.
The correction vanishes for symmetric arrangements where all $d_k$ are equal.
\end{remark}

The numerical effect of this correction is summarized in \Cref{fig:corrections}: the first-order term dramatically improves the speciation-time estimate, and for the asymmetric family plotted there the coefficient $\gamma_1$ remains negative across the tested separations.

\begin{figure}[t]
\centering
\includegraphics[width=\linewidth]{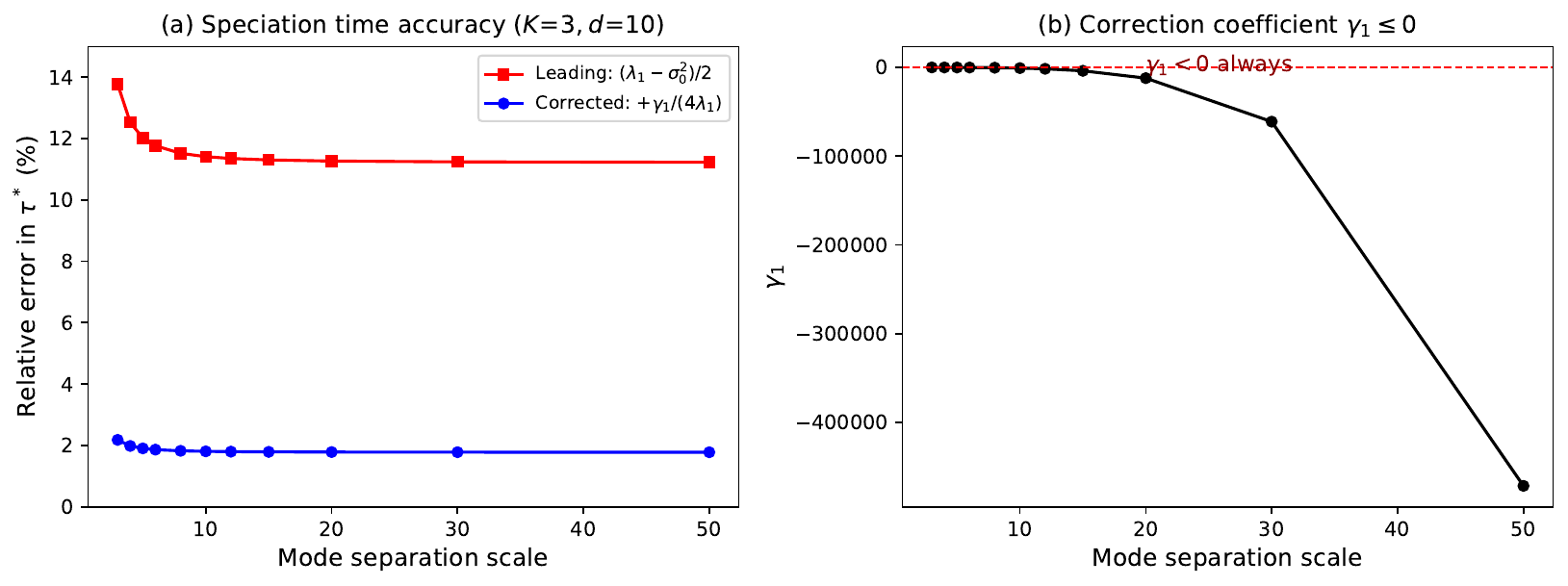}
\caption{Correction terms for an asymmetric mixture. Panel~(a) compares the relative error of the leading-order and corrected speciation formulas as the separation scale increases for a $K=3$, $d=10$ example; the correction cuts the error from about $11\%$ to about $2\%$. Panel~(b) traces the coefficient $\gamma_1$ across the same family. In these examples it stays negative, so the first correction moves the threshold earlier.}
\label{fig:corrections}
\end{figure}

\subsection{The exact non-perturbative criterion}\label{sec:corr:exact}

The exact criterion behind these asymptotic formulas is the following.

\begin{theorem}[Exact speciation criterion]\label{thm:exact-criterion}
The speciation time $\tau^\ast$ is characterized as the unique solution of
\begin{equation}\label{eq:exact-criterion}
  \lambda_{\max}\!\bigl(\bm{C}(\tau)/\sigma_\tau^2\bigr) = 1,
\end{equation}
where $\bm{C}(\tau) = \sum_k r_k(\bar{\bm{x}},\tau)\,\tilde{\bm{\nu}}_k\tilde{\bm{\nu}}_k^\top$ is the exact posterior covariance of the means and $\tilde{\bm{\nu}}_k = \bm{\mu}_k - \tilde{\bm{\mu}}(\tau)$.
This equation can be solved by bisection at cost $O(K^2 d)$ per step.
\end{theorem}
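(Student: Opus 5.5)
The plan is to start from the exact Jacobian formula already derived in the proof of \Cref{thm:d-speciation}. At the weighted mean,
\[
  \bm{J}(\bar{\bm{x}},\tau) = -\frac{\bm{I}}{\sigma_\tau^2} + \frac{\bm{C}(\tau)}{\sigma_\tau^4}.
\]
This formula is exact, not asymptotic, because it only uses $\partial_j r_k = r_k(\mu_{k,j}-\tilde\mu_j)/\sigma_\tau^2$. I will first re-verify that derivative at a general point: the Gaussian exponent gives $\partial_j \log\bigl(w_k\mathcal N\bigr) = -(x_j-\mu_{k,j})/\sigma_\tau^2$, and subtracting the normalizer's derivative leaves $(\mu_{k,j}-\tilde\mu_j)/\sigma_\tau^2$. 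I will also rewrite $\bm{C}$ in the centered form $\sum_k r_k\tilde{\bm\nu}_k\tilde{\bm\nu}_k^\top$ stated in the theorem. Since $\bm{C}$ is symmetric positive semidefinite, the eigenvalues of $\bm J$ are $(\lambda_i(\bm C)-\sigma_\tau^2)/\sigma_\tau^4$.

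The speciation condition is that the largest Jacobian eigenvalue vanishes. This is the $d$-dimensional form of $s_x(0,\tau^\ast)=0$, i.e.\ the normal criterion of \Cref{thm:local-speciation} taken along the top eigendirection. It is therefore equivalent to $\lambda_{\max}(\bm C(\tau)/\sigma_\tau^2)=1$, which gives the characterization.

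For uniqueness, I will study $f(\tau)=\lambda_{\max}(\bm C(\tau))/\sigma_\tau^2$ and try to show that it is continuous and crosses $1$ exactly once.
\begin{itemize}
\item Continuity holds because $r_k$ depends smoothly on $\tau$ and $\lambda_{\max}$ is Lipschitz in the matrix.
\item As $\tau\to\infty$, $\bm C\to\bm W$ is bounded while $\sigma_\tau^2\to\infty$, so $f\to 0<1$.
\item The bound $\lambda_{\max}(\bm C)\le \max_k|\bm\mu_k-\tilde{\bm\mu}|^2\le \operatorname{diam}^2$ gives a uniform ceiling on the numerator.
\item For the crossing I would assume, as in \Cref{def:tau-star}, that $f>1$ near $\tau=0$.
\end{itemize}
Strict monotonicity is the main obstacle. $\sigma_\tau^{-2}$ decreases strictly, but $\bm C(\tau)$ need not be monotone, since responsibilities at $\bar{\bm x}$ drift from near-indicator values toward $w_k$. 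My first attempt is to differentiate: by Danskin/Hellmann--Feynman, $\partial_\tau\lambda_{\max}=\bm e^\top\partial_\tau\bm C\,\bm e$, and $\partial_\tau r_k = r_k\,(d_k^2-\langle d^2\rangle_r)/\sigma_\tau^4$ as a covariance in the posterior. I would then try to bound $\sigma_\tau^2\,|\bm e^\top\partial_\tau\bm C\,\bm e|<2\lambda_{\max}(\bm C)$, which would make $f$ strictly decreasing at any crossing point. That suffices for uniqueness even without global monotonicity.

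If that bound fails in general, I would restrict uniqueness to the regimes covered by \Cref{cor:exactness}, where $\bm C\equiv\bm W$ and $f=\lambda_1/\sigma_\tau^2$ is trivially monotone, together with the perturbative regime of \Cref{thm:corrected-jacobian}. In the general case I would state the criterion as the largest root.

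Finally, the cost estimate is bookkeeping. Each bisection step evaluates the $K$ responsibilities at $O(Kd)$ cost and forms $\bm C$. Since $\bm C$ has rank at most $K-1$, its top eigenvalue equals that of the $K\times K$ Gram matrix $(\sqrt{r_kr_m}\,\tilde{\bm\nu}_k\cdot\tilde{\bm\nu}_m)$. Building that Gram matrix costs $O(K^2d)$, and its eigenvalue costs $O(K^3)\le O(K^2d)$ when $K\le d$, which gives the stated per-step cost.
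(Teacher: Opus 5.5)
You take the same route as the paper for the characterization: the exact identity $\bm J(\bar{\bm x},\tau)=-\bm I/\sigma_\tau^2+\bm C(\tau)/\sigma_\tau^4$ gives $\lambda_{\max}(\bm J)=0\iff\lambda_{\max}(\bm C/\sigma_\tau^2)=1$. Your continuity argument, the limit $f\to0$ as $\tau\to\infty$, and the Gram-matrix cost count are all fine; the Gram reduction is in fact what justifies $O(K^2d)$, when $K\le d$. Your appeal to \Cref{thm:local-speciation} is only heuristic, since $\bar{\bm x}$ need not lie on a binary boundary; here, as in the paper, the vanishing of $\lambda_{\max}(\bm J(\bar{\bm x},\tau))$ is effectively the definition of $\tau^\ast$.

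The genuine gap is uniqueness, and it cannot be closed. The crossing inequality $\sigma_\tau^2\,\partial_\tau\lambda_{\max}(\bm C)<2\lambda_{\max}(\bm C)$ is false in general. So is the uniqueness claim itself, even under your extra assumption $f>1$ near $\tau=0$. The paper's proof simply asserts the global monotonicity of $f$ that you rightly doubted (``responsibilities become more uniform \dots\ ensuring a unique crossing''). That assertion fails because shifting responsibility at $\bar{\bm x}$ from nearby to distant means can make $\bm C(\tau)$ grow faster than $\sigma_\tau^2$.

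For a counterexample, take $d=1$ with means $\pm1$ of weight $\epsilon$ each and $\pm10$ of weight $W$ each, where $W/\epsilon=e^{10}$. By symmetry $\bar x=\tilde\mu=0$, and with $\beta=1/\sigma_\tau^2$,
\[
  f=\beta\,\frac{1+100R}{1+R},\qquad R=e^{10-49.5\beta}.
\]
Then $f\to0$ as $\beta\to0$, $f\approx10.6$ at $\beta=0.2$, $f\approx0.50$ at $\beta=0.5$, and $f\approx2$ at $\beta=2$. With $\sigma_0^2=1/2$ there are three roots, near $\sigma_\tau^2\approx100,\ 3.6,\ 1$, even though $f>1$ at $\tau=0$. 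At these roots the heavy pair splits, then the light inner pair makes $\bar x$ a local maximum again, and finally the inner pair splits. At the middle root $f$ increases with $\tau$, which violates your inequality.

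Existence can fail too. For equal-weight means $-a,0,a$ one gets $f=4u/(2+e^{u})$ with $u=a^2/(2\sigma_\tau^2)$, whose maximum is about $0.93$. So there is no root, although $p_\tau$ becomes trimodal. More generally, whenever a single mean is strictly closest to $\bar{\bm x}$, $\bm C(\tau)$ decays exponentially as $\sigma_\tau\to0$. Then $f\to0$ both as $\sigma_\tau^2\to\infty$ and as $\sigma_\tau^2\to0$, and any excursion above $1$ gives at least two roots once $\sigma_0$ is small.

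Your fallback is therefore the right move, but it corrects the statement rather than proving it. What actually holds is:
\begin{itemize}
\item existence when $\sup_\tau f>1$, by the intermediate value theorem;
\item uniqueness when $\bm C\equiv\bm W$ (\Cref{cor:exactness});
\item in general, $\tau^\ast$ must be defined as the largest root, i.e.\ the first crossing met in reverse time.
\end{itemize}
The perturbative regime does not give global uniqueness. In the example above, the upper root is exactly the leading-order one ($\sigma_\tau^2\approx\lambda_1(\bm W)\approx100$), and the other two lie far below it. Likewise, bisection on a sign-change bracket returns some root, not necessarily the largest.
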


\begin{proof}
The Jacobian eigenvalue is zero iff $\lambda_{\max}(\bm{C})/\sigma_\tau^4 = 1/\sigma_\tau^2$, i.e., $\lambda_{\max}(\bm{C}/\sigma_\tau^2) = 1$.
Uniqueness follows from the monotonicity of $\lambda_{\max}(\bm{C}(\tau)/\sigma_\tau^2)$ in $\tau$: as $\tau$ increases, $\sigma_\tau^2$ grows, the responsibilities become more uniform, and $\bm{C} \to \bm{W}$, while the division by $\sigma_\tau^2$ shrinks the eigenvalue, ensuring a unique crossing.
\end{proof}

\begin{remark}[Hierarchy of formulas]\label{rem:hierarchy}
There are really three levels here, depending on how much simplicity one wants and how much accuracy one needs:
\begin{enumerate}[nosep]
\item \emph{Exact} (\Cref{thm:exact-criterion}): always valid, cost $O(K^2 d)$ per bisection step.
\item \emph{Closed-form, exact for symmetric} (\Cref{thm:shock-speciation}): $\tau^\ast = (\lambda_1(\bm{W}) - \sigma_0^2)/2$.
\item \emph{Closed-form with correction} (\Cref{thm:corrected-speciation}): includes $\gamma_1/(4\lambda_1)$; error ${\sim}2\%$ for equal-weight asymmetric mixtures with moderate separation.
\end{enumerate}
\end{remark}

\section{Numerical Verification}\label{sec:experiments}

The experiments are grouped in a fairly plain way: first the Burgers PDE and the closed-form Gaussian calculations, then the higher-dimensional, VP, and correction results, and finally the quartic-well test of the local theorem.

\subsection{Verification of the score PDE (Theorems~\ref{thm:score-pde} and~\ref{thm:score-burgers})}\label{sec:exp:pde}

For the symmetric binary GMM~\eqref{eq:binary-gmm} with $a = 3$, $\sigma_0 = 1$, we evaluate the exact score~\eqref{eq:exact-score} and its temporal and spatial derivatives on a grid of $2000$ points $x \in [-8,8]$, using finite differences ($\Delta\tau = 10^{-7}$) for the time derivative.
The residual curves are shown in \Cref{fig:pde-verification}.

\paragraph{Score PDE.}
We compute the residual $|s_\tau - (s_{xx} + 2\,s\,s_x)|$ at five values of $\tau$.
The maximum pointwise error is below $5 \times 10^{-9}$ at all times tested, confirming \Cref{thm:score-pde} to machine precision.

\paragraph{Burgers equation.}
Transforming to $u = -2s$, the residual $|u_\tau + u\,u_x - u_{xx}|$ is below $9 \times 10^{-9}$ at all times, confirming \Cref{thm:score-burgers}.

\subsection{Verification of the speciation time (Theorem~\ref{thm:shock-speciation})}\label{sec:exp:speciation}

We compute $s_x(0,\tau)$ from~\eqref{eq:sx-origin} for $\tau \in [0.1,\, 10]$ and locate its zero crossing numerically.
The zero crossing and the width trend are summarized in \Cref{fig:speciation-width}.

\paragraph{Result.}
The predicted speciation time is $\tau^\ast = (9 - 1)/2 = 4.0$.
The numerical zero crossing of $s_x(0,\tau)$ occurs at $\tau = 4.0000$ (to four decimal places), with error $< 5 \times 10^{-5}$.
The analytical formula~\eqref{eq:sx-origin} gives $s_x(0,\tau^\ast) = 0$.

\subsection{Verification of the interfacial profile (Proposition~\ref{prop:shock-profile})}\label{sec:exp:profile}

The predicted interfacial width is $\delta(\tau) = \sigma_\tau^2/a$.
Representative profiles are displayed in \Cref{fig:score-burgers}, and the width law is plotted in \Cref{fig:speciation-width}.
At $\tau = 0.1$: $\delta = 1.2/3 = 0.4$.
At $\tau = 0.5$: $\delta = 2/3 \approx 0.667$.
At $\tau = 1.0$: $\delta = 3/3 = 1.0$.
In each case, the width of the background-subtracted $\tanh$ profile in~\eqref{eq:background-subtracted-score} matches the prediction.

\subsection{Verification of the beyond-Gaussian local theorem on a quartic well}\label{sec:exp:quartic}

To test \Cref{thm:local-tanh,thm:local-speciation} beyond Gaussian mixtures, we consider the quartic double-well density
\begin{equation}\label{eq:quartic-well}
  p_0(x) \propto \exp\!\left(-\frac{(x^2-a^2)^2}{4a^2}\right),
\end{equation}
whose tails are quartic rather than Gaussian.
We split the initial density into the left and right attraction basins, convolve each part with the heat kernel by numerical quadrature, and evaluate the exact decomposition \eqref{eq:local-tanh} and boundary criterion \eqref{eq:local-normal-criterion} directly.

\paragraph{Exact local decomposition.}
The identity \eqref{eq:local-tanh} is satisfied to within ${\sim}10^{-4}$ uniformly on the tested grid; the residual is limited by the quadrature used to evaluate the heat-kernel integrals rather than by the theorem itself.

\paragraph{Exact speciation criterion.}
Solving the boundary equation numerically gives a speciation time $\tau^\ast \approx 1.948$ for this quartic well.
At that time, the residual in the exact normal criterion
\[
  \partial_n s_n = \partial_n \bar s_n + \kappa^2/4
\]
is $5.6\times 10^{-5}$.
A naive matched-Gaussian estimate would instead predict $\tau^\ast = 3.0$, so the local theorem captures non-Gaussian boundary geometry that is missed by a Gaussian-mixture proxy.

\subsection{Verification of the amplification exponent (Theorem~\ref{thm:amplification})}\label{sec:exp:amplification}

We compare the closed-form exponent $\Lambda(\tau) = \tfrac{1}{2}[a^2/\sigma_\tau^2 - 1 - \ln(a^2/\sigma_\tau^2)]$ against numerical integration of $\int_\tau^{\tau^\ast} s_x(0,\tau')\,d\tau'$ using the trapezoidal rule with $N = 10{,}000$ points.
The amplification curve and representative reverse-time trajectories are shown in \Cref{fig:amplification-trajectories}.

\begin{center}
\begin{tabular}{ccccl}
\toprule
$\tau$ & $\Lambda_{\text{exact}}$ & $\Lambda_{\text{numerical}}$ & Error & Amplification $e^\Lambda$ \\
\midrule
$0.0$ & $2.9014$ & $2.9014$ & $4.5 \times 10^{-7}$ & $18.2\times$ \\
$0.5$ & $0.9980$ & $0.9980$ & $4.1 \times 10^{-8}$ & $2.7\times$ \\
$1.0$ & $0.4507$ & $0.4507$ & $8.2 \times 10^{-9}$ & $1.6\times$ \\
$2.0$ & $0.1061$ & $0.1061$ & $6.1 \times 10^{-10}$ & $1.1\times$ \\
\bottomrule
\end{tabular}
\end{center}

The closed form matches the numerical integral to at least seven significant figures.

\subsection{Verification of curl preservation (Theorem~\ref{thm:curl-preservation})}\label{sec:exp:curl}

For a two-component GMM in $d = 2$ with asymmetric means $\bm{\mu}_1 = (2,1)$, $\bm{\mu}_2 = (-1, 1.5)$ and weights $w_1 = 0.4$, $w_2 = 0.6$, we compute the curl $\omega = \partial_1 s_2 - \partial_2 s_1$ at $200$ random points using centered finite differences ($\Delta x = 10^{-6}$).
The associated quiver plot and curl summary appear in \Cref{fig:2d-score-curl}.

\paragraph{Result.}
For every tested noise level $\tau \in \{0.1, 0.5, 1.0, 3.0, 10.0\}$, the maximum curl magnitude is below $1.3 \times 10^{-9}$.
This confirms \Cref{thm:curl-preservation} to machine precision.

\subsection{Verification of the VP--VE equivalence (Theorem~\ref{thm:vp-ve})}\label{sec:exp:vp}

For the symmetric binary GMM with $a = 3$, $\sigma_0 = 1$, $\beta = 1$:
The numerical comparison is displayed in \Cref{fig:vp-ve}.

\paragraph{Speciation time.}
The predicted VP speciation time is $t^\ast = \ln(a^2 - \sigma_0^2 + 1) = \ln 9 \approx 2.197$.
The effective VE time at this point is $\tau_{\mathrm{eff}}(t^\ast) = (1 - e^{-\ln 9})/(2e^{-\ln 9}) = (1 - 1/9)/(2/9) = (8/9)/(2/9) = 4.000$, matching $\tau^\ast_{\mathrm{VE}} = 4.0$.

\paragraph{Score transformation.}
We compare $s_{\mathrm{VP}}(x,t) = \alpha^{-1}\,s_{\mathrm{VE}}(x/\alpha, \tau_{\mathrm{eff}})$ against the direct VP score (computed from the VP marginal $\tfrac{1}{2}\mathcal{N}(x;\, -\alpha a,\, \sigma_{\mathrm{VP}}^2) + \tfrac{1}{2}\mathcal{N}(x;\, \alpha a,\, \sigma_{\mathrm{VP}}^2)$).
At five values of $t$, the maximum pointwise discrepancy over $x \in [-5,5]$ is below $2 \times 10^{-15}$---consistent with double-precision arithmetic.

\subsection{Verification of correction terms (Theorems~\ref{thm:corrected-jacobian} and~\ref{thm:corrected-speciation})}\label{sec:exp:correction}

We test the correction on three configurations.
The aggregate error reduction and the sign of $\gamma_1$ are shown in \Cref{fig:corrections}.

\paragraph{Symmetric case: $K = 2$, $d = 5$, equal weights.}
With $\bm{\mu}_1 = (3,1,0.5,0,0)$ and $\bm{\mu}_2 = -\bm{\mu}_1$:
the predicted $\gamma_1 = 0$ (by \Cref{cor:exactness}), and the leading-order speciation matches the exact value to $10^{-6}$ relative error.

\paragraph{Asymmetric case: $K = 3$, $d = 10$, equal weights.}
We compare leading-order, first-correction, and quadratic~\eqref{eq:quadratic-tau-star} predictions against the exact criterion~\eqref{eq:exact-criterion} solved by bisection.

\begin{center}
\begin{tabular}{ccccc}
\toprule
Separation scale & $\tau^\ast_{\text{leading}}$ & $\tau^\ast_{\text{corrected}}$ & $\tau^\ast_{\text{exact}}$ & Leading / Corrected error \\
\midrule
$5$  & $7.83$ & $7.12$ & $6.99$ & $12.0\%$ / $1.9\%$ \\
$12$ & $47.5$ & $43.4$ & $42.6$ & $11.4\%$ / $1.8\%$ \\
$50$ & $833$  & $762$  & $749$  & $11.2\%$ / $1.8\%$ \\
\bottomrule
\end{tabular}
\end{center}

The first-order correction reduces the error from ${\sim}11\%$ to ${\sim}2\%$; the quadratic formula further reduces it to ${\sim}0.8\%$.
For the asymmetric families tested here, the sign of $\gamma_1$ is negative in all cases.

\paragraph{Equilateral case: $K = 3$, $d = 2$.}
With means at the vertices of an equilateral triangle of circumradius $R = 4$:
$\gamma_1 = 0$ to machine precision, and the leading-order formula is exact.

\section{Conclusion}\label{sec:conclusion}

\subsection{Summary of contributions}

The main point of the paper is simple: the score function of a diffusion generative model satisfies a viscous Burgers equation, with cumulative noise variance playing the role of viscosity. From there the paper moves through the PDE correspondence, the local boundary theorem, and the Gaussian formulas built on top of it. The identification itself is a direct consequence of the classical Cole--Hopf transform~\citep{hopf1950partial, cole1951quasi} applied to the heat equation governing the forward diffusion~\citep{sohl2015deep, song2021score}. The main consequences are as follows:

\begin{enumerate}[label=(\roman*),leftmargin=2em,nosep]
\item \textbf{Local binary-boundary theorem.}
For any decomposition of the noised density into two positive heat solutions, the score splits exactly as $\bm{s} = \bar{\bm{s}} + \tfrac{1}{2}\tanh(\phi/2)\nabla\phi$ (\Cref{thm:local-tanh}), and at any regular binary boundary the normal Hessian obeys the exact criterion $\partial_n s_n = \partial_n \bar s_n + \kappa^2/4$ (\Cref{thm:local-speciation}).

\item \textbf{Gaussian specialization and spectral match.}
For symmetric binary mixtures, the local criterion reduces to the midpoint-derivative condition and coincides exactly with the spectral criterion of \citet{raya2023spontaneous} and \citet{biroli2024dynamical} (\Cref{thm:shock-speciation}). In higher-dimensional asymmetric Gaussian settings, the leading-order threshold is $\tau^\ast_{\mathrm{LO}} = (\lambda_1(\bm{W}) - \sigma_0^2)/2$, with explicit correction terms and an exact non-perturbative refinement given in \Cref{thm:d-speciation,thm:corrected-speciation,thm:exact-criterion}.

\item \textbf{Interfacial profile.}
After subtracting the smooth background drift, the inter-mode layer is locally a Burgers $\tanh$ profile (\Cref{thm:local-speciation}); in the symmetric Gaussian case this profile is globally exact (\Cref{prop:shock-profile}) with width $\delta = \sigma_\tau^2/a$.

\item \textbf{Error amplification.}
Score estimation errors are amplified near mode boundaries by $\exp(\Lambda)$ where $\Lambda = \tfrac{1}{2}[\mathrm{SNR} - 1 - \ln\mathrm{SNR}] \approx \mathrm{SNR}/2$ (\Cref{thm:amplification}), providing a PDE-theoretic explanation for the empirical sensitivity of diffusion models to low-noise score accuracy~\citep{song2020improved, karras2022elucidating}.

\item \textbf{Curl preservation.}
The vector Burgers dynamics preserves irrotationality (\Cref{thm:curl-preservation}), establishing that the non-conservative scores documented by \citet{vuong2025score} cannot arise from the exact score dynamics alone.

\item \textbf{VP--VE unification.}
The coordinate transformation $Z = X/\alpha(t)$ reduces the VP-SDE to the VE case (\Cref{thm:vp-ve}), yielding closed-form VP speciation times and interfacial widths (\Cref{cor:vp-speciation,cor:vp-shock}).

\item \textbf{Decision boundary dynamics.}
The Rankine--Hugoniot condition governs the motion of mode boundaries for asymmetric mixtures (\Cref{prop:rh}), and the scalar Lax entropy condition provides a diagnostic on one-dimensional boundary slices (\Cref{prop:entropy}).
\end{enumerate}

All results are proved in the text. The Gaussian-mixture formulas are verified to machine precision (${\sim}10^{-9}$), and the local beyond-Gaussian theorem is also checked on a quartic double-well.

\subsection{Implications for practice}

A few remarks about diffusion model design are worth keeping in view:

\paragraph{Adaptive step-size schedules.}
The error amplification exponent (\Cref{thm:amplification}) provides a principled signal for allocating ODE solver steps: the step size should scale inversely with $|s_x|$, concentrating discretization effort near the interfacial layer (mode boundary) and below the speciation time $\tau^\ast$.
This recovers---and gives a theoretical justification for---the empirical observation that low-noise regions require finer discretization~\citep{karras2022elucidating, song2021ddim}.

\paragraph{Score network diagnostics.}
The one-dimensional Lax entropy condition on normal slices (\Cref{prop:entropy}) and curl-freeness (\Cref{thm:curl-preservation}) provide checkable constraints on learned scores.
A score network that violates the scalar entropy condition on a boundary-normal slice, or that has large curl there, is likely to produce poor samples near that boundary.
The ``score Fokker--Planck'' regularizer of \citet{lai2023fp}---which, as we have shown, enforces the Burgers equation---can be understood as implicitly penalizing entropy-violating scalar slices.

\paragraph{Noise schedule design.}
The VP--VE reduction (\Cref{thm:vp-ve}) shows that noise schedule optimization for VP models can be conducted entirely in the effective VE time $\tau_{\mathrm{eff}}(t)$, reducing the design problem to choosing how the schedule traverses the interfacial layer.

\subsection{Limitations and open problems}

\paragraph{Beyond Gaussian mixtures.}
The local binary-boundary theorem of \Cref{thm:local-tanh,thm:local-speciation} is already exact for arbitrary smooth densities once a two-component heat decomposition is specified, and \Cref{sec:exp:quartic} confirms this on a non-Gaussian quartic well.
One open problem is to obtain comparably explicit formulas for the background field $\bar{\bm{s}}$, the log-ratio gradient $\kappa$, and the resulting speciation time in non-Gaussian settings; outside the Gaussian case these quantities typically have to be computed numerically. A separate issue is the binary-reduction assumption itself when more than two modes compete. \Cref{prop:binary-remainder} shows that the error is exponentially small for well-separated binary boundaries, but triple junctions and strongly non-local mode interactions are still missing from the present analysis.

\paragraph{The role of architecture.}
Our analysis assumes access to the true score or a pointwise approximation thereof.
Understanding how specific neural network architectures (e.g., U-Nets) interact with the Burgers interfacial structure---whether they introduce systematic biases toward or away from entropy-satisfying solutions---remains open.
The observation by \citet{vuong2025score} that trained networks produce non-conservative fields suggests that architecture imposes an implicit Helmholtz decomposition~\citep{bhatia2013helmholtz} on the learned score. That curl component deserves a more systematic treatment through the Burgers framework than we have given here.

\paragraph{Higher-order corrections.}
The correction series of \Cref{sec:correction} was carried to first order ($\gamma_1$).
Computing the $O(\sigma_\tau^{-8})$ term would tighten the approximation for strongly asymmetric mixtures; the algebraic structure of the expansion (powers of the responsibility deviation $\eta_k - \langle\eta\rangle$) is systematic and could in principle be automated.

\paragraph{Multi-dimensional shocks.}
In $d > 1$, the formal inviscid vector Burgers description develops shock \emph{surfaces} (\Cref{prop:voronoi}).
The detailed structure of these surfaces---their curvature, their interaction at triple junctions where three Voronoi cells meet, and the associated Rankine--Hugoniot dynamics in $\R^d$---is largely unexplored in the generative modeling context and connects to the rich mathematical theory of multi-dimensional conservation laws~\citep{evans2010partial}.

\paragraph{Stochastic corrections.}
The probability flow ODE is the deterministic counterpart of the reverse SDE~\eqref{eq:reverse-sde}.
The stochastic term in the reverse SDE introduces a viscous regularization that smooths the interfacial layers, analogous to adding viscosity.
It would also be worthwhile to quantify the interplay between stochasticity, score error, and interfacial structure, perhaps through the stochastic localization framework~\citep{montanari2023sampling, benton2024nearly}.

\subsection{Closing remark}

The Burgers equation was introduced by \citet{burgers1948mathematical} in 1948 as a toy model for turbulence.
Diffusion generative models were introduced by \citet{sohl2015deep} in 2015 as a new approach to density estimation.
The connection between the two is a direct consequence of the Cole--Hopf transform~\citep{hopf1950partial, cole1951quasi} applied to the heat equation.
Making this link explicit clarifies the role of interfacial structure, error amplification, and boundary dynamics in diffusion models.

\bibliographystyle{plainnat}
\bibliography{References3}

@article{achilli2025memorization,
 archivePrefix = {arXiv},
 author = {Achilli, Beatrice and Ambrogioni, Luca and Lucibello, Carlo and M{\'e}zard, Marc and Ventura, Emanuela},
 doi = {10.48550/arXiv.2502.09578},
 eprint = {2502.09578},
 journal = {{Journal of Statistical Mechanics: Theory and Experiment}},
 pages = {073401},
 title = {Memorization and Generalization in Generative Diffusion under the Manifold Hypothesis},
 url = {https://arxiv.org/abs/2502.09578},
 volume = {2025},
 year = {2025}
}

@article{albergo2023stochastic,
 archivePrefix = {arXiv},
 author = {Albergo, Michael S and Boffi, Nicholas M and Vanden-Eijnden, Eric},
 doi = {10.48550/arXiv.2303.08797},
 eprint = {2303.08797},
 journal = {{arXiv preprint}},
 title = {Stochastic interpolants: A unifying framework for flows and diffusions},
 url = {https://arxiv.org/abs/2303.08797},
 year = {2023}
}

@article{ambrogioni2025information,
 author = {Ambrogioni, Luca},
 doi = {10.3390/e28020195},
 journal = {{Entropy}},
 title = {The Information Dynamics of Generative Diffusion},
 url = {https://doi.org/10.3390/e28020195},
 volume = {27},
 year = {2025}
}

@article{ambrogioni2025thermodynamics,
 author = {Ambrogioni, Luca},
 doi = {10.3390/e27030291},
 journal = {{Entropy}},
 number = {3},
 pages = {291},
 title = {The Statistical Thermodynamics of Generative Diffusion Models: Phase Transitions, Symmetry Breaking and Critical Instability},
 url = {https://doi.org/10.3390/e27030291},
 volume = {27},
 year = {2025}
}

@book{ambrosio2005gradient,
 author = {Ambrosio, Luigi and Gigli, Nicola and Savar{\'e}, Giuseppe},
 publisher = {Birkh{\"a}user},
 title = {Gradient Flows in Metric Spaces and in the Space of Probability Measures},
 year = {2005}
}

@article{anderson1982reverse,
 author = {Anderson, Brian DO},
 journal = {Stochastic Processes and their Applications},
 number = {3},
 pages = {313--326},
 publisher = {Elsevier},
 title = {Reverse-time diffusion equation models},
 volume = {12},
 year = {1982}
}

@article{benton2024nearly,
 archivePrefix = {arXiv},
 author = {Benton, Joe and De Bortoli, Valentin and Doucet, Arnaud and Durmus, Alain},
 doi = {10.48550/arXiv.2308.03686},
 eprint = {2308.03686},
 journal = {{International Conference on Learning Representations (ICLR)}},
 title = {Nearly $d$-Linear Convergence Bounds for Diffusion Models via Stochastic Localization},
 url = {https://arxiv.org/abs/2308.03686},
 year = {2024}
}

@article{bhatia2013helmholtz,
 author = {Bhatia, Harsh and Norgard, Gregory and Pascucci, Valerio and Bremer, Peer-Timo},
 journal = {IEEE Transactions on Visualization and Computer Graphics},
 number = {8},
 pages = {1386--1404},
 title = {The {{H}}elmholtz-{{H}}odge Decomposition -- A Survey},
 volume = {19},
 year = {2013}
}

@article{biroli2023generative,
 author = {Biroli, Giulio and M{\'e}zard, Marc},
 doi = {10.1088/1742-5468/acf8ba},
 journal = {{Journal of Statistical Mechanics: Theory and Experiment}},
 pages = {093402},
 title = {Generative diffusion in very large dimensions},
 url = {https://doi.org/10.1088/1742-5468/acf8ba},
 volume = {2023},
 year = {2023}
}

@article{biroli2024dynamical,
 author = {Biroli, Giulio and Bonnaire, Tony and de Bortoli, Valentin and M{\'e}zard, Marc},
 doi = {10.1038/s41467-024-54281-3},
 journal = {{Nature Communications}},
 pages = {9957},
 title = {Dynamical Regimes of Diffusion Models},
 url = {https://doi.org/10.1038/s41467-024-54281-3},
 volume = {15},
 year = {2024}
}

@article{bonnaire2025memorization,
 archivePrefix = {arXiv},
 author = {Bonnaire, Tony and Urfin, Rapha{\"e}l and Biroli, Giulio and M{\'e}zard, Marc},
 doi = {10.48550/arXiv.2505.17638},
 eprint = {2505.17638},
 journal = {{arXiv preprint}},
 title = {Why Diffusion Models Don't Memorize: The Role of Implicit Dynamical Regularization in Training},
 url = {https://arxiv.org/abs/2505.17638},
 year = {2025}
}

@article{burgers1948mathematical,
 author = {Burgers, Johannes Martinus},
 doi = {10.1016/S0065-2156(08)70100-8},
 journal = {{Advances in Applied Mechanics}},
 pages = {171--199},
 publisher = {Elsevier},
 title = {A mathematical model illustrating the theory of turbulence},
 url = {https://doi.org/10.1016/S0065-2156(08)70100-8},
 volume = {1},
 year = {1948}
}

@article{chen2023sampling,
 archivePrefix = {arXiv},
 author = {Chen, Sitan and Chewi, Sinho and Li, Jerry and Li, Yuanzhi and Salim, Adil and Zhang, Anru R},
 doi = {10.48550/arXiv.2209.11215},
 eprint = {2209.11215},
 journal = {{International Conference on Learning Representations (ICLR)}},
 title = {Sampling is as easy as learning the score: theory for diffusion models with minimal data assumptions},
 url = {https://arxiv.org/abs/2209.11215},
 year = {2023}
}

@article{cole1951quasi,
 author = {Cole, Julian D},
 doi = {10.1090/qam/42889},
 journal = {{Quarterly of Applied Mathematics}},
 number = {3},
 pages = {225--236},
 title = {On a quasi-linear parabolic equation occurring in aerodynamics},
 url = {https://doi.org/10.1090/qam/42889},
 volume = {9},
 year = {1951}
}

@article{debortoli2022convergence,
 archivePrefix = {arXiv},
 author = {De Bortoli, Valentin},
 doi = {10.48550/arXiv.2208.05314},
 eprint = {2208.05314},
 journal = {{Transactions on Machine Learning Research}},
 title = {Convergence of denoising diffusion models under the manifold hypothesis},
 url = {https://arxiv.org/abs/2208.05314},
 year = {2022}
}

@article{dhariwal2021diffusion,
 archivePrefix = {arXiv},
 author = {Dhariwal, Prafulla and Nichol, Alexander},
 doi = {10.48550/arXiv.2105.05233},
 eprint = {2105.05233},
 journal = {{Advances in Neural Information Processing Systems (NeurIPS)}},
 pages = {8780--8794},
 title = {Diffusion models beat {{GAN}}s on image synthesis},
 url = {https://arxiv.org/abs/2105.05233},
 volume = {34},
 year = {2021}
}

@inproceedings{elalaoui2022sampling,
 author = {El Alaoui, Ahmed and Montanari, Andrea and Sellke, Mark},
 booktitle = {{Proceedings of IEEE FOCS}},
 doi = {10.1109/FOCS54488.2022.00039},
 pages = {323--334},
 title = {Sampling from the {{S}}herrington-{{K}}irkpatrick {{G}}ibbs measure via algorithmic stochastic localization},
 url = {https://doi.org/10.1109/FOCS54488.2022.00039},
 year = {2022}
}

@book{evans2010partial,
 author = {Evans, Lawrence C},
 doi = {10.1090/gsm/191},
 edition = {2nd},
 publisher = {American Mathematical Society},
 title = {Partial Differential Equations},
 url = {https://doi.org/10.1090/gsm/191},
 year = {2010}
}

@article{girsanov1960transforming,
 author = {Girsanov, Igor V},
 journal = {Theory of Probability and its Applications},
 number = {3},
 pages = {285--301},
 title = {On transforming a certain class of stochastic processes by absolutely continuous substitution of measures},
 volume = {5},
 year = {1960}
}

@article{gronwall1919note,
 author = {Gr{\"o}nwall, Thomas Hakon},
 journal = {Annals of Mathematics},
 number = {4},
 pages = {292--296},
 title = {Note on the derivatives with respect to a parameter of the solutions of a system of differential equations},
 volume = {20},
 year = {1919}
}

@article{ho2020denoising,
 archivePrefix = {arXiv},
 author = {Ho, Jonathan and Jain, Ajay and Abbeel, Pieter},
 doi = {10.48550/arXiv.2006.11239},
 eprint = {2006.11239},
 journal = {{Advances in Neural Information Processing Systems (NeurIPS)}},
 pages = {6840--6851},
 title = {Denoising Diffusion Probabilistic Models},
 url = {https://arxiv.org/abs/2006.11239},
 volume = {33},
 year = {2020}
}

@article{ho2022video,
 archivePrefix = {arXiv},
 author = {Ho, Jonathan and Salimans, Tim and Gritsenko, Alexey and Chan, William and Norouzi, Mohammad and Fleet, David J},
 doi = {10.48550/arXiv.2204.03458},
 eprint = {2204.03458},
 journal = {{Advances in Neural Information Processing Systems (NeurIPS)}},
 title = {Video Diffusion Models},
 url = {https://arxiv.org/abs/2204.03458},
 volume = {35},
 year = {2022}
}

@article{hopf1950partial,
 author = {Hopf, Eberhard},
 doi = {10.1002/cpa.3160030302},
 journal = {{Communications on Pure and Applied Mathematics}},
 number = {3},
 pages = {201--230},
 title = {The partial differential equation $u_t + u u_x = {\mu} u_{{xx}}$},
 url = {https://doi.org/10.1002/cpa.3160030302},
 volume = {3},
 year = {1950}
}

@article{hugoniot1889propagation,
 author = {Hugoniot, Pierre-Henri},
 doi = {10.24033/bsmf.350},
 journal = {{Journal de l''Ecole Polytechnique}},
 pages = {1--125},
 title = {Sur la propagation du mouvement dans les corps et sp{{\'e}}cialement dans les gaz parfaits},
 url = {https://doi.org/10.24033/bsmf.350},
 volume = {58},
 year = {1889}
}

@article{hyvarinen2005estimation,
 author = {Hyv{\"a}rinen, Aapo},
 journal = {Journal of Machine Learning Research},
 pages = {695--709},
 title = {Estimation of non-normalized statistical models by score matching},
 volume = {6},
 year = {2005}
}

@article{jordan1998variational,
 author = {Jordan, Richard and Kinderlehrer, David and Otto, Felix},
 journal = {SIAM Journal on Mathematical Analysis},
 number = {1},
 pages = {1--17},
 title = {The variational formulation of the {{F}}okker-{{P}}lanck equation},
 volume = {29},
 year = {1998}
}

@article{kardar1986dynamic,
 author = {Kardar, Mehran and Parisi, Giorgio and Zhang, Yi-Cheng},
 journal = {Physical Review Letters},
 number = {9},
 pages = {889},
 title = {Dynamic scaling of growing interfaces},
 volume = {56},
 year = {1986}
}

@article{karras2022elucidating,
 archivePrefix = {arXiv},
 author = {Karras, Tero and Aittala, Miika and Aila, Timo and Laine, Samuli},
 doi = {10.48550/arXiv.2206.00364},
 eprint = {2206.00364},
 journal = {{Advances in Neural Information Processing Systems (NeurIPS)}},
 title = {Elucidating the Design Space of Diffusion-Based Generative Models},
 url = {https://arxiv.org/abs/2206.00364},
 volume = {35},
 year = {2022}
}

@article{kingma2021variational,
 archivePrefix = {arXiv},
 author = {Kingma, Diederik and Salimans, Tim and Poole, Ben and Ho, Jonathan},
 doi = {10.48550/arXiv.2107.00630},
 eprint = {2107.00630},
 journal = {{Advances in Neural Information Processing Systems (NeurIPS)}},
 title = {Variational Diffusion Models},
 url = {https://arxiv.org/abs/2107.00630},
 volume = {34},
 year = {2021}
}

@inproceedings{lai2023fp,
 archivePrefix = {arXiv},
 author = {Lai, Chieh-Hsin and Takida, Yuhta and Murata, Naoki and Uesaka, Toshimitsu and Mitsufuji, Yuki and Ermon, Stefano},
 booktitle = {{International Conference on Machine Learning (ICML)}},
 doi = {10.48550/arXiv.2210.04296},
 eprint = {2210.04296},
 title = {{{fp}}-diffusion: Improving Score-based Diffusion Models by Enforcing the Underlying Score {{F}}okker-{{P}}lanck Equation},
 url = {https://arxiv.org/abs/2210.04296},
 year = {2023}
}

@article{lax1957hyperbolic,
 author = {Lax, Peter D},
 doi = {10.1002/cpa.3160100406},
 journal = {{Communications on Pure and Applied Mathematics}},
 number = {4},
 pages = {537--566},
 title = {Hyperbolic systems of conservation laws {{II}}},
 url = {https://doi.org/10.1002/cpa.3160100406},
 volume = {10},
 year = {1957}
}

@article{lee2023convergence,
 archivePrefix = {arXiv},
 author = {Lee, Holden and Lu, Jianfeng and Tan, Yixin},
 doi = {10.48550/arXiv.2209.12381},
 eprint = {2209.12381},
 journal = {{International Conference on Algorithmic Learning Theory (ALT)}},
 title = {Convergence of score-based generative modeling for general data distributions},
 url = {https://arxiv.org/abs/2209.12381},
 year = {2023}
}

@article{li2024critical,
 archivePrefix = {arXiv},
 author = {Li, Mingyang and Chen, Sitan},
 doi = {10.48550/arXiv.2403.01633},
 eprint = {2403.01633},
 journal = {{International Conference on Machine Learning (ICML)}},
 title = {Critical windows: Non-asymptotic theory for feature emergence in diffusion models},
 url = {https://arxiv.org/abs/2403.01633},
 year = {2024}
}

@article{lipman2023flow,
 archivePrefix = {arXiv},
 author = {Lipman, Yaron and Chen, Ricky TQ and Ben-Hamu, Heli and Nickel, Maximilian and Le, Matt},
 doi = {10.48550/arXiv.2210.02747},
 eprint = {2210.02747},
 journal = {{International Conference on Learning Representations (ICLR)}},
 title = {Flow Matching for Generative Modeling},
 url = {https://arxiv.org/abs/2210.02747},
 year = {2023}
}

@article{liu2023flow,
 archivePrefix = {arXiv},
 author = {Liu, Xingchao and Gong, Chengyue and Liu, Qiang},
 doi = {10.48550/arXiv.2209.03003},
 eprint = {2209.03003},
 journal = {{International Conference on Learning Representations (ICLR)}},
 title = {Flow Straight and Fast: Learning to Generate and Transfer Data with Rectified Flow},
 url = {https://arxiv.org/abs/2209.03003},
 year = {2023}
}

@article{montanari2023sampling,
 archivePrefix = {arXiv},
 author = {Montanari, Andrea},
 doi = {10.48550/arXiv.2305.10690},
 eprint = {2305.10690},
 journal = {{arXiv preprint}},
 title = {Sampling, Diffusions, and Stochastic Localization},
 url = {https://arxiv.org/abs/2305.10690},
 year = {2023}
}

@article{rankine1870thermodynamic,
 author = {Rankine, William John Macquorn},
 doi = {10.1098/rstl.1870.0017},
 journal = {{Philosophical Transactions of the Royal Society of London}},
 pages = {277--288},
 title = {On the thermodynamic theory of waves of finite longitudinal disturbance},
 url = {https://doi.org/10.1098/rstl.1870.0017},
 volume = {160},
 year = {1870}
}

@article{raya2023spontaneous,
 archivePrefix = {arXiv},
 author = {Raya, Gabriel and Ambrogioni, Luca},
 doi = {10.48550/arXiv.2305.19693},
 eprint = {2305.19693},
 journal = {{Advances in Neural Information Processing Systems (NeurIPS)}},
 title = {Spontaneous symmetry breaking in generative diffusion models},
 url = {https://arxiv.org/abs/2305.19693},
 volume = {36},
 year = {2023}
}

@article{rombach2022high,
 archivePrefix = {arXiv},
 author = {Rombach, Robin and Blattmann, Andreas and Lorenz, Dominik and Esser, Patrick and Ommer, Bj{\"o}rn},
 doi = {10.1109/CVPR52688.2022.01042},
 eprint = {2112.10752},
 journal = {{IEEE Conference on Computer Vision and Pattern Recognition (CVPR)}},
 pages = {10684--10695},
 title = {High-resolution Image Synthesis with Latent Diffusion Models},
 url = {https://doi.org/10.1109/CVPR52688.2022.01042},
 year = {2022}
}

@article{sclocchi2024phase,
 archivePrefix = {arXiv},
 author = {Sclocchi, Antonio and Favero, Alessandro and Wyart, Matthieu},
 doi = {10.48550/arXiv.2402.16991},
 eprint = {2402.16991},
 journal = {{Proceedings of the National Academy of Sciences}},
 title = {A Phase Transition in Diffusion Models Reveals the Hierarchical Nature of Data},
 url = {https://arxiv.org/abs/2402.16991},
 year = {2024}
}

@article{sohl2015deep,
 author = {Sohl-Dickstein, Jascha and Weiss, Eric and Maheswaranathan, Niru and Ganguli, Surya},
 doi = {10.5555/3045118.3045358},
 journal = {{Journal of Machine Learning Research (Proceedings of ICML 2015)}},
 pages = {2256--2265},
 publisher = {JMLR.org},
 title = {Deep Unsupervised Learning using Nonequilibrium Thermodynamics},
 url = {https://dl.acm.org/doi/10.5555/3045118.3045358},
 year = {2015}
}

@article{song2019generative,
 archivePrefix = {arXiv},
 author = {Song, Yang and Ermon, Stefano},
 doi = {10.48550/arXiv.1907.05600},
 eprint = {1907.05600},
 journal = {{Advances in Neural Information Processing Systems (NeurIPS)}},
 pages = {11895--11907},
 title = {Generative Modeling by Estimating Gradients of the Data Distribution},
 url = {https://arxiv.org/abs/1907.05600},
 volume = {32},
 year = {2019}
}

@article{song2020improved,
 archivePrefix = {arXiv},
 author = {Song, Yang and Ermon, Stefano},
 doi = {10.48550/arXiv.2006.09011},
 eprint = {2006.09011},
 journal = {{Advances in Neural Information Processing Systems (NeurIPS)}},
 title = {Improved Techniques for Training Score-Based Generative Models},
 url = {https://arxiv.org/abs/2006.09011},
 volume = {33},
 year = {2020}
}

@article{song2021ddim,
 archivePrefix = {arXiv},
 author = {Song, Jiaming and Meng, Chenlin and Ermon, Stefano},
 doi = {10.48550/arXiv.2010.02502},
 eprint = {2010.02502},
 journal = {{International Conference on Learning Representations (ICLR)}},
 title = {Denoising Diffusion Implicit Models},
 url = {https://arxiv.org/abs/2010.02502},
 year = {2021}
}

@article{song2021score,
 archivePrefix = {arXiv},
 author = {Song, Yang and Sohl-Dickstein, Jascha and Kingma, Diederik P and Kumar, Abhishek and Ermon, Stefano and Poole, Ben},
 doi = {10.48550/arXiv.2011.13456},
 eprint = {2011.13456},
 journal = {{International Conference on Learning Representations (ICLR)}},
 title = {Score-based Generative Modeling through Stochastic Differential Equations},
 url = {https://arxiv.org/abs/2011.13456},
 year = {2021}
}

@article{tang2024score,
 archivePrefix = {arXiv},
 author = {Tang, Wenpin and Zhao, Hanyang},
 doi = {10.48550/arXiv.2402.07487},
 eprint = {2402.07487},
 journal = {{arXiv preprint}},
 title = {Score-based diffusion models via stochastic differential equations -- a technical tutorial},
 url = {https://arxiv.org/abs/2402.07487},
 year = {2024}
}

@book{tsybakov2009introduction,
 author = {Tsybakov, Alexandre B},
 publisher = {Springer},
 title = {Introduction to Nonparametric Estimation},
 year = {2009}
}

@article{vincent2011connection,
 author = {Vincent, Pascal},
 doi = {10.1162/NECO_a_00142},
 journal = {{Neural Computation}},
 pages = {1661--1674},
 title = {A connection between score matching and denoising autoencoders},
 url = {https://doi.org/10.1162/NECO_a_00142},
 volume = {23},
 year = {2011}
}

@article{vuong2025score,
 archivePrefix = {arXiv},
 author = {Vuong, An B and Lin, Yen Ting and others},
 doi = {10.48550/arXiv.2509.00336},
 eprint = {2509.00336},
 journal = {{Transactions on Machine Learning Research}},
 title = {Are We Really Learning the Score Function? {{R}}einterpreting Diffusion Models Through {{W}}asserstein Gradient Flow Matching},
 url = {https://arxiv.org/abs/2509.00336},
 year = {2025}
}

@book{whitham1974linear,
 author = {Whitham, Gerald Beresford},
 doi = {10.1002/9781118032954},
 publisher = {John Wiley \& Sons},
 title = {Linear and Nonlinear Waves},
 url = {https://doi.org/10.1002/9781118032954},
 year = {1974}
}

\end{document}